\documentclass[11pt,reqno]{amsart}

\usepackage[T1]{fontenc}
\usepackage[utf8]{inputenc}
\usepackage[english]{babel}

\usepackage{lmodern}

\usepackage[a4paper,margin=1.15in]{geometry}
\usepackage{microtype}
\usepackage{mathtools}
\usepackage{amssymb}
\usepackage{bm}
\usepackage{braket}
\usepackage{extpfeil}
\allowdisplaybreaks
\numberwithin{equation}{section}

\usepackage{array}
\usepackage{tabularx}

\usepackage[shortlabels]{enumitem}
\usepackage[title]{appendix}
\usepackage[all,poly]{xy}
\usepackage{tikz}
\usetikzlibrary{arrows.meta,positioning}

\usepackage{csquotes}

\usepackage{graphicx} 
\usetikzlibrary{arrows.meta,calc}

\definecolor{figInk}{HTML}{253449}
\definecolor{figBlue}{HTML}{3A7096}
\definecolor{figPale}{HTML}{EAF3F9}
\definecolor{figWarm}{HTML}{F4DFBD}
\definecolor{figRed}{HTML}{A24C40}
\definecolor{figGuide}{HTML}{ACB5BE}

\usepackage{xcolor}
\definecolor{linkblue}{RGB}{0,72,128}
\usepackage[
colorlinks=true,
linkcolor=linkblue,
citecolor=linkblue,
urlcolor=linkblue
]{hyperref}
\usepackage[nameinlink,noabbrev]{cleveref}

\theoremstyle{plain}
\newtheorem{thm}{Theorem}[section]
\newtheorem{theorem}[thm]{Theorem}
\newtheorem{lemma}[thm]{Lemma}

\newtheorem{corollary*}[thm]{Corollary*}
\newtheorem{proposition}[thm]{Proposition}
\newtheorem{proposition*}[thm]{Proposition*}

\theoremstyle{definition}

\newtheorem{defn-thm}[thm]{Definition-Theorem}

\theoremstyle{remark}
\newtheorem{remark}[thm]{Remark}

\newcommand{\Rmnum}[1]{\uppercase\expandafter{\romannumeral #1}}

\tikzset{
	solid black dot/.style={
		circle,
		fill=black,
		inner sep=1.5pt
	}
}

\def\d{\mathrm{d}}

\usepackage{color}

\definecolor{A}{rgb}{.75,1,.75}

\definecolor{yellow}{rgb}{1,1,0}
\definecolor{orange}{rgb}{1,.7,0}
\definecolor{red}{rgb}{1,0,0}
\definecolor{white}{rgb}{1,1,1}
\definecolor{green}{rgb}{0,1,0}

\begin{document}
	
	\title[Spectral curves for Frobenius manifolds]
        {Spectral Curves for Semisimple Conformal Frobenius Manifolds}
	\author{Dingkang Xiong}
	\address{Department of Mathematical Sciences, Tsinghua University, Beijing 100084, China}
	\email{xdk22@mails.tsinghua.edu.cn}
	\author{Jian Zhou}
	\address{Department of Mathematical Sciences, Tsinghua University, Beijing 100084, China}
	\email{jianzhou@mail.tsinghua.edu.cn}
	
	\begin{abstract}
        We construct connected spectral curves for semisimple conformal Frobenius manifolds with flat unit, using the periods of the second structure equation. 
        Eynard--Orantin topological recursion on these curves reproduces the stable ancestor correlators of Givental--Teleman reconstruction.
        The spectral curves are defined on simply connected domains in the complex plane. 
        We apply the construction to the quantum cohomology of $\mathbb P^n$ for $n\ge2$ at a point, obtaining explicit spectral curve data in terms of Hecke triangle groups and hypergeometric functions.
        Finally, for $\mathbb P^2$ and $\mathbb P^3$, we express the complete spectral curve data, including the bidifferentials, in classical modular and quasimodular forms.
	\end{abstract}
	\maketitle
	
	\setcounter{tocdepth}{1}
	\tableofcontents

\section{Introduction}
\label{sec:introduction}

Understanding how genus-zero geometry controls higher-genus
invariants is a central problem in Gromov--Witten theory.
Frobenius manifolds organize the genus-zero structures of the
theory into a family of Frobenius algebras compatible
with a flat metric, bringing together quantum cohomology, singularity theory and integrable systems
\cite{dubrovinGeometry2DTopological1996,
kontsevichGromovWittenClassesQuantum1994,dubrovinNormalFormsHierarchies2001}.
Semisimplicity of the Frobenius algebras makes a particularly strong reconstruction principle possible. 
Givental's quantization formalism reconstructs higher-genus
ancestor potentials from copies of the theory of a point, using the $R$-matrix at a chosen semisimple point of the Frobenius manifold
\cite{giventalSemisimpleFrobeniusStructures2001,
giventalGromovWittenInvariantsQuantization2001}.
Teleman's classification identifies this reconstruction with
semisimple cohomological field theories and, in the homogeneous
case with flat unit, determines the higher-genus theory from
genus-zero data \cite{telemanStructure2DSemisimple2012}.
Besides providing a structural description, this reduces the
calculation of higher-genus invariants to understanding the
genus-zero differential equations and their solutions.

Eynard--Orantin topological recursion approaches reconstruction
through the geometry of a curve \cite{eynardInvariantsAlgebraicCurves2007}. Starting from a Riemann surface
$\Sigma$, two functions $x,y$, and a symmetric bidifferential $B$,
it produces (multi)differentials $\omega_{g,n}$ by residues at the simple
critical points of $x$.
Originally developed in the study of matrix models \cite{chekhovHermiteanMatrixModel2006,chekhovFreeEnergyTopological2006}, this residue
formalism provides a common language for a wide range of
enumerative problems. Its relation to semisimple reconstruction
was established by Dunin-Barkowski, Orantin, Shadrin and Spitz
(DOSS), who identified the two procedures for compatible local
spectral curve data
\cite{dunin-barkowskiIdentificationGiventalFormula2014};
see also \cite{milanovEynardOrantinRecursion2014,
xiongEYNARDORANTINTOPOLOGICALRECURSION,chekhovTopologicalRecursionHard2019}.
The DOSS correspondence explains how the $R$-matrix and the unit
are encoded in local expansions of $B$ and $y$.
It therefore suggests a geometric form of reconstruction:
realize the prescribed local data on a connected surface, so that one residue recursion recovers the entire
stable ancestor theory. The existence of such a surface 
requires additional work beyond the local correspondence.

The projective line $\mathbb P^1$ illustrates how these different structures meet in a concrete enumerative theory.
The Gromov--Witten theory of $\mathbb P^1$ has close connections
with Toda integrable hierarchies. Its non-equivariant descendant
theory is governed by the extended Toda hierarchy of
Carlet--Dubrovin--Zhang \cite{carletExtendedTodaHierarchy2004}; see also \cite{getzlerTODACONJECTURE2001,zhangCP1TopologicalSigma2002,dubrovinVirasoroSymmetriesExtended2004}.
In the equivariant setting, Okounkov and Pandharipande identified the descendant partition function with a tau-function of the $2$-Toda hierarchy \cite{okounkovEquivariantGromovWitten2006}.
Okounkov and Pandharipande also established a correspondence between stationary Gromov--Witten theory of curves and Hurwitz theory with completed cycles \cite{okounkovGromovWittenTheory2006,okounkovGromovWittenTheoryHurwitz2009}.
Norbury and Scott proposed the spectral curve
\begin{equation}
    \Sigma=\mathbb C_z,\quad
    x(z)=z+\frac1z,
    \qquad y(z)=\log z,
    \qquad B(z_1,z_2)=\frac{dz_1\,dz_2}{(z_1-z_2)^2}
    \label{eq:intro-P1-curve}
\end{equation}
for the stationary theory of $\mathbb P^1$
\cite{norburyGromovWittenInvariants2014}.
The all-genus identification was proved by DOSS
\cite{dunin-barkowskiIdentificationGiventalFormula2014};
Borot and Norbury subsequently recovered non-stationary invariants
from the same recursion differentials \cite{borotLoopEquationsGromovWitten2019}.
Thus a simple one-dimensional spectral curve encodes Gromov--Witten invariants in all genera. 
Extending this picture to higher-dimensional targets raises the problem of finding spectral curve data reproducing the relevant CohFT structure.

In this article, we construct spectral curves $(\Sigma,x,y,B)$ for semisimple conformal Frobenius manifolds with flat unit. Each curve is defined on a simply connected domain in the complex plane.

Our construction is relative to Dubrovin's description of
one-dimensional Landau--Ginzburg superpotentials through periods
of the second structure equation \cite{dubrovinGeometry2DTopological1996,
dubrovinPainleveTranscendentsTwoDimensional1999,dubrovinAlmostDualityFrobenius2004}.
Dunin-Barkowski, Norbury, Orantin, Popolitov and Shadrin (DNOPS)
established a correspondence between these superpotentials and
global spectral curves under additional geometric and
compatibility assumptions
\cite{dunin-barkowskiDubrovinsSuperpotentialGlobal2019}.
Their approach has two difficulties in applications.
First, their superpotential surface is required to be compact
so that bidifferential $B$ is   the Bergman kernel
normalized to have vanishing periods along a chosen basis of
$\mathcal A$-cycles. 
Sometimes it is not easy to see the superpotential curve can be compactified.
Secondly, they impose some conditions for the compatibility of the Bergman kernel
with the superpotential.
These conditions are not automatically satisfied for a
general semisimple conformal Frobenius manifold.
In fact, the results of this article suggest that, in some cases, the superpotential curve may not admit a compactification.

To bypass these difficulties, we are inspired by Milanov's
period description of topological recursion for the total ancestor potential
\cite{milanovEynardOrantinRecursion2014}
and its global extension for simple singularities
\cite{milanovEynardOrantinRecursion2015}. 
We work on connected open Riemann surfaces, where $B$ is constructed to realize the prescribed $R$-matrix data and to have only its normalized diagonal double
pole, without imposing an $\mathcal A$-period normalization as the compact case.
This leads to a general connected spectral curve
realization for semisimple conformal Frobenius manifolds with
flat unit. When a simultaneously normalized common period of the second structure equation is given, the construction admits an explicit expression in
terms of that period.

For projective spaces $\mathbb{P}^n$ with $n\geq 2$, this yields explicit spectral curve data, and for $\mathbb P^2$ and $\mathbb P^3$ it
gives classical modular expressions for the spectral functions $(x,y)$, and the bidifferential $B$.

We note that, for the projective plane, Milanov has obtained a modular description of the quantum cohomology period map \cite{milanovPeriodMapQuantum2019}.
Related one-dimensional Landau--Ginzburg descriptions of its small and big quantum cohomology were studied by Almeida
\cite{almeida1DLandauGinzburgSuperpotential2025}.

\medskip
\noindent\textit{Main results of this article.}
The first result of this article is a spectral curve realization for a general semisimple conformal Frobenius manifold.
Let $N$ be its dimension, $v$ a semisimple point, $u_i$ the
canonical values, and $f_i$ the normalized canonical vectors.
The homogeneous $R$-matrix determines the ancestor correlators
denoted by $\langle\cdots\rangle^R_{g,r;v}$.
The following summarizes Theorem~\ref{thm:connected-spectral-realization}
and its extension in Appendix~\ref{sec:non-tame}.

\begin{theorem} \label{thm:intro-connected-realization}
At every semisimple point of a conformal Frobenius manifold with flat unit, there exists a spectral curve
$(\Sigma,x,y,B)$ with $\Sigma$ simply connected domain.
The function $x$ has exactly $N$ distinct simple critical points $a_i$, with $x(a_i)=u_i$ and $dy(a_i)\ne0$.
The bidifferential $B$ has only its normalized diagonal double pole.
For $r\ge1$ and $2g-2+r>0$, the Eynard--Orantin topological recursion gives
\begin{equation}
\begin{aligned}
    \omega_{g,r}(p_1,\ldots,p_r)
    ={}&\sum_{i_1,\ldots,i_r=1}^{N}
    \sum_{k_1,\ldots,k_r\ge0}
    \left\langle
       \prod_{\nu=1}^{r}\bar\tau_{k_\nu}(f_{i_\nu})
    \right\rangle^R_{g,r;v}
    \prod_{\nu=1}^{r}W_{k_\nu}^{i_\nu}(p_\nu),
\end{aligned}
\label{eq:intro-ancestor-realization}
\end{equation}
where the external one-forms $W_k^i$ are determined by $B$ and $R$.
\end{theorem}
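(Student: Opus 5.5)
The plan is to reduce the statement to the DOSS identification \cite{dunin-barkowskiIdentificationGiventalFormula2014}. That theorem says the following. Suppose the local germs of $(x,y,B)$ at simple critical points $a_i$ of $x$ satisfy two conditions:
\begin{itemize}
\item the Laplace transforms of $B$ near $a_i,a_j$ reproduce the $R$-matrix edge weights;
\item the local expansion of $dy$ reproduces the dilaton shift determined by $R$ and the flat unit.
\end{itemize}
Then the Eynard--Orantin differentials $\omega_{g,r}$ coincide with the Givental graph sum, written in the basis of one-forms $W^i_k$. The whole task is therefore to build a connected, simply connected $\Sigma$ carrying global $x,y,B$ whose germs match this prescribed data. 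The realization \eqref{eq:intro-ancestor-realization} then follows by citing DOSS, with $W^i_k$ defined as the iterated $d\,(dx)^{-1}$-derivatives of $\operatorname{Res}_{a_i}$-type Laplace components of $B$, corrected by $R$.

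\textbf{Step 1: the curve and $x$.} Following Dubrovin and Milanov, consider the second structure connection at the semisimple point $v$. Its periods are multivalued in $\lambda\in\mathbb C\setminus\{u_1,\dots,u_N\}$, with square-root singularities at the $u_i$. I take $\Sigma$ to be a simply connected domain obtained as follows.
\begin{itemize}
\item Choose a simultaneously normalized common period $\phi$, i.e. one with a nonvanishing square-root leading coefficient at each $u_i$.
\item Take $\Sigma$ inside the universal cover, or more concretely a domain in $\mathbb C$ uniformizing a branch of $\lambda$ around chosen paths to the $u_i$, on which there are points $a_i$ with local coordinate $\sqrt{\lambda-u_i}$.
\item Set $x=\lambda$.
\end{itemize}
Then $x$ has exactly $N$ simple critical points with $x(a_i)=u_i$. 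In the non-tame case (Appendix~\ref{sec:non-tame}) the same is done after a generic shift of the period weight.

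\textbf{Step 2: the functions $y$ and $B$.} Define $y$ by $dy=$ (flat-unit component of the period), up to normalization, so that its expansion at $a_i$ starts with the nonzero term $\sqrt{\Delta_i}^{-1}$ times the coordinate. This gives $dy(a_i)\neq0$ and the correct dilaton shift, using homogeneity $\mathcal L_E$ and flatness of the unit.

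For $B$, I construct $B(p,q)=\frac{dz_p\,dz_q}{(z_p-z_q)^2}+H(p,q)$ with $H$ holomorphic and symmetric on $\Sigma\times\Sigma$. The correction $H$ is chosen so that the bi-Laplace transform
\[
\frac{\sqrt{\zeta_1\zeta_2}}{\zeta_1+\zeta_2}\bigl(\delta_{ij}-\textstyle\sum_k R_k{}^{\cdot}\dots\bigr)
\]
matches the Givental edge term $\frac{\delta_{ij}-\sum R_{ia}(-\zeta_1^{-1})R_{ja}(-\zeta_2^{-1})}{\zeta_1^{-1}+\zeta_2^{-1}}$.

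Two facts make this possible.
\begin{itemize}
\item The prescribed data is a set of formal jets of a holomorphic symmetric function at the finitely many points $(a_i,a_j)$.
\item Since $\Sigma$ is a domain in $\mathbb C$ (Stein), any finite collection of compatible formal jets at finitely many points, even infinite-order ones, is realized by a holomorphic function; this follows from Borel-type interpolation on Stein manifolds.
\end{itemize}
Symmetrizing $H$ preserves the jets because the prescribed data is symmetric, by the symplectic condition $R(-z)^TR(z)=1$. In the presence of the period $\phi$, I would instead take the explicit choice $B$ equal to the $\lambda$-derivative pairing of periods (Milanov's formula), which has only the diagonal double pole.

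\textbf{Main obstacle.} The hard step is the compatibility in Step 2. DOSS requires the expansions of $B$ and $y$ to be governed by the \emph{same} $R$, including the consistency between $y$ and the $R$-action on the unit. Arbitrary jet interpolation for $B$ alone does not guarantee this unless the jets of $dy$ are also tied to $R$. I would handle it by deriving both from the periods: the asymptotic expansion of the periods at $u_i$ in the variable $\sqrt{\lambda-u_i}$ is exactly the Laplace transform of $R$ (Dubrovin's identification of the second structure connection with the Laplace dual of the first). This makes the required identities hold formally, and only then invoke interpolation or the explicit period formula globally.
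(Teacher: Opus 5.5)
Your argument does not reach the stated generality: it reaches at most the refinement Theorem~\ref{thm:common-period-spectral-realization}, and the fallback you offer for the general case relies on a false interpolation claim.

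\textbf{The common period is not available in general.} Step~1 only asks that the singular coefficients $c_i=-2\langle\phi,J_i\rangle_x$ be nonzero. Then the odd part of $y$ at $a_i$ is $c_iY_i$, so $dy(a_i)=c_ih_i\,ds_i$. Moreover, the kernel built from $\phi$ as in \eqref{eq:superpotential-phase-kernel} has its constant forced by the diagonal normalization. With that constant one finds $F(s,-s)=1-c_i^2$, so the kernel has a pole along the anti-diagonal $t=-s$ unless $c_i^2=1$. A single global rescaling cannot make every $c_i^2=1$, so both \eqref{eq:DOSS-y-R-compatibility} and the pole condition fail. Exact normalization $\langle\phi,J_i\rangle_x=-\tfrac12$ for all $i$ is a linear system whose matrix is the Gram matrix $G$. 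This matrix can be singular, and then the system need not be solvable. That is why the paper uses a common period only as an extra hypothesis. Even with such a period, $\Sigma$ must be cut down so that $x(p)=x(q)$ with $p\neq q$ forces $q=\sigma_i(p)$; otherwise the kernel acquires further poles, so a domain taken inside the universal cover needs this restriction.

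\textbf{The interpolation fallback fails.} Your fallback for $B$ is false as stated. Borel-type realization of infinite-order jets holds for $C^\infty$ functions, not holomorphic ones. On a connected domain the full jet at one point determines the function, so full jets at distinct points $(a_i,a_j)$ cannot be prescribed independently. You also give no fallback at all for $y$.

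\textbf{The missing idea.} DOSS needs only anti-invariant data: the odd part of $y$ in $s_i$ and the even--even coefficients $B^{ij}_{2k,2l}$, with $W=R\Xi$. These come from the individually normalized local periods $J_i$ and kernels $B^{\mathrm{local}}_{ij}$, which exist at every point (Propositions~\ref{prop:normalized-periods} and~\ref{prop:canonical-local-kernel}). Because the invariant parts are left free, such partial data at distinct points can be realized globally. The paper does this by solving additive jump problems on slits $C_i$ issuing from the $u_i$:
\begin{itemize}
\item the Plemelj jump $y_+-y_-=2Y_i$ lifts to $y(s_i)-y(-s_i)=2Y_i(s_i)$;
\item $B=B_0+\mathcal H$, with $\mathcal H$ a double Cauchy integral of the even--even discrepancy;
\item $\Sigma$ is then shrunk to exclude the logarithmic singularities at the outer endpoints.
\end{itemize}
Such a $B$ need not satisfy \eqref{eq:B-differential-compatibility}. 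So $W$ must be defined as $R\Xi$, not by iterating $-d(\cdot/dx)$ on $W_0$.

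\textbf{The non-tame case.} Shifting the period weight does not help, since the singularities of the second structure equation remain at the eigenvalues of $A$. What you need is a surface on which $x$ has $N$ distinct simple critical points with coinciding critical values. The paper takes $x=P(w)$ with $P$ a Hermite interpolant. This is combined with the holomorphic extension of $R$ from the tame locus, and with $\Theta^i{}_j=0$ for $u_i=u_j$, which keeps the normalized local periods $J_i$ well defined.
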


Here $\bar\tau_k$ denotes an ancestor insertion. The forms in
\eqref{eq:intro-ancestor-realization} are defined by $W=R\Xi$,
where $\Xi$ is obtained by local coefficient extraction from
the bidifferential $B$, as in
\eqref{eq:DOSS-coefficient-forms}--\eqref{eq:DOSS-ancestor-forms}.
We construct the curve from individually normalized
local periods of the second structure equation and realize the required globalization by Cauchy
integrals on a connected open surface.
There is some freedom in the construction of the spectral curve.

A more explicit construction is available when one period has
the required singular part at every canonical value, as in the
superpotential constructions of Dubrovin and DNOPS
\cite{dubrovinPainleveTranscendentsTwoDimensional1999,
dunin-barkowskiDubrovinsSuperpotentialGlobal2019}.
Write $\eta$ for the Frobenius flat metric, $e$ for the unit, $E$ for
the Euler field, and $d$ for the conformal charge. In our conventions,
$A=E\circ-$ and
$\theta=\nabla E-(1-d/2)\operatorname{Id}$.
The second structure equation and the simultaneous normalized period solution $J$ are
\begin{equation}
\begin{gathered}
    (x\operatorname{Id}-A)\partial_xJ
    =\left(\theta-\frac12\operatorname{Id}\right)J,\\
    J(s_i)=\frac{f_i}{s_i}+O(1).
\end{gathered}
\label{eq:intro-common-period}
\end{equation}
Here $v$ is a fixed semisimple point, $J=J(x)$ is a $T_vM$-valued function of $x$, and $\{f_i\}$ is the normalized canonical frame. Near the $i$-th critical point, we write $x=u_i+s_i^2/2$.
The flat-pencil pairing
$\langle J,K\rangle_x=\eta(J,(A-x\operatorname{Id})K)$
is constant on solutions of the second structure equation.
For $x_\nu=x(q_\nu)$, the corresponding period bidifferential is defined by
\begin{equation}
\begin{aligned}
    B^J(q_1,q_2)
    ={}&\frac{dx_1\,dx_2}{(x_1-x_2)^2}
    \Bigg\{1+\langle J,J\rangle_x\\
    &\quad+\eta\!\left(J(q_1),
    \left[(x_1-x_2)\theta
       +\frac{x_1+x_2}{2}\operatorname{Id}-A\right]J(q_2)\right)
    \Bigg\}.
\end{aligned}
\label{eq:intro-period-kernel}
\end{equation}
Theorem~\ref{thm:common-period-spectral-realization} gives the
following refinement.

\begin{theorem}\label{thm:intro-common-period}
At a tame semisimple point, suppose that $J$ satisfies
\eqref{eq:intro-common-period}.
There is a connected, simply connected domain $\Sigma$ on which
$dy^J=\eta(e,J)\,dx$ has a holomorphic primitive and
$(\Sigma,x,y^J,B^J)$ is a spectral curve.
Its Eynard--Orantin topological recursion differentials satisfy
\eqref{eq:intro-ancestor-realization}, with
\begin{equation}
    W_k^i(q)=(-1)^{k+1}d_q\,
       \eta\bigl(f_i,\partial_x^kJ(q)\bigr),
    \qquad k\ge0.
    \label{eq:intro-period-ancestor-forms}
\end{equation}
\end{theorem}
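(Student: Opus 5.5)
The strategy is to reduce Theorem~\ref{thm:intro-common-period} to the DOSS identification \cite{dunin-barkowskiIdentificationGiventalFormula2014}. That result says Eynard--Orantin recursion on a spectral curve reproduces the $R$-matrix ancestor correlators \eqref{eq:intro-ancestor-realization} provided three local conditions hold. First, $x$ has simple critical points $a_i$ with $x(a_i)=u_i$. Second, the local expansion of $B$ near $(a_i,a_j)$ encodes the homogeneous $R$-matrix through the Laplace-transform relation. Third, the expansion of $dy$ at $a_i$ encodes $R^{-1}$ applied to the unit (the dilaton shift). Given these, the output forms are $W=R\Xi$. So I must show three things for $(\Sigma,x,y^J,B^J)$: it is a genuine spectral curve on a simply connected domain, $B^J$ and $y^J$ produce exactly these local data, and the forms $W_k^i$ take the shape \eqref{eq:intro-period-ancestor-forms}.

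\textbf{Step 1: the domain.} Solutions of the second structure equation are multivalued in $x$, branching only at the canonical values $u_i$. The condition $J(s_i)=f_i/s_i+O(1)$ says $J$ has a square-root singularity at $u_i$. Tameness means the $u_i$ are distinct and the residue data are nonresonant. I take $\Sigma$ to be a simply connected domain that contains small discs in the local coordinates $s_i$, glued along paths in the $x$-plane, for instance the universal cover of a suitable slit region, or a union of neighbourhoods of a star of paths from a base point. On this $\Sigma$, $J$ is single valued, and $x$ is holomorphic with simple critical points exactly at $s_i=0$. The form $dy^J=\eta(e,J)\,dx$ has a primitive because $\Sigma$ is simply connected and $dy^J$ is holomorphic: the pole of $J$ is cancelled by $dx=s_i\,ds_i$. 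Its leading term at $a_i$ is $\eta(e,f_i)\,ds_i$, which is nonzero since $\eta(e,f_i)$ is $\pm$ the inverse square root of the canonical metric coefficient.

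\textbf{Step 2: properties of $B^J$.} I check that $B^J$ is symmetric and holomorphic off the diagonal, and that its only singularity is the normalized double pole $dx_1\,dx_2/(x_1-x_2)^2$ in local coordinates $s$. Symmetry follows from $\theta^T=-\theta$ with respect to $\eta$ and $A^T=A$. The identity $\eta(J,\theta J)=0$ makes the bracket symmetric under $1\leftrightarrow 2$.

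The delicate part is the diagonal. On the diagonal, the braces equal $1+\langle J,J\rangle_x+\eta(J,(x\operatorname{Id}-A)J)=1$, because $\langle J,J\rangle_x=\eta(J,(A-x)J)$. Expanding to first order, the constancy of the pairing together with the second structure equation should kill the simple pole term. This yields the normalized $1/(x_1-x_2)^2$ behaviour in terms of the local uniformizer, to be checked modulo a $\frac{1}{16}$-type Schwarzian-correction term that the constant $1+\langle J,J\rangle$ is designed to produce.

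Near $a_i\times a_j$, the $1/s$ singularities of $J$ must combine with $dx=s\,ds$ to give holomorphic behaviour. Any other sheet $x_1=x_2$ with $q_1\neq q_2$ must also yield no pole. I expect the latter to be the main obstacle. I would handle it by showing that the numerator vanishes to second order wherever $x_1=x_2$ and $q_1\ne q_2$, using monodromy: the two points differ by a reflection $J\mapsto J-2\langle J,\phi\rangle\phi$-type transformation. Alternatively, I would restrict $\Sigma$ so that $x$ is injective away from the critical neighbourhoods, which removes such points altogether.

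\textbf{Step 3: matching the local data.} I expand $J$ near $u_i$ as $J=\sum_k$ of the terms generated by $\partial_x^{-k}$. The Laplace transform $\int e^{-x/z}J\,dx$ yields the asymptotic series $\Psi R(z)e^{u/z}$, which is the standard fact relating periods of the second structure equation to the $R$-matrix. Substituting into $B^J$ reproduces the DOSS double expansion $\frac{R^{-1}(z_1)R^{-1}(z_2)^{T}-\mathrm{Id}}{z_1+z_2}$-form. Likewise, $\eta(e,J)$ yields the unit condition because of the flat unit. Finally, the coefficient forms $\Xi$ extracted from $B^J$ satisfy $W=R\Xi=(-1)^{k+1}d\,\eta(f_i,\partial_x^kJ)$ by integrating $B^J$ against the local basis and using the same Laplace identity. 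The conclusion then follows from the DOSS theorem, which I take as an established earlier result.
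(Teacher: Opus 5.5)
Your architecture is the paper's: a tree-type simply connected surface, the pole structure of $B^J$, the two DOSS input conditions, then $W=R\Xi$. However, the decisive step is asserted rather than proved.

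\textbf{Step 3.} You claim that substituting the Laplace asymptotics of $\int e^{-x/z}J\,dx$ into $B^J$ yields the DOSS double expansion. As written this is not an argument. $B^J$ is quadratic in $J$ with the two-point weight $M(x_1,x_2)/(x_1-x_2)^2$, where $M=(x_1-x_2)\theta+\frac{x_1+x_2}{2}\operatorname{Id}-A$, so its double Laplace transform is not a product of one-variable transforms. The missing idea is the translation identity
\[
    \bigl(\mathcal L^{(1)}_{\partial_{x_1}}+\mathcal L^{(2)}_{\partial_{x_2}}\bigr)B^J(q_1,q_2)
    =-\eta\bigl(dJ(q_1),dJ(q_2)\bigr).
\]
It follows from the second structure equation in each variable, since $(\partial_{x_1}+\partial_{x_2})M=\operatorname{Id}$ and $\partial_{x_1}+\partial_{x_2}$ kills $x_1-x_2$ and the constant $1+\langle J,J\rangle_x$. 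Its right-hand side factorizes, and the argument then runs as follows.
\begin{itemize}
\item Near $a_i$ write $J=J_i+E_i(s^2)$. Only $J_i$ affects even--even coefficients, and $J_i=\sum_m\frac{(-1)^m}{(2m-1)!!}f_a(R_m)^a{}_i\,s^{2m-1}$ by $[U,R_{k+1}]=(\Theta-k\operatorname{Id})R_k$.
\item Comparing coefficients gives $(2k-1)!!\,B^{ij}_{0,2k}=(-1)^k(R_{k+1})^i{}_j$ and $(m+1)B^{ij}_{m+2,l}+(l+1)B^{ij}_{m,l+2}+\sum_aB^{ai}_{0,m}B^{aj}_{0,l}=0$. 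Together these fix every $B^{ij}_{2k,2l}$ and give the DOSS identity.
\item Comparing in the first variable alone gives $\Xi^i_0=-d\,\eta(f_i,J)$ and $-\mathcal L_{\partial_x}\Xi^i_k=\Xi^i_{k+1}+(-1)^k\sum_a(R_{k+1})^a{}_i\,\Xi^a_0$.
\item The last recursion is what makes $W=R\Xi$ satisfy $W_{k+1}=-\mathcal L_{\partial_x}W_k$, hence $W^i_k=(-1)^{k+1}d\,\eta(f_i,\partial_x^kJ)$.
\end{itemize}
``Integrating $B^J$ against the local basis'' does not replace this.

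\textbf{Step 2.} This step is incomplete exactly where it matters. In the disc at $a_i$ one has $x(s)=x(-s)$ and $dx_1dx_2/(x_1-x_2)^2=ds\,dt/(s-t)^2-ds\,dt/(s+t)^2$. So there is a genuine double pole along $t=-s$ that no Schwarzian-type term accounts for, and making $x$ injective outside the branch discs does not remove it. Your reflection idea is right, but it must use the simultaneous normalization explicitly:
\begin{itemize}
\item Set $F=1+\langle J,J\rangle_x+\eta(J(s),MJ(t))$. Then $J(-s)=J(s)-2J_i(s)$ with $\langle J,J_i\rangle_x=-\tfrac12$ gives $F(s,-s)=0$, while $F(s,s)=1$.
\item The second structure equation gives $\partial_tF=0$ on both lines $t=\pm s$.
\item Since $stF$ is holomorphic, $4stF-(s+t)^2$ vanishes to second order on $t=\pm s$. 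It is therefore divisible by $(s-t)^2(s+t)^2$, which yields exactly the normalized diagonal pole.
\end{itemize}
You also need the surface to be built along the continuation paths defining the normalization, so that $x$ has no critical points besides the $N$ chosen ones and $x(q_1)=x(q_2)$ with $q_1\ne q_2$ only for $q_2=\sigma_i(q_1)$. A universal-cover-type surface would carry extra critical points whose residues spoil the DOSS matching.

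Minor points: $h_i=\eta(e,f_i)$ is a square root of $\eta(\pi_i,\pi_i)$, not its inverse, and tameness only means the $u_i$ are distinct.
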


We also discuss the existence and ambiguity of the common period $J$ through the Gram matrix of the
flat-pencil pairing. 

We then apply this result to the quantum cohomology of
$\mathbb P^n$, $n\ge2$, at the small quantum point $Q=1$, $t=0$.
Here the construction can be made explicit at each step, from
the second structure equation to the full bidifferential and a
domain supporting the recursion.
The second structure equation reduces to the scalar period equation
\begin{equation}
    \left[
       \left(x\partial_x+\frac{n-1}{2}\right)^{n+1}
       -(n+1)^{n+1}\partial_x^{n+1}
    \right]y=0.
    \label{eq:intro-Pn-scalar-equation}
\end{equation}
The solution corresponding to the period $J$ in \eqref{eq:intro-common-period} selected here is
\begin{equation}
    \begin{aligned}
    y(x)&=-(2\pi)^{\frac{n-1}{2}}
       \Gamma\!\left(\frac{n-1}{2}\right)
       x^{-\frac{n-1}{2}}
       \sum_{d\ge0}
       \frac{\left(\frac{n-1}{2}\right)_{(n+1)d}}
            {(d!)^{n+1}}x^{-(n+1)d}\\
        &=-(2\pi)^{\frac{n-1}{2}}\Gamma(\frac{n-1}{2})x^{-\frac{n-1}{2}}
        \cdot {}_{n+1}F_{n}\left(
        \begin{matrix}
            \dfrac{n-1}{2n+2},\dfrac{n+1}{2n+2},\ldots,\dfrac{3n-1}{2n+2}\\
            1,\ldots,1
        \end{matrix}
        ;\frac{(n+1)^{n+1}}{x^{n+1}}\right).
    \end{aligned}
    \label{eq:intro-Pn-spectral-functions}
\end{equation}
The subscripted parentheses denote rising factorials.
This is a generalized hypergeometric period, initially
convergent for $|x|>n+1$, with the branch chosen to be real
and negative for $x>n+1$.
Its normalization is fixed by the local data at the
canonical values. 
More precisely, for
\begin{equation}
    u_j=(n+1)e^{\frac{2\pi\mathrm i j}{n+1}},
    \qquad x=u_j+\frac{s_j^2}{2},
    \qquad 0\le j\le n,
    \label{eq:intro-Pn-canonical-values}
\end{equation}
we prove, on compatible continuation branches, that
\begin{equation}
    \left.\frac{dy}{ds_j}\right|_{s_j=0}
       =\frac1{\sqrt{n+1}}
          e^{-\frac{\pi\mathrm i n j}{n+1}}=h_j,
    \label{eq:intro-Pn-canonical-slopes}
\end{equation}
where $h_j=\eta(e,f_j)$ is the corresponding component of the unit $e$ in the normalized canonical frame. 

The scalar solution also determines the entire vector period $J$.
Writing $H$ for the hyperplane class and
$J=\sum_{\alpha=0}^{n}j_\alpha H^\alpha$, the first two
components obtained from $y$ are
\begin{equation}
    j_n=\partial_xy,
    \qquad
    j_{n-1}=\frac{x\partial_xy+\frac{n-1}{2}y}{n+1}.
    \label{eq:intro-Pn-top-components}
\end{equation}
Proposition~\ref{prop:Pn-differential-reconstruction} gives
a finite differential reconstruction: every component of the selected period $J$ is a linear combination of derivatives of $y$ of order at most $n$, with polynomial coefficients in $x$.
Consequently, both the bidifferential
\eqref{eq:intro-period-kernel} and the ancestor forms
\eqref{eq:intro-period-ancestor-forms} are explicitly determined
by one scalar hypergeometric solution $y$.
The reconstructed period satisfies $\langle J,J\rangle_x=0$.


To describe the surface $\Sigma$ explicitly, we use the
Hauptmodul $\mathcal J_{n+1}$ of the Hecke triangle group of
signature $(2,n+1,\infty)$, normalized to take the values
$1$, $0$ and $\infty$ at the order-two elliptic point, the
order-$(n+1)$ elliptic point and the cusp, respectively; see \cite{doranAutomorphicFormsTriangle2013}.
The $x$-function is determined by
\begin{equation}
    x(\tau)^{n+1}=(n+1)^{n+1}\mathcal J_{n+1}(\tau),
    \qquad x(\mathrm i)=n+1,
    \quad \tau\in\mathbb{H}.
    \label{eq:intro-Pn-Hecke-x}
\end{equation}
Here $\mathrm{i}$ is the imaginary unit. The resulting spectral curve statement is Theorem~\ref{thm:Pn-explicit-spectral-curve}.

\begin{theorem}\label{thm:intro-Pn-spectral-curve}
Let $n\ge2$ and choose
\begin{equation}
    \max\left\{\sin\frac{\pi}{n+1},
                   \cos\frac{\pi}{n+1}\right\}<b<1.
    \label{eq:intro-Pn-domain-height}
\end{equation}
On the simply connected domain
\begin{equation}
    \Sigma=\left\{\tau\in\mathbb H:
       -(2n+1)\cos\frac{\pi}{n+1}
       <\Re\tau<\cos\frac{\pi}{n+1},\quad
       \Im\tau>b\right\},
    \label{eq:intro-Pn-domain}
\end{equation}
the functions in \eqref{eq:intro-Pn-spectral-functions} and
\eqref{eq:intro-Pn-Hecke-x}, continued from the infinity, together
with the bidifferential $B^J$, define a spectral
curve $(\Sigma,x,y,B^J)$. Its critical points are exactly
\begin{equation}
    a_j=-2j\cos\frac{\pi}{n+1}+\mathrm i,
    \qquad x(a_j)=u_j,\qquad 0\le j\le n,
    \label{eq:intro-Pn-critical-points}
\end{equation}
and they are simple. 
The Eynard--Orantin topological recursion
satisfies \eqref{eq:intro-ancestor-realization}, 
with the Gromov--Witten ancestor correlators of $\mathbb P^n$ at
$Q=1$, $t=0$ and the forms
\eqref{eq:intro-period-ancestor-forms}.
\end{theorem}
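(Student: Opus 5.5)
The plan is to reduce Theorem~\ref{thm:intro-Pn-spectral-curve} to Theorem~\ref{thm:intro-common-period}. Three things must be checked for the quantum cohomology of $\mathbb P^n$ at $Q=1$, $t=0$. First, the point is tame and semisimple. Second, the vector period $J$ reconstructed from the scalar solution $y$ of \eqref{eq:intro-Pn-spectral-functions} satisfies the normalization \eqref{eq:intro-common-period} at every canonical value $u_j$. Third, the explicit domain $\Sigma$ of \eqref{eq:intro-Pn-domain}, equipped with $x$ from \eqref{eq:intro-Pn-Hecke-x}, is a connected simply connected domain on which $x$ has exactly the critical points $a_j$, all simple, and on which $J$, $y$ and $B^J$ continue holomorphically off the diagonal.

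Tameness and semisimplicity are immediate. Quantum multiplication by $H$ at $Q=1$ is $H^{n+1}=1$, so $E\circ=(n+1)H\circ$ has the distinct eigenvalues $u_j$ of \eqref{eq:intro-Pn-canonical-values}.

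For the normalization, I will use Proposition~\ref{prop:Pn-differential-reconstruction}, which expresses each component $j_\alpha$ as a differential polynomial in $y$. This reduces the condition $J(s_j)=f_j/s_j+O(1)$ to two facts. The first is that $y$ has a square-root branching $y=\mathrm{const}+h_js_j+O(s_j^2)$ at $u_j$, with no other singular terms; then $j_n=\partial_xy=h_j/s_j+O(1)$. The second is that the remaining components are then forced by the system, because the residue of $J$ at $s_j=0$ must lie in the $u_j$-eigenline of $A$. Since the residue is proportional to $f_j$, comparing the $H^n$-component with $\eta(H^n\!,f_j)$ fixes the constant to be $1$.

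The slope identity \eqref{eq:intro-Pn-canonical-slopes} is the analytic core. I will prove it by Mellin--Barnes continuation of the ${}_{n+1}F_n$ series from $|x|>n+1$ to the singular point $x^{n+1}=(n+1)^{n+1}$. The singular exponent there is $\tfrac12$, which follows from the parameter excess $\sum b-\sum a=\tfrac12$. Its coefficient is given by the standard Bühring/Nørlund formula, and the Gauss multiplication formula should collapse the Gamma factors to $(2\pi)^{-(n-1)/2}\Gamma(\tfrac{n-1}{2})^{-1}(n+1)^{-1/2}$, up to the phase. The phase $e^{-\pi\mathrm i nj/(n+1)}$ comes from the symmetry $x\mapsto e^{2\pi\mathrm i/(n+1)}x$, under which $y\mapsto e^{-\pi\mathrm i(n-1)/(n+1)}y$, combined with $ds_j\propto e^{\pi\mathrm i j/(n+1)}$. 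The compatible continuation branches are exactly the paths inside $\Sigma$.

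For the domain I will use the Schwarz triangle description of $\mathcal J_{n+1}$. With $\lambda=2\cos\frac{\pi}{n+1}$, the Hecke group is generated by $\tau\mapsto\tau+\lambda$ and $\tau\mapsto-1/\tau$. The fundamental domain has the elliptic point $\mathrm i$, where $\mathcal J_{n+1}=1$, and the order-$(n+1)$ points at $\pm\lambda/2+\mathrm i\sin\frac{\pi}{n+1}$. The map $x=(n+1)\mathcal J_{n+1}^{1/(n+1)}$ is single-valued on $\Sigma$ because $\Sigma$ is simply connected and avoids the zeros of $\mathcal J_{n+1}$. This avoidance is why we need $b>\sin\frac{\pi}{n+1}$; the condition $b>\cos\frac{\pi}{n+1}$ keeps $\Sigma$ away from the $\mathrm{SL}$-images of $\mathrm i$ lying below the line $\Im\tau=1$.

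On $\Sigma$, the critical points of $x$ are the images of $\mathrm i$ under the translations, namely $\mathrm i-k\lambda$. Each is simple, since the ramification index there is $2$. Translation by $-\lambda$ multiplies $x$ by $e^{2\pi\mathrm i/(n+1)}$, because $\mathcal J_{n+1}\sim q^{-1}$ with $q=e^{2\pi\mathrm i\tau/\lambda}$. The horizontal width $(n+1)\lambda$ of $\Sigma$ therefore captures exactly $a_0,\dots,a_n$, with $x(a_j)=u_j$.

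The remaining step is holomorphy. The function $y$, viewed as a function of $\tau$, is a weight-type solution of the pulled-back hypergeometric equation, hence holomorphic on $\mathbb H$ away from preimages of singular values. At the $a_j$ it is a holomorphic function of the local coordinate $s_j$, so $J$ has only the prescribed poles there. Theorem~\ref{thm:intro-common-period} then says that $B^J$ has only the diagonal double pole, and it gives both the recursion statement and the forms \eqref{eq:intro-period-ancestor-forms}.

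I expect the main obstacle to be the exact evaluation of the connection coefficient in \eqref{eq:intro-Pn-canonical-slopes}, including its phase on the correct branch. I would check the resulting constant against $n=1$, the Norbury--Scott curve, and against $n=2$ using Milanov's modular computation.
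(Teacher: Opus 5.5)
Your reduction to Theorem~\ref{thm:intro-common-period} and your treatment of the normalization are sound. For the slope you take a genuinely different route from the paper: the Bühring--Nørlund connection formula at $z=1$ together with Gauss multiplication, where the paper uses Stirling asymptotics $c_d\sim Kd^{-3/2}$ and a comparison with $\operatorname{Li}_{3/2}$. Both routes give the coefficient $\Gamma(-\tfrac12)\prod_k\Gamma(\tfrac{a+k}{m})^{-1}=-2\sqrt\pi K$ of $(1-z)^{1/2}$. Yours cites a heavier classical result. In exchange it delivers the local structure (holomorphic $+\,(1-z)^{1/2}\cdot$holomorphic) at once, which the paper instead takes from the local form of the second structure equation. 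Your phase argument via $y(\zeta x)=\zeta^{-a}y(x)$, with continuation through the cusp part of $\Sigma$, is also correct. There is one small slip: the $H^n$-component of $f_j$ is $\eta(e,f_j)=\eta(H^0,f_j)=h_j$, not $\eta(H^n,f_j)$.

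The gap is your last step, where you say that Theorem~\ref{thm:intro-common-period} gives $B^J$ only the diagonal double pole. That theorem is an existence statement on a retained surface it builds itself, made of slit discs and corridors on which $x$ is injective off the branch discs. Lemma~\ref{lem:common-period-kernel} only examines pairs of points near a single critical point, so neither says anything about a prescribed domain. Your $\Sigma$ extends below the arcs $|\tau+j\ell_m|=1$, so $x|_\Sigma$ is not injective. The prefactor $dx_1\,dx_2/(x_1-x_2)^2$ is therefore singular along the whole locus $\{x(q_1)=x(q_2)\}\subset\Sigma\times\Sigma$, not just the diagonal. At an off-diagonal point of this locus, the first derivative of the numerator vanishes automatically, by the same computation as \eqref{eq:construction-diagonal-pairing-derivatives}. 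Its value, however, is the constant $1-\langle J,\widetilde J\rangle_x$, where $\widetilde J$ is the continuation of $J$ inside $\Sigma$ from $q_1$ to $q_2$. This vanishes when $q_2=\sigma_j(q_1)$ inside the $\sigma_j$-invariant neighbourhood of $a_j$, because there $\widetilde J=J-2J_j$, $\langle J,J\rangle_x=0$ and $\langle J,J_j\rangle_x=-\tfrac12$. For pairs related by other elements of $\Gamma_m$ there is no reason for it to vanish, and $B^J$ would then acquire an off-diagonal double pole.

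What is missing is the second half of Lemma~\ref{lem:Pn-explicit-spectral-domain}. You need that distinct points of $\Sigma$ with equal $x$ are always related by some $\sigma_j$. You also need that both points lie in the simply connected $\sigma_j$-invariant region formed by the cap below the $j$-th arc and its $\sigma_j$-image. On that region the relation $J\circ\sigma_j=J-2J_j$ propagates by analytic continuation from $a_j$. This is where the exact width $m\ell_m$ of the strip and the lower bound on $b$ genuinely enter. In your proposal they are used only to exclude zeros of $\mathcal J_m$ and extra critical points.
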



We also describe a natural algebraic curve, which is a double cover of the
$x$-plane; see
Subsection~\ref{subsec:Pn-quotient-geometry}.
It has the equation
\begin{equation}
    w^2=x^{n+1}-(n+1)^{n+1},
    \label{eq:intro-Pn-algebraic-cover}
\end{equation}
whose smooth compactification has genus $g=\lfloor n/2\rfloor$.
This describes a compact surface for
$x$-function. 
However, the algebraic coordinate $w$
is distinct from the $y$-function. The function $y$ and the bidifferential $B^J$ need not descend to this compact surface.

Finally, the uniformization $\tau$ gives automorphic expressions for
the $x$-function in every dimension; see Subsection~\ref{subsec:Pn-modular-spectral-data}. For $n=2,3,5$,
the corresponding Hecke groups of $\mathcal{J}_{n+1}$ are arithmetic, and the formulas can be
written in classical modular forms. In particular, we obtain
the full modular spectral curve data for $\mathbb P^2$ and
$\mathbb P^3$, including every component of $J$ and $B^J$.
Write $\eta_{\mathrm D}$ for the Dedekind eta function and
$E_2,E_4$ for the normalized Eisenstein series; $E_2$ is
quasimodular. Propositions~\ref{prop:P2-modular-spectral-data}
and~\ref{prop:P3-modular-spectral-data} give
\begin{equation}
\begin{aligned}
    \mathbb P^2:\qquad
    x(\tau)&=\frac{E_4(\tau)}{4\eta_{\mathrm D}(\tau)^8},
    &y(\tau)&=-2\sqrt2\,\pi\eta_{\mathrm D}(\tau)^4,\\
    \mathbb P^3:\qquad
    x(\sqrt2t)&=\frac{2E_2(2t)-E_2(t)}
       {\eta_{\mathrm D}(t)^2\eta_{\mathrm D}(2t)^2},
    &y(\sqrt2t)&=-2\pi\eta_{\mathrm D}(t)^2
       \eta_{\mathrm D}(2t)^2,
\end{aligned}
\label{eq:intro-modular-spectral-functions}
\end{equation}
where $t=\tau/\sqrt2$ in the second line.
The complete bidifferentials $B^J$ are written in
\eqref{eq:P2-full-modular-bidifferential} and
\eqref{eq:P3-full-modular-bidifferential}.
These formulas also include the full vector periods $J$ in classical modular and quasimodular forms.
We also give a classical modular expression for the $x$-function of $\mathbb P^5$:
\begin{equation}
    x(\sqrt3t)
    =\frac6{108^{1/6}}
    \frac{\vartheta_{A_2}(t)}
    {\eta_{\mathrm D}(t)\eta_{\mathrm D}(3t)},
    \quad 
    \vartheta_{A_2}(t)
    =\sum_{r,s\in\mathbb Z}e^{2\pi\mathrm i t(r^2+rs+s^2)},
\end{equation} 
where $t=\tau/\sqrt{3}$ and $\vartheta_{A_2}$ is the theta function for the $A_2$ lattice.

The remainder of this paper is organized as follows.
Section~\ref{sec:backgrounds} fixes our conventions for Frobenius manifolds, Givental--Teleman reconstruction and Eynard--Orantin topological recursion.
Section~\ref{sec:spectral-construction} proves the general
construction of the spectral curve and its refinement using a vector period $J$.
Section~\ref{sec:projective-space} constructs the periods and
spectral curves of projective spaces and describes their
upper half-plane uniformization.
Section~\ref{sec:Pn-modular-spectral-data} derives the automorphic
expressions for the function $x$ and the classical modular formulas for $\mathbb P^2$, $\mathbb P^3$ and $\mathbb P^5$.
Appendix~\ref{sec:non-tame} treats coincident canonical values
at semisimple points.

\vspace{.1in}
{\bf Declaration of AI Assistance}
The first-named author had a beneficial discussion with AI during the development of the mathematical constructions presented in this
paper. AI tools were also used to polish the language of the manuscript. 
The authors take responsibility for all results, proofs, and calculations of the article.

\section{Background: Frobenius manifolds, CohFTs and topological recursion}\label{sec:backgrounds}
This section recalls the background and fixes the conventions used in our spectral curve construction. We review semisimple
conformal Frobenius manifolds and their homogeneous $R$-matrix calibrations,
state the Givental--Teleman ancestor reconstruction formula,
and describe Eynard--Orantin topological recursion and its
relation to reconstruction through the DOSS correspondence.

\subsection{Semisimple conformal Frobenius manifolds}
\label{subsec:frobenius}
In this subsection, we recall the basic definitions and conventions for Frobenius
manifolds, following
\cite{dubrovinGeometry2DTopological1996,giventalGromovWittenInvariantsQuantization2001,milanovPeriodMapQuantum2019}.

Let $M$ be a complex analytic manifold of dimension $N$.
A \emph{Frobenius structure} on $M$ consists of a holomorphic,
commutative and associative multiplication $\circ$ on $TM$,
a unit vector field $e$, and a nondegenerate symmetric holomorphic
metric $\eta$. The metric is flat, its Levi--Civita connection
$\nabla$ satisfies $\nabla e=0$, and
\begin{equation}
    \eta(X\circ Y,Z)=\eta(X,Y\circ Z)
\end{equation}
for all vector fields $X,Y,Z$. Moreover, the covariant derivative $\nabla c$
of the symmetric tensor
\begin{equation*}
    c(X,Y,Z)=\eta(X\circ Y,Z)
\end{equation*}
is totally symmetric.

Choose local flat coordinates $t^1,\ldots,t^N$ such that
$e=\partial_1$, denoting $\partial_\alpha=\partial/\partial t^\alpha$ and 
$ \eta_{\alpha\beta}
    =\eta(\partial_\alpha,\partial_\beta)$.
The matrix $(\eta_{\alpha\beta})$ is constant, and its inverse
is denoted by $(\eta^{\alpha\beta})$.
Throughout, we use the Einstein summation convention. 

The Frobenius condition implies the local existence of a
holomorphic function $F$, called the Frobenius potential, determined up to a polynomial of degree at most two, such that
\begin{equation}
    \eta(\partial_\alpha\circ\partial_\beta,\partial_\gamma)
    =F_{\alpha\beta\gamma},
    \qquad
    F_{\alpha\beta\gamma}
    =\partial_\alpha\partial_\beta\partial_\gamma F.
\end{equation}
Thus
\begin{equation}
    \partial_\alpha\circ\partial_\beta
    =
    \eta^{\gamma\delta}F_{\alpha\beta\delta}\,
    \partial_\gamma.
\end{equation}

The Frobenius manifold is called \emph{conformal of charge $d$}
if it admits an Euler vector field $E$ satisfying
\begin{equation}
    \nabla(\nabla E)=0,
    \qquad
    \mathcal L_E(\circ)=\circ,
    \qquad
    \mathcal L_E\eta=(2-d)\eta.
    \label{eq:frobenius-conformality}
\end{equation}
The homogeneity condition on the multiplication means
\begin{equation}
    [E,X\circ Y]-[E,X]\circ Y-X\circ[E,Y]
    =X\circ Y.
\end{equation}
In particular, $[E,e]=-e$.
The first condition in \eqref{eq:frobenius-conformality}
says that $E$ is affine in flat coordinates, while the metric
condition gives
\begin{equation}
    \eta(\nabla_XE,Y)+\eta(X,\nabla_YE)
    =(2-d)\eta(X,Y).
\end{equation}

We use the endomorphisms
\begin{equation}
    A(X):=E\circ X,
    \qquad
    \theta
    :=
    \nabla E-\left(1-\frac d2\right)\operatorname{Id},
    \label{eq:frobenius-A-theta}
\end{equation}
where $(\nabla E)(X)=\nabla_XE$.
Metric invariance and conformality imply
\begin{equation}
    \eta(AX,Y)=\eta(X,AY),
    \qquad
    \eta(\theta X,Y)=-\eta(X,\theta Y),
    \qquad
    \nabla\theta=0.
    \label{eq:frobenius-adjointness}
\end{equation}
Since $\nabla e=0$ and $[E,e]=-e$, we also have
\begin{equation}
    \nabla_eE=e,
    \qquad
    \theta(e)=\frac d2e.
    \label{eq:frobenius-grading-unit}
\end{equation}
Our convention for $\theta$ is the negative of the grading
operator often denoted by $\mu$.
We impose no diagonalizability assumption on $\theta$.

A point $t\in M$ is \emph{semisimple} if the algebra
$(T_tM,\circ,e)$ is isomorphic to $\mathbb C^N$.
We work in a sufficiently small, simply connected neighbourhood
of a semisimple point.
After fixing labels, the primitive idempotents of the algebra $T_tM$ extend to
holomorphic vector fields $\pi_1,\ldots,\pi_N$ satisfying
\begin{equation}
    \pi_i\circ\pi_j=\delta_{ij}\pi_i,
    \qquad
    e=\sum_{i=1}^{N}\pi_i.
    \label{eq:frobenius-idempotents}
\end{equation}
They commute as vector fields, and the Euler field determines canonical
coordinates $u_1,\ldots,u_N$ such that
\begin{equation}
    \pi_i=\frac{\partial}{\partial u_i},
    \qquad
    E=\sum_{i=1}^{N}u_i\pi_i,
    \qquad
    A\pi_i=u_i\pi_i.
    \label{eq:frobenius-canonical-coordinates}
\end{equation}
For the construction of canonical coordinates and the normalized
idempotent frame, see
\cite{dubrovinGeometry2DTopological1996}.

Semisimplicity does not require the canonical values $u_i(t)$
to be pairwise distinct.
A semisimple point is called \emph{tame} if they are distinct.
We assume tameness, but our construction of the spectral curve can be extended to the non-tame case. The details are given in Appendix~\ref{sec:non-tame}.

Metric invariance implies
\begin{equation*}
    \eta(\pi_i,\pi_j)=0
    \qquad (i\ne j).
\end{equation*}
Choose holomorphic square roots and define the normalized
canonical frame by
\begin{equation}
    h_i^2=\eta(\pi_i,\pi_i),
    \qquad
    f_i=\frac{\pi_i}{h_i}.
    \label{eq:frobenius-normalized-frame}
\end{equation}
Nondegeneracy implies $h_i\ne0$.
We then have
\begin{equation}
    \eta(f_i,f_j)=\delta_{ij},
    \qquad
    h_i=\eta(e,f_i),
    \qquad
    e=\sum_{i=1}^{N}h_i f_i,
    \label{eq:frobenius-unit-components}
\end{equation}
and
\begin{equation}
    f_i\circ f_j=\delta_{ij}\frac{f_i}{h_i},
    \qquad
    \eta=\sum_{i=1}^{N}h_i^2\,du_i\otimes du_i.
    \label{eq:frobenius-canonical-metric}
\end{equation}
The labels and square-root choices will remain fixed throughout
each local construction.

Let $\Phi$ be the transition matrix defined by
\begin{equation}
    f_i=\Phi^\alpha{}_i\,\partial_\alpha,
\end{equation}
and set
\begin{equation}
    U=\operatorname{diag}(u_1,\ldots,u_N),
    \qquad
    \Theta=\Phi^{-1}\theta\Phi,
    \qquad
    h=(h_1,\ldots,h_N)^T.
    \label{eq:frobenius-canonical-matrices}
\end{equation}
Then
\begin{equation}
    \Phi^T\eta\Phi=\operatorname{Id},
    \qquad
    A\Phi=\Phi U,
    \qquad
    \Theta^T=-\Theta.
    \label{eq:frobenius-frame-identities}
\end{equation}
Here $\eta$ denotes $(\eta_{\alpha\beta})$,
and $T$ denotes ordinary matrix transpose.
The column $h$ represents the unit in the normalized canonical frame.

The Frobenius structure determines a flat meromorphic connection
on the pullback of $TM$ to $M\times\mathbb C^*$, called the \emph{deformed flat connection} \cite{dubrovinGeometry2DTopological1996,giventalGromovWittenInvariantsQuantization2001}:
\begin{equation}
    \widehat\nabla_X
    =
    \nabla_X-z^{-1}(X\circ-),
    \qquad
    \widehat\nabla_{\partial_z}
    =
    \partial_z+z^{-2}A-z^{-1}\theta.
    \label{eq:frobenius-deformed-connection}
\end{equation}
Its horizontal sections satisfy
\begin{equation}
    \nabla_XY=z^{-1}X\circ Y,
    \qquad
    \partial_zY
    =
    \left(-z^{-2}A+z^{-1}\theta\right)Y.
    \label{eq:frobenius-horizontal-equations}
\end{equation}

On the tame semisimple point, a formal fundamental solution
near $z=0$ takes the form \cite{dubrovinGeometry2DTopological1996,dubrovinPainleveTranscendentsTwoDimensional1999,giventalGromovWittenInvariantsQuantization2001}
\begin{equation}
    Y(t,z)=\Phi(t)R(t,z)e^{U(t)/z},
    \qquad
    R(t,z)=\operatorname{Id}+\sum_{k\ge1}R_k(t)z^k.
    \label{eq:frobenius-formal-solution}
\end{equation}
We use the homogeneous, symplectic normalization
\begin{equation}
    R(t,z)R(t,-z)^T=\operatorname{Id},
    \qquad
    \left(E+z\frac{\partial}{\partial z}\right)R(t,z)=0,
    \label{eq:frobenius-R-normalization}
\end{equation}
where $E$ acts entrywise on $R$.
The series $R$ is formal; convergence in $z$ is not assumed.
Substituting \eqref{eq:frobenius-formal-solution} into the
connection equations yields
\begin{equation}
    [U,R_{k+1}]
    =
    \left(\Theta-k\operatorname{Id}\right)R_k,
    \qquad
    k\ge0,
    \qquad
    R_0=\operatorname{Id},
    \label{eq:frobenius-R-recursion}
\end{equation}
together with
\begin{equation}
    z\,\mathrm dR
    =
    [\mathrm dU,R]
    -
    z\,\Phi^{-1}(\nabla\Phi)R.
    \label{eq:frobenius-R-parameter-equation}
\end{equation}
Here $\mathrm d$ denotes exterior differentiation on $M$.
These equations fix our conventions for the $R$ matrix.

Finally, the periods used to construct the spectral curve
satisfy the order-zero second structure equations for the Frobenius manifold \cite{milanovPeriodMapQuantum2019,dunin-barkowskiDubrovinsSuperpotentialGlobal2019}:
\begin{equation}
    \begin{aligned}
        (x\operatorname{Id}-A)\partial_xJ
        &=
        \left(\theta-\frac12\operatorname{Id}\right)J,\\
        \nabla_XJ
        &=-X\circ\partial_xJ.
    \end{aligned}
    \label{eq:frobenius-second structure}
\end{equation}
Here $J(t,x)\in T_{t}M$ is a tangent vector-valued function, defined
locally away from
\begin{equation*}
    \det(x\operatorname{Id}-A)=0.
\end{equation*}
The compatibility of these equations follows from the
Frobenius axioms, and their possible finite singularities
in the $x$-plane occur at the canonical values $u_i$.
Given such a period, a local scalar primitive $y$ is specified by
\begin{equation}
    \partial_xy=\eta(e,J),
    \label{eq:frobenius-scalar-period}
\end{equation}
up to an $x$-independent term.
The choice, normalization and analytic continuation of these
periods will provide the functions entering the
spectral-curve construction.

\subsection{Givental--Teleman reconstruction}
\label{subsec:givental-teleman}

A cohomological field theory (CohFT) with flat unit assigns cohomology
classes on $\overline{\mathcal M}_{g,n}$ to state-space insertions,
compatibly with permutations, gluing and forgetting a unit insertion
\cite{kontsevichGromovWittenClassesQuantum1994}.
Its genus-zero primary potential determines a Frobenius structure.
For a homogeneous semisimple CohFT, the higher-genus theory is recovered
by Givental's formula and Teleman's classification
\cite{giventalSemisimpleFrobeniusStructures2001,
giventalGromovWittenInvariantsQuantization2001,
telemanStructure2DSemisimple2012}.
We recall only the ancestor formulation needed for the spectral-curve
construction, following the conventions of
\cite{xiongEYNARDORANTINTOPOLOGICALRECURSION}.

Fix a tame semisimple point $v\in M$ and retain $R,\Phi,h_i$ from
Section~\ref{subsec:frobenius}, suppressing their dependence on $v$.
For insertion variables $\mathbf t(z)=\sum_{k\ge0}t_kz^k$, the dilaton
shift is
\begin{equation}
    \mathbf q(z)=\mathbf t(z)-ze,
    \qquad
    q_k^\alpha=t_k^\alpha-\delta_{k1}\delta^\alpha_1,
    \qquad
    q_k^i=t_k^i-\delta_{k1}h_i.
    \label{eq:ancestor-dilaton-shifts}
\end{equation}
The flat and normalized canonical parameters are related by
$t_k^\alpha=\Phi^\alpha{}_i t_k^i$ and
$q_k^\alpha=\Phi^\alpha{}_i q_k^i$.

Write $\psi_r=c_1(L_r)$ for the cotangent-line class at the
$r$th marked point of $\overline{\mathcal M}_{g,n}$. 
The Witten--Kontsevich partition function is
\cite{wittenTwodimensionalGravityIntersection1990,
kontsevichIntersectionTheoryModuli1992}
\begin{equation}
    Z_{\mathrm{WK}}(\mathbf t)
    =\exp\left(
    \sum_{\substack{g,n\ge0\\2g-2+n>0}}\frac1{n!}
    \sum_{k_1,\ldots,k_n\ge0}
    \left(\int_{\overline{\mathcal M}_{g,n}}
    \prod_{r=1}^{n}\psi_r^{k_r}\right)
    \prod_{r=1}^{n}t_{k_r}
    \right).
\end{equation}
In dilaton variables, we write
$Z_{\mathrm{WK}}(\mathbf q)=Z_{\mathrm{WK}}(q_0,q_1+1,q_2,\ldots)$.

For $2g-2+n>0$, let
$\Omega_{g,n;v}(v_1,\ldots,v_n)$ denote the CohFT class at $v$,
with insertions $v_1,\ldots,v_n\in T_vM$.
The connected ancestor correlators are
\begin{equation}
    \left\langle
        \bar\tau_{k_1}(v_1)\cdots\bar\tau_{k_n}(v_n)
    \right\rangle_{g,n;v}
    :=
    \int_{\overline{\mathcal M}_{g,n}}
    \Omega_{g,n;v}(v_1,\ldots,v_n)
    \prod_{r=1}^{n}\psi_r^{k_r},
    \qquad k_r\geq0.
    \label{eq:ancestor-correlator-definition}
\end{equation}
Here $\bar\tau_{k_r}(v_r)$ denotes the insertion $v_r$
together with the power $\psi_r^{k_r}$.

The brackets are multilinear in the insertion vectors and
symmetric under simultaneous permutations of the pairs $(k_r,v_r)$.
The point $v$ specifies the background Frobenius structure.
Denote the formal insertion variables by
\begin{equation*}
    t_k=t_k^\alpha\partial_\alpha=t_k^i f_i.
\end{equation*}

The ancestor partition function packages these correlators as follows:
\begin{equation}
    Z_{\mathrm A}(\mathbf t;v)
    =\exp\left(
    \sum_{\substack{g,n\ge0\\2g-2+n>0}}\frac1{n!}
    \sum_{k_1,\ldots,k_n\ge0}
    \left\langle
    \bar\tau_{k_1}(t_{k_1})\cdots\bar\tau_{k_n}(t_{k_n})
    \right\rangle_{g,n;v}
    \right).
    \label{eq:ancestor-partition-function}
\end{equation}
In particular, multilinearity gives
\begin{equation}
    \begin{aligned}
        \left\langle
            \bar\tau_{k_1}(t_{k_1})\cdots
            \bar\tau_{k_n}(t_{k_n})
        \right\rangle_{g,n;v}
        =
        \sum_{i_1,\ldots,i_n=1}^{N}
        \left\langle
            \bar\tau_{k_1}(f_{i_1})\cdots
            \bar\tau_{k_n}(f_{i_n})
        \right\rangle_{g,n;v}
        \prod_{r=1}^{n}t_{k_r}^{i_r}.
    \end{aligned}
    \label{eq:ancestor-correlator-components}
\end{equation}

For our purposes, the quantization $\widehat R$ is specified by its
explicit action
\cite[Proposition~7.3]{giventalGromovWittenInvariantsQuantization2001}:
\begin{equation}
    (\widehat R G)(\mathbf q)
    =\left[
    \exp\left(\frac12\sum_{k,l\ge0}\sum_{i,j=1}^{N}
    (V_{kl})_{ij}
    \frac{\partial^2}{\partial q_k^i\partial q_l^j}\right)G
    \right](R^{-1}\mathbf q),
    \label{eq:givental-R-action}
\end{equation}
where
\begin{equation}
    \sum_{k,l\ge0}V_{kl}z^kw^l
    =\frac{\operatorname{Id}-R(-z)^TR(-w)}{z+w}.
    \label{eq:givental-quadratic-kernel}
\end{equation}
The right-hand side is a formal power series by symplecticity, and
$V_{kl}^T=V_{lk}$. The differential operator acts before substitution.
The change of frame acts by
$(\widehat\Phi G)(\mathbf q)=G(\Phi^{-1}\mathbf q)$.

\begin{theorem}[Givental--Teleman reconstruction
{\cite{giventalSemisimpleFrobeniusStructures2001,
giventalGromovWittenInvariantsQuantization2001,
telemanStructure2DSemisimple2012}}]
\label{thm:givental-teleman}
For a homogeneous semisimple CohFT with flat unit, the ancestor
partition function at a tame semisimple point $v$ is
\begin{equation}
    Z_{\mathrm A}(\mathbf t;v)
    =\left(\prod_{i=1}^{N}h_i\right)^{1/24}
    \widehat\Phi\,\widehat R\,Z_{\mathrm{WK}}^{\otimes N},
    \qquad
    Z_{\mathrm{WK}}^{\otimes N}
    =\prod_{i=1}^{N}Z_{\mathrm{WK}}(\mathbf q^i).
    \label{eq:ancestor-reconstruction}
\end{equation}
Here $R$ is the homogeneous calibration of
Section~\ref{subsec:frobenius}, and the operators act in dilaton
variables. 
\end{theorem}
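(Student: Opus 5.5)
The plan is to obtain \eqref{eq:ancestor-reconstruction} by combining Teleman's classification of semisimple CohFTs with the identification of his group action with Givental's quantized operators. I would not re-prove the classification itself.

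\emph{Step 1: reduction to an $R$-action.} At the tame semisimple point $v$, Teleman's theorem says that the CohFT $\Omega_{\bullet;v}$ is obtained from the trivial CohFT of rank $N$ by two operations. The trivial CohFT is a direct sum of $N$ copies of the point theory, with the Frobenius algebra $\bigoplus_i\mathbb C f_i$ and the metric $\eta(f_i,f_j)=\delta_{ij}$. The operations are an action of an $R$-matrix $R(z)=\operatorname{Id}+O(z)$ satisfying $R(z)R(-z)^T=\operatorname{Id}$, followed by a translation. Each copy of the point theory carries the rescaling $f_i\circ f_i=f_i/h_i$ from \eqref{eq:frobenius-canonical-metric}, so its correlators are those of $\overline{\mathcal M}_{g,n}$ weighted by $h_i^{2g-2+n}$-type factors. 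The flat-unit hypothesis forces the translation to be the one determined by $R$ through the dilaton shift, namely $T(z)=z(e-R(z)^{-1}e)$ written in the frame $f_i$.

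\emph{Step 2: fixing $R$ by homogeneity.} The genus-zero data fix $R$ only up to right multiplication by $\exp(\sum a_{2k+1}z^{2k+1})$ with diagonal $a$. I would show that homogeneity of the CohFT forces $(E+z\partial_z)R=0$. Combined with the flatness equations \eqref{eq:frobenius-R-recursion} and \eqref{eq:frobenius-R-parameter-equation}, this determines $R$ uniquely. The off-diagonal part of $R_{k+1}$ is fixed by $[U,R_{k+1}]=(\Theta-k)R_k$, using tameness ($u_i\neq u_j$). The diagonal part is fixed by the diagonal entries of the next equation together with the Euler homogeneity, which removes the constants $a_{2k+1}$. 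This identifies Teleman's $R$ with the homogeneous calibration of Section~\ref{subsec:frobenius}.

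\emph{Step 3: graph sum to quantized operator.} Next I would translate the $R$-action and translation into generating-function form. The edge contributions of Teleman's graph sum are $(\operatorname{Id}-R(-z)^TR(-w))/(z+w)$, and these are exactly the coefficients $V_{kl}$ in \eqref{eq:givental-quadratic-kernel}. The leg contributions $R^{-1}(\psi)$, together with the dilaton-shifted translation, become the substitution $\mathbf q\mapsto R^{-1}\mathbf q$. Expanding $\exp(\frac12\sum V_{kl}\partial\partial)$ as a Feynman sum over stable graphs then gives \eqref{eq:givental-R-action} applied to $\prod_i Z_{\mathrm{WK}}(\mathbf q^i)$, where the $h_i$-rescaling of vertices is absorbed by the dilaton shift $q^i_1=t^i_1-h_i$. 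Finally, $\widehat\Phi$ converts canonical coordinates back to flat coordinates.

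\emph{Main obstacle.} I expect the hardest part to be the bookkeeping of constants. This includes the genus-one unstable contributions, which produce the prefactor $(\prod_i h_i)^{1/24}$. I would derive it by comparing the genus-one dilaton equation $\langle\bar\tau_1(e)\rangle_{1,1}=\chi/24$ in each rescaled copy: writing $Z_{\mathrm{WK}}$ at shift $h_i$ in place of $1$ produces $\log h_i/24$ through the $(2g-2+n)$-homogeneity of $F_1$. The same bookkeeping must ensure that the sign conventions, with $\theta=-\mu$ and $e^{U/z}$, match those of \cite{xiongEYNARDORANTINTOPOLOGICALRECURSION}.
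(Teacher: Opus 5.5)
Your outline is correct and follows essentially the same route as the paper: the paper gives no proof of its own and simply invokes Givental's quantization formula and Teleman's classification, and your steps are exactly the content of those citations. Those steps are Teleman's $R$-action with the unit-determined translation $T(z)=z(e-R(z)^{-1}e)$, homogeneity removing the odd diagonal ambiguity of $R$, the Feynman expansion identifying the graph sum with $\widehat R$ via $\mathbf q\mapsto R^{-1}\mathbf q$, and the genus-one dilaton term producing $(\prod_i h_i)^{1/24}$; one small correction is that the vertices of the rescaled point theory carry weight $h_i^{-(2g-2+n)}$, not $h_i^{2g-2+n}$, which is precisely what your shift $q_1^i=t_1^i-h_i$ produces through the dilaton homogeneity of the Witten--Kontsevich potential.
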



For a Frobenius manifold given independently of a CohFT,
\eqref{eq:ancestor-reconstruction} defines the formal ancestor theory
used below; its correlators are denoted by
$\langle\prod_r\bar\tau_{k_r}(v_r)\rangle^R_{g,n;v}$.
Thus the spectral curve construction requires only the calibration $R$-matrix, the unit components $h_i$, and the change of frame $\Phi$.

\subsection{Topological recursion and the DOSS correspondence}
\label{subsec:EO-DOSS}

We recall Eynard--Orantin topological recursion
\cite{eynardInvariantsAlgebraicCurves2007}, allowing the underlying
Riemann surface to be noncompact. 
A spectral curve consists of a Riemann surface $\Sigma$, two
meromorphic functions $x,y$, and a symmetric meromorphic bidifferential
$B$ on $\Sigma\times\Sigma$.
Assume that $dx$ has finitely many simple zeros $a_1,\ldots,a_N$,
that $x,y$ are holomorphic near these points, and that $dy(a_i)\ne0$.
The bidifferential $B$ is holomorphic away from the diagonal and has
the local form
\begin{equation}
    B(p,q)=\left(
    \frac1{(\zeta(p)-\zeta(q))^2}+\text{holomorphic}
    \right)d\zeta(p)\,d\zeta(q)
    \label{eq:EO-diagonal-normalization}
\end{equation}
near the diagonal, for any local coordinate $\zeta$.
On a compact surface, one often chooses the normalized fundamental
bidifferential associated with a Torelli marking, called the Bergman kernel.
No such normalization is imposed here.

Write $u_i=x(a_i)$ (we use the same notation because these critical values will coincide with the canonical values in our construction), and choose local coordinates $s_i$ satisfying
\begin{equation}
    x=u_i+\frac{s_i^2}{2},
    \qquad
    \sigma_i(s_i)=-s_i,
    \label{eq:EO-local-coordinate}
\end{equation}
where $\sigma_i$ is the local involution of $x$.
Throughout the main text, the critical values are pairwise distinct,
in accordance with the tameness assumption.

The recursion starts from
\begin{equation}
    \omega_{0,1}=y\,dx,
    \qquad
    \omega_{0,2}=B.
\end{equation}
For $2g-2+n>0$, it defines symmetric meromorphic differentials by
\begin{equation}
\begin{aligned}
    \omega_{g,n+1}(p_0,p_1,\ldots,p_n)
    ={}&\sum_{i=1}^{N}\operatorname*{Res}_{q=a_i}K_i(p_0,q)
    \Bigg[
    \omega_{g-1,n+2}(q,\sigma_i(q),p_1,\ldots,p_n)
    \\
    &\quad+
    \sum_{\substack{g_1+g_2=g\\I\sqcup J=\{1,\ldots,n\}}}^{\prime}
    \omega_{g_1,|I|+1}(q,p_I)
    \omega_{g_2,|J|+1}(\sigma_i(q),p_J)
    \Bigg],
\end{aligned}
\label{eq:EO-recursion}
\end{equation}
where
\begin{equation}
    K_i(p,q)
    =-\frac{\displaystyle\int_{\sigma_i(q)}^qB(p,\cdot)}
    {2\bigl(y(q)-y(\sigma_i(q))\bigr)\,dx(q)}.
    \label{eq:EO-recursion-kernel}
\end{equation}
The integral is local near $a_i$.
The prime $\sum^{\prime}$ excludes factors of type $(0,1)$.

The recursion depends on $y$ only through its odd part
\begin{equation}
    \frac{y(s_i)-y(-s_i)}2
    =\sum_{k\ge0}y_{2k+1}^{\,i}s_i^{2k+1},
    \qquad y_1^{\,i}\ne0,
\end{equation}
where $y^i_k$ are the local coefficients of $y$ at $a_i$.
In particular, adding a function of $x$ to $y$ does not change the differentials $\omega_{g,n}$.

We now describe the local correspondence with Givental reconstruction
\cite{dunin-barkowskiIdentificationGiventalFormula2014}; see also
\cite{eynardInvariantsSpectralCurves2014,milanovEynardOrantinRecursion2014,xiongEYNARDORANTINTOPOLOGICALRECURSION}.
Fix a tame semisimple point $v$, and identify
$x(a_i)=u_i(v)$. 
We suppress the dependence of $R, h$ on $v$.

Near a pair of critical points $(a_i,a_j)$, expand
\begin{equation}
    B(s_i,s_j')
    =\delta_{ij}\frac{ds_i\,ds_j'}{(s_i-s_j')^2}
    +\sum_{k,l\ge0}B_{kl}^{ij}s_i^k(s_j')^l\,ds_i\,ds_j'.
    \label{eq:EO-B-expansion}
\end{equation}
The singular term is present only for $i=j$.

From topological recursion to the $R$ and $h$:
 The $R$-matrix is determined by $B$ as follows:
\begin{equation}
    (R_{k+1})^i{}_j
    =(-1)^k(2k-1)!!\,B_{0,2k}^{ij},
    \qquad k\ge0.
    \label{eq:B-extract-R}
\end{equation}
and 
\begin{equation}
h_i = y^i_1.
\end{equation} 
To get   $y$ and $B$ from $R$ and $h$ is much more involved.
It is easy to get $y$ from $R$ and $h$:
\begin{equation}
    \frac{y(s_i)-y(-s_i)}2
    =\sum_{k\ge0}[z^k]\bigl(R(z)^{-1}h\bigr)_i
    \frac{s_i^{2k+1}}{(2k+1)!!},
    \label{eq:DOSS-y-R-compatibility}
\end{equation}
is equivalent to 
\begin{equation}
    \sum_{k\ge0}(2k+1)!!\,y_{2k+1}^{\,i}z^k
    =\bigl(R(z)^{-1}h\bigr)_i.
    \label{eq:DOSS-unit-series}
\end{equation}
So only $y_{2k+1}^{\,i}$ are determined.
But this is not a problem because by \eqref{eq:EO-recursion-kernel}, 
the result of the topological recursion depends only on $\frac{y(s_i)-y(-s_i)}2$.

The coefficients $B^{ij}_{2k,2l}$ are determined from $R$:
\begin{equation}
    \sum_{k,l\ge0}(2k-1)!!(2l-1)!!B_{2k,2l}^{ij}z^kw^l
    =\left[
    \frac{\operatorname{Id}-R(z)^{-1}R(w)^{-T}}{z+w}
    \right]_{ij}.
    \label{eq:DOSS-B-R-compatibility}
\end{equation}
Since $R(z)^{-1}=R(-z)^T$, the right-hand side is precisely
\eqref{eq:givental-quadratic-kernel}.
Even though only $B_{2k,2l}^{ij}$ are  determined  by $R$ as in \eqref{eq:DOSS-B-R-compatibility}, 
under suitable conditions this is sufficient for 
the correspondence with ancestor correlators.
This is because the differential $\omega_{g,n}$ obtained from the recursion is expanded 
in some ancestor one-form $W^i_k$ defined recursively as follows:
first define the one-forms:
\begin{equation}
    \Xi_k^i(p)
    =(2k-1)!!\,[s_i^{2k}]\frac{B(s_i,p)}{ds_i},
    \qquad k\ge0,
    \label{eq:DOSS-coefficient-forms}
\end{equation}
next define
\begin{equation}
    W_0^i=\Xi_0^i,
    \qquad
    W_{k+1}^i=-\mathrm d\left(\frac{W_k^i}{dx}\right).
    \label{eq:B-ancestor-differential-recursion}
\end{equation}
 
Note by \eqref{eq:DOSS-B-R-compatibility},
the following identity is satisfied by the coefficients $B^{ij}_{2k,2l}$:
\begin{equation}
    (2l+1)B_{2r,2l+2}^{ij}+(2r+1)B_{2r+2,2l}^{ij}
    +\sum_{a=1}^{N}B_{0,2r}^{ai}B_{0,2l}^{aj}=0,
    \qquad r,l\ge0.
    \label{eq:B-differential-compatibility1}
\end{equation}
As observed in \cite{xiongEYNARDORANTINTOPOLOGICALRECURSION},
the following stronger compatibility condition is required
to establish the DOSS correspondence:
\begin{equation}
    (l+1)B_{2r,l+2}^{ij}+(2r+1)B_{2r+2,l}^{ij}
    +\sum_{a=1}^{N}B_{0,2r}^{ai}B_{0,l}^{aj}=0,
    \qquad r,l\ge0.
    \label{eq:B-differential-compatibility}
\end{equation} 
It holds for the $\mathcal{A}$-cycle normalized Bergman kernel in the case of compactifiable spectral curve
\cite{eynardInvariantsSpectralCurves2014,xiongEYNARDORANTINTOPOLOGICALRECURSION}.
As we observe in \cite{xiongEYNARDORANTINTOPOLOGICALRECURSION},
the  condition \eqref{eq:B-differential-compatibility} is equivalent to the following identity for the one-forms:
\begin{equation}
    -\mathrm d\left(\frac{\Xi_r^i}{dx}\right)
    =\Xi_{r+1}^i
    +(2r-1)!!\sum_{a=1}^{N}B_{0,2r}^{ai}\,\Xi_0^a.
    \label{eq:B-external-differential-recursion}
\end{equation}
With the definition of $W^i_k$ above, 
this can be reexpressed using the $R$-transform:
\begin{equation}
    \Xi(z,p)=\sum_{k\ge0}\Xi_k(p)z^k,
    \qquad
    W(z,p):=R(z)\Xi(z,p)=\sum_{k\ge0}W_k(p)z^k.
    \label{eq:DOSS-ancestor-forms}
\end{equation}
That is
\begin{equation}
    W_k^i(p)
    =\sum_{\ell=0}^{k}(R_\ell)^i{}_j\,\Xi_{k-\ell}^j(p).
\end{equation}
Thus $\Xi_k^i$ is extracted directly from $B$, and $W_k^i$ incorporates the change of variables in Givental's formula \eqref{eq:givental-R-action}.
   
As we observed in \cite{xiongEYNARDORANTINTOPOLOGICALRECURSION},
it is not necessary to have \eqref{eq:B-ancestor-differential-recursion}.
One only needs to have $W=R\Xi$. 
Hence a slight generalization of the DOSS correspondence can be established.

\begin{theorem}[DOSS correspondence
    {\cite{dunin-barkowskiIdentificationGiventalFormula2014}, see also \cite{milanovEynardOrantinRecursion2014,xiongEYNARDORANTINTOPOLOGICALRECURSION}}]
\label{thm:DOSS}
Suppose that \eqref{eq:DOSS-B-R-compatibility} and
\eqref{eq:DOSS-y-R-compatibility} hold for the homogeneous $R$-matrix
and unit vector $h$ at a tame semisimple point $v$.
Then, for $2g-2+n>0$ and $n\ge1$,
\begin{equation}
\begin{aligned}
    \omega_{g,n}(p_1,\ldots,p_n)
    ={}&\sum_{i_1,\ldots,i_n=1}^{N}
    \sum_{k_1,\ldots,k_n\ge0}
    \left\langle
    \bar\tau_{k_1}(f_{i_1})\cdots\bar\tau_{k_n}(f_{i_n})
    \right\rangle^R_{g,n;v}
    \prod_{r=1}^{n}W_{k_r}^{i_r}(p_r).
\end{aligned}
\label{eq:DOSS-ancestor-correspondence}
\end{equation}
The correlators are defined by Givental's ancestor reconstruction and the ancestor one-form $W$ is defined by \eqref{eq:DOSS-ancestor-forms}.
\end{theorem}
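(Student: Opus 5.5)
We prove Theorem~\ref{thm:DOSS} by reducing it to the original DOSS identification \cite{dunin-barkowskiIdentificationGiventalFormula2014}, written in the ancestor conventions of Theorem~\ref{thm:givental-teleman}.

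\textbf{Step 1: localization.} By \eqref{eq:EO-recursion} and \eqref{eq:EO-recursion-kernel}, every $\omega_{g,n}$ with $2g-2+n>0$ is a finite sum of residues at the $a_i$. Its pole structure in $p_1$ is therefore produced by the kernel $K_i(p_1,q)$ through the factor $\int_{\sigma_i(q)}^qB(p_1,\cdot)$.

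Expand this integral in the local coordinate $s_i$. Only the part odd in $s_i$ survives, and it is generated by the even-index coefficient forms $\Xi^i_k(p_1)$ of \eqref{eq:DOSS-coefficient-forms}. By induction on $2g-2+n$, each $\omega_{g,n}$ is then a polynomial in the $\Xi^{i}_{k}(p_\nu)$. Its coefficients are universal expressions in two sets of data: the odd Taylor coefficients $y^i_{2k+1}$, and the even--even coefficients $B^{ij}_{2k,2l}$ between critical points. The odd-index coefficients $B^{ij}_{2k,2l+1}$ never enter, because the residue at $a_i$ sees only the $\sigma_i$-antiinvariant part in $q$.

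This is the standard locality statement of topological recursion. I would state it as a lemma proved by inspecting the recursion. It shows that the answer depends only on the data constrained by \eqref{eq:DOSS-B-R-compatibility} and \eqref{eq:DOSS-y-R-compatibility}.

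\textbf{Step 2: comparison with a model curve.} Under the two compatibility hypotheses, these universal data coincide with those of the local model of \cite{dunin-barkowskiIdentificationGiventalFormula2014}. That model is a disjoint union of $N$ Airy curves, twisted by the $R$-action and the dilaton shift $h$. Its recursion produces the Givental graph sum for $\widehat R\,Z_{\mathrm{WK}}^{\otimes N}$:
\begin{itemize}
\item the vertices are Witten--Kontsevich correlators;
\item the edges are weighted by $V_{kl}$ through \eqref{eq:givental-quadratic-kernel};
\item the dilaton leaves are given by $R(z)^{-1}h$;
\item the legs are $\Xi$ contracted with $R$.
\end{itemize}

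The original proof of DOSS assumed the stronger relation \eqref{eq:B-differential-compatibility}. That relation was used only to identify the legs with the $x$-derivative tower \eqref{eq:B-ancestor-differential-recursion}. Here we avoid it by defining the leg forms directly as $W=R\Xi$ in \eqref{eq:DOSS-ancestor-forms}. Each leg in the graph sum is then $R(z)$ applied to the generating series $\Xi(z,p)$, term by term. Multilinearity \eqref{eq:ancestor-correlator-components} converts the result into \eqref{eq:DOSS-ancestor-correspondence}. The prefactor $(\prod h_i)^{1/24}$ only affects $\omega_{1,0}$, which is excluded since $n\ge1$.

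\textbf{Main obstacle.} The hardest part is checking, following \cite{xiongEYNARDORANTINTOPOLOGICALRECURSION}, that nowhere in the DOSS graph-sum derivation is the stronger condition \eqref{eq:B-differential-compatibility} used, apart from the leg identification. Concretely, I would verify two things:
\begin{itemize}
\item the edge contributions use only \eqref{eq:DOSS-B-R-compatibility};
\item the identity \eqref{eq:B-differential-compatibility1}, which follows from it, is what is needed for the loop equations or for the Givental deformation argument.
\end{itemize}
I would organize this check by tracking, through the recursion, which coefficients $B^{ij}_{kl}$ can appear.
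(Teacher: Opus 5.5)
Your proposal is correct and follows essentially the paper's route: the paper does not reprove the identification but takes the DOSS graph-sum comparison as given, noting that it uses only the local data $y^i_{2k+1}$, $B^{ij}_{kl}$ and $W^i_k$. Following \cite{xiongEYNARDORANTINTOPOLOGICALRECURSION}, it observes that the stronger condition \eqref{eq:B-differential-compatibility} is needed only to match the legs with the $x$-derivative tower \eqref{eq:B-ancestor-differential-recursion}; this is bypassed by defining $W=R\Xi$, so that the Givental leg factor $R(z)^{-1}W(z,p)=\Xi(z,p)$ reproduces the leaves of the recursion directly, exactly as you propose.
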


Note the correspondence is established using only the local data
$y^i_{2k+1}$, $B^{ij}_{kl}$ and $W^i_k$,
without knowing or requiring that they actually come from a global spectral curve. 
In next section we will propose a method to construct a global spectral curve.

\section{Construction of the spectral curve}
\label{sec:spectral-construction}

Fix a tame semisimple point $v$ of a conformal
Frobenius manifold with flat unit. We evaluate the Frobenius data $R, \Phi, h$ at $v$ and
suppress this dependence from the notation. 

Throughout this section, tangent vectors are represented by their
column coordinates in the flat frame $(\partial_\alpha)$.
Thus
\begin{equation}
    f_i=(\Phi^1{}_i,\ldots,\Phi^N{}_i)^T
\end{equation}
denotes the column vector representing the normalized canonical vector $f_i$. 
The matrices $A,\theta$ act in the flat frame, whereas $R$ is expressed in the normalized canonical frame.

We first construct a connected and simply connected surface over the $x$-plane with
simple critical points of $x$-function above the prescribed values $u_i$.
The second structure equation \eqref{eq:frobenius-second structure} for the Frobenius manifold then supplies normalized periods
near these points. 
Their scalar primitives and associated local
bidifferentials have the coefficients required by the DOSS correspondence. We use Cauchy integrals to realize these coefficients
as global functions and bidifferentials on a suitable open subdomain of the surface. 

We then turn to an explicit construction using a common period.
Subsection~\ref{subsec:Dubrovin-superpotential}
discusses its simultaneous normalization and uniqueness,
and Subsection~\ref{subsec:common-period-spectral-curve}
constructs the resulting spectral data and identifies
their topological recursion.

\subsection{The covering surface and the function \texorpdfstring{$x$}{x}}
\label{subsec:covering-surface}

Regard $u_1,\ldots,u_N$ as marked points in the complex $x$-plane.
Choose pairwise disjoint disks around them and, in each disk,
a radial slit starting at $u_i$. Join the slit neighbourhoods by narrow planar corridors arranged along an embedded tree.

The slit neighbourhood at $u_i$ admits a square-root coordinate
\begin{equation}
    s_i=\sqrt{2(x-u_i)}.
\end{equation}
Adjoin the other half of a sufficiently small $s_i$-disk,
so that the surface contains a full neighbourhood of $s_i=0$.
The coordinate $x$ on the planar part extends across this disc by
\begin{equation}
    x=u_i+\frac{s_i^2}{2}.
    \label{eq:construction-local-x}
\end{equation}
Performing this attachment at every $u_i$ produces a connected
open Riemann surface over the $x$-plane. We use ``over the
$x$-plane'' to mean the resulting holomorphic map $x$ to the plane;
no properness or finite-sheeted covering of the entire plane is asserted.

\begin{figure}[htbp]
\centering
\begin{tikzpicture}[
    x=1cm,
    y=1cm,
    font=\small,
    >=Stealth,
    line cap=round,
    line join=round,
    branch point/.style={
        circle,
        fill=black,
        inner sep=1.4pt
    },
    map arrow/.style={
        ->,
        semithick
    }
]

\begin{scope}

    \node[font=\small\bfseries] at (1.8,1.65)
        {(a) The planar slit domain};

    \draw[
        draw=black!12,
        line width=11pt
    ]
        (0,0.35) -- (0,0.95)
        -- (3.6,0.95) -- (3.6,0.35);

    \draw[
        draw=black!12,
        line width=11pt
    ]
        (1.8,0.95) -- (1.8,0.35);

    \foreach \xx/\ii in {0/1,1.8/2,3.6/3}{
        \fill[black!12] (\xx,0) circle[radius=0.52];
        \draw[thin] (\xx,0) circle[radius=0.52];

        \draw[
            white,
            line width=3.5pt
        ] (\xx,0) -- (\xx,-0.55);

        \draw[thin] (\xx-0.035,-0.04)
            -- (\xx-0.035,-0.52);
        \draw[thin] (\xx+0.035,-0.04)
            -- (\xx+0.035,-0.52);

        \node[branch point] at (\xx,0) {};
        \node[above left=2pt] at (\xx,0)
            {$u_{\ii}$};
    }

    \node[align=center] at (1.8,-1.05)
        {Disjoint slit disks connected\\
         by narrow corridors along a tree};

    \node[above,fill=white,inner sep=1.5pt]
        at (1.8,0.98) {$x$-plane};

\end{scope}

\begin{scope}[shift={(6.0,0)}]

    \node[font=\small\bfseries] at (2.15,1.65)
        {(b) The local attachment};

    \fill[black!12] (0,0) circle[radius=0.72];
    \draw[thin] (0,0) circle[radius=0.72];

    \draw[white,line width=4pt]
        (0,0) -- (0.75,0);

    \draw[thin] (0.06,0.04) -- (0.71,0.04);
    \draw[thin] (0.06,-0.04) -- (0.71,-0.04);

    \node[branch point] at (0,0) {};
    \node[below left=2pt] at (0,0) {$u_i$};

    \node[above=2pt] at (0.49,0.06) {$+$};
    \node[below=2pt] at (0.49,-0.06) {$-$};

    \node at (0,-1.05) {Slit $x$-disk};

    \draw[map arrow]
        (0.95,0.28) -- (2.65,0.28)
        node[midway,above=5pt,align=center]
        {$s_i=\sqrt{2(x-u_i)}$};

    \begin{scope}[shift={(3.6,0)}]

        \fill[black!12]
            (-0.72,0)
            arc[start angle=180,end angle=0,radius=0.72]
            -- cycle;

        \draw[thin]
            (-0.72,0)
            arc[start angle=180,end angle=0,radius=0.72];

        \draw[thin,dashed]
            (0.72,0)
            arc[start angle=0,end angle=-180,radius=0.72];

        \draw[thin,densely dotted]
            (-0.72,0) -- (0.72,0);

        \node[branch point] at (0,0) {};
        \node[above=2pt] at (0,0) {$a_i$};

        \node[above=1pt] at (0.47,0) {$+$};
        \node[above=1pt] at (-0.47,0) {$-$};

        \node[font=\scriptsize] at (0,0.43)
            {};
        \node[font=\scriptsize] at (0,-0.40)
            {};

        \node at (0,-1.05) {Full $s_i$-disk};

    \end{scope}

    \draw[map arrow]
        (2.65,-0.30) -- (0.95,-0.30)
        node[midway,below=5pt]
        {$x=u_i+\dfrac{s_i^2}{2}$};

\end{scope}

\end{tikzpicture}

\caption{
Construction of the surface over the $x$-plane. 
(a) Slit neighbourhoods of the $u_i$ are joined by narrow corridors arranged along a tree.
(b) A chosen square-root branch identifies a slit neighbourhood with a half-disc. Adjoining the other half gives a full local coordinate disk centred at $a_i$, on which $x=u_i+s_i^2/2$.
}
\label{fig:covering-surface}
\end{figure}

The planar domain is illustrated
in Figure~\ref{fig:covering-surface}. We have the following lemma.

\begin{lemma}
\label{lem:covering-surface}
The disks and corridors can be chosen so that the resulting
surface $\Sigma$ is simply connected, $x$ is bounded, and its
critical points are precisely $a_1,\ldots,a_N$, where
\begin{equation}
    s_i(a_i)=0,
    \qquad x(a_i)=u_i.
\end{equation}
All these critical points are simple, and $\Sigma$ is
biholomorphic to the upper half-plane $\mathbb{H}$.
\end{lemma}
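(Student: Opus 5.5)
We build $\Sigma$ as an explicit union of simply connected pieces, glued along an embedded tree so that no cycles arise, and then apply the Riemann mapping theorem.

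\emph{Choosing the pieces.} Fix disjoint closed disks $D_i=\{|x-u_i|\le\rho\}$, with $\rho$ smaller than half the minimal distance $|u_i-u_j|$. In each $D_i$ choose a radial slit $L_i$ from $u_i$ to $\partial D_i$. Choose an embedded planar tree $T$ whose vertices are points on the circles $\partial D_i$ away from the slits, and whose edges are smooth arcs avoiding all the $\overline{D_j}$ except at their endpoints. A corridor is a thin open tubular neighbourhood $C_e$ of an edge $e$. Its width is taken small enough that $C_e$ meets only the two disks at its ends, each in a connected set away from the slits, and that distinct corridors are disjoint.

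\emph{The slit disk with its half-disk attached.} Consider $P_i=D_i^\circ\setminus L_i$. The branch $s_i=\sqrt{2(x-u_i)}$ maps $P_i$ biholomorphically onto an open half-disk (sector of angle $\pi$) of radius $\sqrt{2\rho}$. We adjoin the complementary half of the $s_i$-disk, gluing along the two edges of the slit, which correspond to the two boundary rays of the sector. The result $\widehat P_i$ is the full disk $|s_i|<\sqrt{2\rho}$, on which $x=u_i+s_i^2/2$. Its only critical point is $s_i=0$, where $x''=1\ne0$, so this critical point $a_i$ is simple. Note that $x$ is two-to-one here, so $\Sigma$ is not a planar domain; it maps to the plane only by $x$.

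\emph{Simple connectivity.} Define $\Sigma=\bigcup_i\widehat P_i\cup\bigcup_e C_e$, glued along the overlaps $C_e\cap P_i$. Each of these overlaps is connected and simply connected; it is a half-disk-like region, since the corridor is thin and transverse to the circle. Apply van Kampen inductively along the tree. At each step we glue a simply connected piece onto a simply connected surface along a single connected, simply connected overlap, so the result is again simply connected. Because $T$ is a tree, the pieces are never glued along two overlaps at once, and no loop is created.

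This overlap bookkeeping is the main point requiring care. A corridor must not intersect the slit, and it must not reenter a disk. Both conditions are guaranteed by the choice of a narrow width and by edges that meet the $\overline{D_j}$ only at their endpoints.

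\emph{The remaining properties.} The function $x$ is holomorphic and bounded, since its image lies in the bounded set $\bigcup D_i\cup\bigcup C_e$. On the corridors and on the planar parts of the $P_i$, $x$ is a local coordinate, so $dx\ne0$ there. Hence the critical points of $x$ are exactly the $a_i$, and $x(a_i)=u_i$. Finally, $\Sigma$ is a noncompact, simply connected Riemann surface. It is not $\mathbb C$, because the function $x$ is bounded and nonconstant, which Liouville's theorem forbids on $\mathbb C$. By uniformization, $\Sigma$ is therefore biholomorphic to the unit disk, equivalently to $\mathbb H$.
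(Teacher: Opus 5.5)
Your proposal is correct and follows the same construction as the paper: disjoint slit disks, attachment of the complementary half of the $s_i$-disk, and narrow corridors along a tree. The paper states the lemma immediately after describing this construction and gives no written proof, so your van Kampen induction along the tree and your Liouville-plus-uniformization step supply the details it leaves implicit. One small wording fix: impose the tree condition on the graph whose vertices are the disks $D_i$ and whose edges are the corridors (with distinct attachment points on each circle), not on a planar tree with vertices on the circles, since corridors sharing such a vertex could not be disjoint.
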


We fix this surface and its local coordinates $s_i$ near each critical point $a_i$.
In the subsequent construction, we may shrink the neighbourhoods of $a_i$ and connecting corridors. 

\subsection{Normalized periods and scalar primitives}
\label{subsec:normalized-periods}

The second structure equation
\begin{equation}
    (x\operatorname{Id}-A)\partial_xJ
    =\left(\theta-\frac12\operatorname{Id}\right)J
    \label{eq:construction-period-equation}
\end{equation}
has regular singularities at the canonical values; see
\cite{dubrovinGeometry2DTopological1996,dubrovinPainleveTranscendentsTwoDimensional1999,milanovPeriodMapQuantum2019}.
The following normalization selects the period associated with each primitive idempotent $\pi_i$.

\begin{proposition}
\label{prop:normalized-periods}
For each $i$, there exists a unique odd solution $J_i(s_i):=J_i(x(s_i))$ defined on a punctured disk around $s_i=0$ such that
\begin{equation}
    J_i(s_i)=\frac{f_i}{s_i}+O(s_i).
\end{equation}
Its convergent expansion is
\begin{equation}
    J_i(s_i)=\sum_{m\ge0}
    \frac{(-1)^m}{(2m-1)!!}\,
    f_j(R_m)^j{}_i\,s_i^{2m-1}.
    \label{eq:construction-period-expansion}
\end{equation}
The scalar function
\begin{equation}
    Y_i(s_i)=\int_0^{s_i}\eta(e,J_i(\sigma))\,\sigma\,d\sigma
    \label{eq:construction-scalar-primitive}
\end{equation}
is holomorphic and odd, and satisfies
\begin{equation}
    \partial_xY_i=\eta(e,J_i),
    \qquad Y_i(0)=0.
\end{equation}
Moreover,
\begin{equation}
    Y_i(s_i)=\sum_{m\ge0}
    [z^m]\bigl(R(z)^{-1}h\bigr)_i
    \frac{s_i^{2m+1}}{(2m+1)!!}.
    \label{eq:construction-unit-expansion}
\end{equation}
In particular, $Y_i(s_i)=h_i s_i+O(s_i^3)$.
\end{proposition}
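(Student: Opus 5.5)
The plan is to pass to the normalized canonical frame and solve the second structure equation by a Frobenius-type ansatz in the coordinate $s_i$. The resulting coefficient recursion should coincide exactly with the recursion \eqref{eq:frobenius-R-recursion} for the homogeneous $R$-matrix.

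\emph{Reduction to a recursion.} Write $J=\Phi G$ with $G$ a column vector in the normalized canonical frame. Since $A\Phi=\Phi U$ and $\Theta=\Phi^{-1}\theta\Phi$, equation \eqref{eq:construction-period-equation} becomes
\begin{equation*}
(x-U)\,\partial_xG=\left(\Theta-\tfrac12\right)G .
\end{equation*}
Put $x=u_i+s^2/2$, so that $\partial_x=s^{-1}\partial_s$ and $x-U=(u_i-U)+s^2/2$. For an odd Laurent series $G=\sum_{m\ge0}c_m s^{2m-1}$:
\begin{itemize}
\item the coefficient of $s^{-3}$ gives $(u_i-U)c_0=0$;
\item the coefficient of $s^{2m-1}$ gives
\begin{equation*}
(2m+1)(u_i-U)\,c_{m+1}=(\Theta-m)\,c_m,\qquad m\ge0 .
\end{equation*}
\end{itemize}
By tameness, the first condition forces $c_0\in\mathbb C e_i$, and the normalization $J_i=f_i/s_i+O(s_i)$ gives $c_0=e_i$. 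Now insert $c_m=\frac{(-1)^m}{(2m-1)!!}R_me_i$. Using $[U,R_{m+1}]e_i=(U-u_i)R_{m+1}e_i$ together with $(2m+1)/(2m+1)!!=1/(2m-1)!!$, the recursion becomes the $i$-th column of \eqref{eq:frobenius-R-recursion}. Hence this $G$ is a formal solution, and $\Phi G$ is the expansion \eqref{eq:construction-period-expansion}.

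\emph{Uniqueness.} Let $D=\sum_m d_m s^{2m-1}$ be the difference of two such odd solutions, so $d_0=0$. Suppose $d_0=\dots=d_m=0$.
\begin{itemize}
\item The level-$m$ equation gives $(u_i-U)d_{m+1}=0$. By tameness, only the $i$-th component of $d_{m+1}$ can be nonzero.
\item The $i$-th row of the level-$(m+1)$ equation has vanishing left side. Since $\Theta_{ii}=0$ by skew-symmetry and the off-diagonal entries of $d_{m+1}$ vanish, it reads $-(m+1)(d_{m+1})_i=0$.
\end{itemize}
Thus $d_{m+1}=0$, and induction gives $D=0$. The same triangular structure shows the formal recursion determines each $c_m$ uniquely.

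\emph{Convergence (expected main obstacle).} The series is only produced formally, so convergence is the point needing care. I would argue as follows.
\begin{itemize}
\item Write $G=(x-u_i)^{-1/2}\,\widetilde G(x-u_i)$ with $\widetilde G$ a formal power series in $x-u_i$.
\item The system has a regular (Fuchsian) singularity at $x=u_i$ and is holomorphic on the disk $|x-u_i|<\min_{j\ne i}|u_j-u_i|$.
\item By the standard theorem for Fuchsian systems, a formal solution of the form $(x-u_i)^{\lambda}$ times a power series converges on that disk. If I prefer a direct argument, I would bound the coefficients from the recursion by a majorant after multiplying through by $(u_i-U)^{-1}$ off the $i$-th component.
\end{itemize}
This gives a genuine odd solution on a punctured $s_i$-disk.

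\emph{The scalar primitive.} The function $\sigma\,\eta(e,J_i(\sigma))$ is even and holomorphic, since the $s^{-1}$ pole is cancelled by $\sigma$. Hence $Y_i$ is holomorphic and odd, with $Y_i(0)=0$, and $\partial_xY_i=s^{-1}\partial_sY_i=\eta(e,J_i)$. For the expansion, use $\eta(e,f_j)=h_j$ to get
\begin{equation*}
\sigma\,\eta(e,J_i)=\sum_m\frac{(-1)^m}{(2m-1)!!}\,(h^TR_m)_i\,\sigma^{2m}.
\end{equation*}
Integrating term by term gives coefficients $\frac{(-1)^m}{(2m+1)!!}(R_m^Th)_i$. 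These equal $\frac{1}{(2m+1)!!}[z^m](R(-z)^Th)_i=\frac{1}{(2m+1)!!}[z^m](R(z)^{-1}h)_i$ by symplecticity \eqref{eq:frobenius-R-normalization}. This is \eqref{eq:construction-unit-expansion}, and its $m=0$ term gives $Y_i=h_is_i+O(s_i^3)$.
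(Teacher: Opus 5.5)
Your proposal is correct and follows essentially the paper's route: you pass to the normalized canonical frame, derive the coefficient recursion in $s_i$, recognize it as the $i$-th column of $[U,R_{k+1}]=(\Theta-k)R_k$, rely on regular-singular (Fuchsian) theory for convergence, and then integrate termwise using $R(z)^{-1}=R(-z)^T$. The only difference is that you prove uniqueness by an explicit triangular induction using $\Theta_{ii}=0$ and obtain existence from convergence of the formal series. The paper instead reads both existence and uniqueness off the eigenvalue structure ($-1/2$ and $0$) of the residue matrix at $x=u_i$.
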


\begin{proof}
In the normalized canonical frame, the equation becomes
\begin{equation}
    \partial_x(\Phi^{-1}J)
    =(x\operatorname{Id}-U)^{-1}
    \left(\Theta-\frac12\operatorname{Id}\right)\Phi^{-1}J.
\end{equation}
The residue of the right-hand side at $x=u_i$ has only its $i$-th row nonzero.
Since $\Theta_{ii}=0$, its eigenvalues are $-1/2$ and $0$,
and its $-1/2$-eigenspace is the $i$-th coordinate vector.
The regular singularity gives a unique solution
of the form $(x-u_i)^{-1/2}$ times a holomorphic vector with
leading coefficient $f_i/\sqrt{2}$ in the flat frame; see \cite{dubrovinPainleveTranscendentsTwoDimensional1999}.
This is the required odd solution in $s_i$.

To identify its coefficients, write
\begin{equation*}
    J_i(s_i)=\sum_{m\ge0}c_m^i s_i^{2m-1}.
\end{equation*}
After the coordinate transformation $x=u_i+s_i^2/2$, the equation \eqref{eq:construction-period-equation} becomes
\begin{equation}
    \left(u_i\operatorname{Id}-A+\frac{s_i^2}{2}\operatorname{Id}\right)
    \partial_{s_i}J_i
    =\bigl(\theta-\frac{1}{2}\operatorname{Id}\bigr)s_iJ_i.
    \label{eq:construction-period-equation-s}
\end{equation}
Substitution into this equation gives
\begin{equation}
    (2m-1)(u_i\operatorname{Id}-A)c_m^i
    =\bigl(\theta-(m-1)\operatorname{Id}\bigr)c_{m-1}^i,
    \qquad m\ge1.
\end{equation}
By \eqref{eq:frobenius-R-recursion}, the formal coefficients
\begin{equation*}
    c_m^i=\frac{(-1)^m}{(2m-1)!!}f_a(R_m)^a{}_i
\end{equation*}
satisfy this relation and have the prescribed leading term. Recall that here $f_a=(\Phi^{\alpha}{}_{a})_{\alpha=1}^N$ is viewed as a column vector.
Uniqueness of the regular singularity expansion identifies them
with the coefficients of the convergent solution.
This proves \eqref{eq:construction-period-expansion}.

Since $\sigma J_i(\sigma)$ is holomorphic and even,
\eqref{eq:construction-scalar-primitive} is holomorphic and odd.
The identity $dx=s_i\,ds_i$ gives its $x$-derivative.
Finally,
\begin{equation}
    \eta(e,J_i(s_i))
    =\sum_{m\ge0}
    \frac{(-1)^m h_a(R_m)^a{}_i}{(2m-1)!!}\,s_i^{2m-1}.
\end{equation}
Using $R(z)^{-1}=R(-z)^T$ and integrating term by term proves
\eqref{eq:construction-unit-expansion}.
\end{proof}

\subsection{The canonical local bidifferential}
\label{subsec:canonical-local-kernel}

Choose the branch neighbourhoods of $a_i$ small enough that their
$x$-images are disjoint. For points in the $i$-th and $j$-th
neighbourhoods, write
\begin{equation*}
    x_1=u_i+\frac{s^2}{2},
    \qquad x_2=u_j+\frac{t^2}{2}.
\end{equation*}
Define the local period bidifferential
\begin{equation}
\begin{aligned}
    B^{\mathrm{local}}_{ij}(s,t)
    ={}&\frac{1}{(x_1-x_2)^2}
    \Bigg\{\frac{\delta_{ij}}2
    +\eta\!\left(J_i(s),
    \left[(x_1-x_2)\theta
    +\frac{x_1+x_2}{2}\operatorname{Id}-A\right]J_j(t)\right)
    \Bigg\}\,dx_1\,dx_2.
\end{aligned}
\label{eq:construction-canonical-kernel}
\end{equation}
We verify its normalization and coefficients directly.

\begin{proposition}
\label{prop:canonical-local-kernel}
The bidifferentials $B^{\mathrm{local}}_{ij}$ are symmetric.
For $i\ne j$, they are holomorphic near $(s,t)=(0,0)$.
For $i=j$, their difference from $ds\,dt/(s-t)^2$ is holomorphic.
If we expand the bidifferentials as
\begin{equation}
    B^{\mathrm{local}}_{ij}(s,t)
    =\delta_{ij}\frac{ds\,dt}{(s-t)^2}
    +\sum_{k,l\ge0}B_{kl}^{ij}s^kt^l\,ds\,dt,
\end{equation}
then, for all $k\ge0$ and $i,j=1,\ldots,N$,
\begin{equation}
    (2k-1)!!B_{0,2k}^{ij}=(-1)^k(R_{k+1})^i{}_j, \qquad
    B_{0,2k+1}^{ij}=0,
\end{equation}
and
\begin{equation}
    (m+1)B_{m+2,l}^{ij}+(l+1)B_{m,l+2}^{ij}
    +\sum_{a=1}^{N}B_{0,m}^{ai}B_{0,l}^{aj}=0.
    \label{eq:construction-full-local-compatibility}
\end{equation}
In particular,
\begin{equation}
    \sum_{k,l\ge0}(2k-1)!!(2l-1)!!B_{2k,2l}^{ij}z^kw^l
    =\left[
    \frac{\operatorname{Id}-R(z)^{-1}R(w)^{-T}}{z+w}
    \right]_{ij}.
    \label{eq:construction-local-edge-identity}
\end{equation}
\end{proposition}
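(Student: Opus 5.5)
The plan is to reduce everything to two facts about the periods $J_i$: the ODE \eqref{eq:construction-period-equation}, and the constancy of the flat-pencil pairing $\langle J,K\rangle_x=\eta(J,(A-x\operatorname{Id})K)$ on pairs of solutions. Write the bracket in \eqref{eq:construction-canonical-kernel} as $N_{ij}(s,t)$, so that $B^{\mathrm{local}}_{ij}=N_{ij}\,dx_1dx_2/(x_1-x_2)^2$.

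\emph{Symmetry.} Exchanging $(i,s)\leftrightarrow(j,t)$ flips the sign of $x_1-x_2$. Since $\theta$ is $\eta$-skew and $A$ is $\eta$-symmetric by \eqref{eq:frobenius-adjointness}, the operator $(x_1-x_2)\theta+\frac{x_1+x_2}{2}\operatorname{Id}-A$ is carried to its $\eta$-adjoint. Hence $N_{ij}(s,t)=N_{ji}(t,s)$.

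\emph{Case $i\neq j$.} For small $s,t$ the values $x_1$ and $x_2$ lie in disjoint discs, so $(x_1-x_2)^{-2}$ is holomorphic. The only singularity of $N_{ij}$ is the factor $1/(st)$ coming from $J_i(s)J_j(t)$, and it is cancelled by $dx_1dx_2=st\,ds\,dt$.

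\emph{Case $i=j$.} Using $x_1-x_2=\frac12(s-t)(s+t)$, we get
\[
\frac{dx_1dx_2}{(x_1-x_2)^2}=\frac{4st\,ds\,dt}{(s-t)^2(s+t)^2}.
\]
So we must control $N_{ii}$ along both lines $t=s$ and $t=-s$.
\begin{itemize}
\item First compute $\langle J_i,J_i\rangle_x$ from the leading term $f_i/s$ and $(A-x)f_i=-\tfrac{s^2}{2}f_i$. This gives $\langle J_i,J_i\rangle_x=-\tfrac12$ identically, by constancy.
\item On $t=-s$ we have $J_i(t)=-J_i(s)$, and $N_{ii}$ vanishes there.
\item Next show that the first derivative of $N_{ii}$ transverse to $t=-s$ also vanishes. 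For this, differentiate in $x_2$ at $x_2=x_1$ and substitute $\partial_xJ=(x-A)^{-1}(\theta-\tfrac12)J$. I expect the $\theta$-term and the $\tfrac12\operatorname{Id}$-term to combine into $\tfrac{d}{dx}\langle J_i,J_i\rangle_x=0$. This makes $N_{ii}$ divisible by $(s+t)^2$.
\item On $t=s$ we have $N_{ii}=\tfrac12+\tfrac12=1$. The prefactor then gives $ds\,dt/(s-t)^2$ plus a holomorphic remainder.
\end{itemize}
I expect this second-order vanishing on the antidiagonal to be the main obstacle. It needs a careful use of the ODE at coinciding $x$-values on different sheets.

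\emph{Coefficients.} I would substitute the expansion \eqref{eq:construction-period-expansion} and set $s=0$, keeping only the terms that survive. At $s=0$ only the leading coefficient $f_i/s$ of $J_i(s)$, and the $O(s)$ correction, contribute. The $t$-expansion then involves $\eta(f_i,\cdot)$ applied to $J_j(t)$ and $\theta J_j(t)$. This should give the following.
\begin{itemize}
\item Oddness of $J_j$ forces $B^{ij}_{0,2k+1}=0$.
\item The even coefficients are expressible through $(R_{k+1})^i{}_j$ after using the recursion \eqref{eq:frobenius-R-recursion} to trade $\theta$ and $A$ for shifts of $R$. This yields $(2k-1)!!B^{ij}_{0,2k}=(-1)^k(R_{k+1})^i{}_j$.
\end{itemize}

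\emph{The identity \eqref{eq:construction-full-local-compatibility}.} I would prove it as a PDE for the regular part $\widetilde B_{ij}$ of $B^{\mathrm{local}}_{ij}/(ds\,dt)$. The first two terms in \eqref{eq:construction-full-local-compatibility} are the coefficients of $\bigl(\tfrac1s\partial_s\bigr)$-type and $\bigl(\tfrac1t\partial_t\bigr)$-type operators, and the third is a product of the functions $\Xi_0$. So I aim to establish
\[
\Bigl(\partial_{x_1}+\partial_{x_2}\Bigr)\frac{B^{\mathrm{local}}_{ij}}{dx_1dx_2}
=-\sum_a\frac{B^{\mathrm{local}}_{ai}(0,s)}{ds}\,\frac{B^{\mathrm{local}}_{aj}(0,t)}{dt}\cdot(\text{Jacobian factors}).
\]
Since $x_1-x_2$ is invariant under simultaneous translation, the left side only sees derivatives of $J$, which the ODE rewrites. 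The right side should arise from $\sum_a f_af_a^T=\eta^{-1}$, i.e.\ a completeness relation in the normalized frame. The singular diagonal term is annihilated by the operator, so the identity is purely about the regular part.

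Finally, \eqref{eq:construction-local-edge-identity} follows from the even case of \eqref{eq:construction-full-local-compatibility} together with the formula for $B^{ij}_{0,2k}$. The even case determines all $B^{ij}_{2k,2l}$ recursively from the boundary values $B^{ij}_{0,2k}$, and the right side of \eqref{eq:construction-local-edge-identity} satisfies the same recursion and boundary values by symplecticity $R(z)^{-1}=R(-z)^T$.
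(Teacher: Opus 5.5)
Your route is essentially the paper's. Symmetry comes from $A^*=A$ and $\theta^*=-\theta$. The diagonal pole is controlled by the pairing on the two lines $t=\pm s$. The compatibility identity comes from the translation identity, which in $b_{ij}=B^{\mathrm{local}}_{ij}/(ds\,dt)$ reads $\partial_s(b_{ij}/s)+\partial_t(b_{ij}/t)=-\eta(dJ_i/ds,dJ_j/dt)$, combined with $\sum_a f_af_a^T=\eta^{-1}$.

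The one difference is how you get $B^{ij}_{0,2k}$: you substitute $s=0$ directly, while the paper reads these coefficients off the $s^{-2}t^l$ terms of the same identity. Your route works, but it needs two relations: the off-diagonal one $(u_i-u_j)(R_{k+1})^i{}_j=((\Theta-k)R_k)^i{}_j$, and for $i=j$ the diagonal one $(\Theta R_{k+1})^i{}_i=(k+1)(R_{k+1})^i{}_i$. It is shorter to note that the period equation gives $b_{aj}(0,t)=-\eta(f_a,dJ_j/dt)$ in one line. That is exactly what your completeness step needs.

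Two justifications need repair.

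First, your claim that $\partial_{x_1}+\partial_{x_2}$ annihilates the diagonal singular term is false. The form $ds\,dt/(s-t)^2$ is not invariant under simultaneous $x$-translation; only $dx_1dx_2/(x_1-x_2)^2$ is. In fact $\partial_s\bigl(\frac{1}{s(s-t)^2}\bigr)+\partial_t\bigl(\frac{1}{t(s-t)^2}\bigr)=-\frac{1}{s^2t^2}$. This term is cancelled by the $a=i=j$ product of singular parts on your right-hand side, and it affects only the $s^{-2}t^{-2}$ coefficient. So the identity for $m,l\ge0$ survives, but not for the reason you give.

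Second, on $t=s$ the value $N_{ii}=1$ alone does not exclude a simple pole $c(s)/(s-t)$. Note that it is $st\,N_{ii}$, not $N_{ii}$, that is holomorphic and divisible by $(s+t)^2$. Writing $st\,N_{ii}=(s+t)^2Q$, you need $4Q-1=O\bigl((s-t)^2\bigr)$, not merely $4Q(s,s)=1$. This follows from the symmetry $Q(s,t)=Q(t,s)$ you already proved, because a symmetric holomorphic function vanishing on the diagonal vanishes there to second order. Alternatively, check $\partial_tN_{ii}(s,s)=0$, as the paper does.

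Finally, the derivative step on $t=-s$ is simpler than you expect. At $x_2=x_1$ the relevant vector $(\tfrac12-\theta)J_i(t)+(x_2-A)\partial_{x_2}J_i(t)$ vanishes identically by the period equation.
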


\begin{proof}
The relations $A^*=A$ and $\theta^*=-\theta$ show symmetry.
For $i\ne j$, the denominator is nonzero on sufficiently small
neighbourhoods. Multiplication by $dx_1dx_2=st\,ds\,dt$
removes the apparent singularities of $J_i(s)$ and $J_j(t)$.

For the diagonal case, fix $i=j$.
First, any two solutions $J,K$ of
\eqref{eq:construction-period-equation} satisfy
\begin{equation}
\begin{aligned}
    \partial_x\eta\bigl(J,(x\operatorname{Id}-A)K\bigr)
    ={}&
    \eta\bigl((\theta-\tfrac12\operatorname{Id})J,K\bigr)
    +\eta(J,K)\\
    &+\eta\bigl(J,(\theta-\tfrac12\operatorname{Id})K\bigr)
    =0.
\end{aligned}
\end{equation}
In particular,
$\eta(J_i,(x\operatorname{Id}-A)J_i)$ is constant.
Since
\begin{equation*}
    J_i(s)=\frac{f_i}{s}+O(s),
    \qquad
    Af_i=u_i f_i,
    \qquad
    \eta(f_i,f_i)=1,
\end{equation*}
it gives
\begin{equation*}
    \eta\bigl(J_i(s),(x\operatorname{Id}-A)J_i(s)\bigr)
    =\frac12+O(s^2).
\end{equation*}
Taking the limit as $s\to0$ therefore yields
\begin{equation}
    \eta\bigl(J_i(s),(x\operatorname{Id}-A)J_i(s)\bigr)
    =\frac12.
    \label{eq:construction-period-pairing}
\end{equation}

Within this proof, put
\begin{equation*}
    M(x_1,x_2)
    =(x_1-x_2)\theta
    +\frac{x_1+x_2}{2}\operatorname{Id}-A,
\end{equation*}
and, initially for $st\ne0$, set
\begin{equation*}
    N(s,t)=\eta\bigl(J_i(s),M(x_1,x_2)J_i(t)\bigr).
\end{equation*}
On $t=s$, equation \eqref{eq:construction-period-pairing}
gives $N(s,s)=1/2$.
On $t=-s$, oddness of $J_i$ gives 
\begin{equation}
    N(s,s)=\frac{1}{2},\qquad N(s,-s)=-\frac12.
    \label{eq:construction-diagonal-pairing-values}
\end{equation}

We also need the first derivatives.
Differentiating with respect to $t$ while keeping $s$ fixed,
and using $\partial_t=t\partial_{x_2}$,
we obtain
\begin{equation}
\begin{aligned}
    \partial_tN(s,t)
    =t\,\eta\Bigl(
    J_i(s),\,
    \bigl(-\theta+\tfrac12\operatorname{Id}\bigr)J_i(t)
    +M(x_1,x_2)\partial_{x_2}J_i(t)
    \Bigr).
\end{aligned}
\end{equation}
At either $t=s$ or $t=-s$, we have $x_2=x_1$ and
$M(x_1,x_2)=x_2\operatorname{Id}-A$.
The second structure equation consequently gives
\begin{equation}
    \partial_tN(s,s)=0,
    \qquad
    \partial_tN(s,-s)=0.
    \label{eq:construction-diagonal-pairing-derivatives}
\end{equation}

Define the holomorphic function
\begin{equation}
    K(s,t)
    =\frac{st}{2}
    +\eta\bigl(sJ_i(s),M(x_1,x_2)tJ_i(t)\bigr).
\end{equation}
Its holomorphicity follows from that of $sJ_i(s)$ and
$tJ_i(t)$.
For $st\ne0$, it equals $st(1/2+N(s,t))$.
Thus \eqref{eq:construction-diagonal-pairing-values} and
\eqref{eq:construction-diagonal-pairing-derivatives} imply
\begin{equation}
\begin{aligned}
    K(s,s)&=s^2,
    &\qquad \partial_tK(s,s)&=s,\\
    K(s,-s)&=0,
    &\qquad \partial_tK(s,-s)&=0.
\end{aligned}
\end{equation}

Since $dx_1dx_2=st\,ds\,dt$ and
$x_1-x_2=(s^2-t^2)/2$, we have
\begin{equation}
    B^{\mathrm{local}}_{ii}(s,t)
    -\frac{ds\,dt}{(s-t)^2}
    =
    \frac{4K(s,t)-(s+t)^2}
    {(s-t)^2(s+t)^2}\,ds\,dt.
    \label{eq:construction-Cauchy-difference}
\end{equation}
Let
\begin{equation*}
    F(s,t)=4K(s,t)-(s+t)^2.
\end{equation*}
The preceding identities give
\begin{equation}
\begin{aligned}
    F(s,s)&=0,
    &\qquad \partial_tF(s,s)&=0,\\
    F(s,-s)&=0,
    &\qquad \partial_tF(s,-s)&=0.
\end{aligned}
\end{equation}

We obtain a holomorphic factorization
\begin{equation}
    F(s,t)=(t-s)^2(t+s)^2H(s,t),
\end{equation}
with $H$ holomorphic on a neighbourhood of $(0,0)$.
Substitution into \eqref{eq:construction-Cauchy-difference}
proves
\begin{equation}
    B^{\mathrm{local}}_{ii}(s,t)
    =\frac{ds\,dt}{(s-t)^2}+H(s,t)\,ds\,dt.
\end{equation}
Hence the diagonal principal part is exactly the standard kernel. 

To determine the coefficients, we differentiate directly and use the second structure equation to obtain
\begin{equation}
    \left(\mathcal L_{\partial_{x_1}}^{(1)}
    +\mathcal L_{\partial_{x_2}}^{(2)}\right)B^{\mathrm{local}}_{ij}
    =-\eta(\partial_{x_1}J_i,\partial_{x_2}J_j)\,dx_1dx_2,
    \label{eq:construction-phase-derivative}
\end{equation}
where $\mathcal L_{\partial_{x_i}}^{(i)}$ denotes the Lie derivative with respect to $\partial_{x_i}$ on the i-th factor.
The Lie derivatives include the differentials.

Write
$B^{\mathrm{local}}_{ij}=b_{ij}(s,t)\,ds\,dt$.
Since
    $\partial_{x_1}=s^{-1}\partial_s$,
    and
    $\partial_{x_2}=t^{-1}\partial_t$,
the Lie derivatives give 
\begin{equation}
\begin{aligned}
    \mathcal L_{\partial_{x_1}}^{(1)}
    \bigl(b_{ij}(s,t)\,ds\,dt\bigr)
    &=
    \partial_s\left(\frac{b_{ij}(s,t)}s\right)ds\,dt,\\
    \mathcal L_{\partial_{x_2}}^{(2)}
    \bigl(b_{ij}(s,t)\,ds\,dt\bigr)
    &=
    \partial_t\left(\frac{b_{ij}(s,t)}t\right)ds\,dt.
\end{aligned}
\end{equation}
Consequently, \eqref{eq:construction-phase-derivative} becomes
\begin{equation}
    \partial_s\left(\frac{b_{ij}(s,t)}s\right)
    +
    \partial_t\left(\frac{b_{ij}(s,t)}t\right)
    =
    -\eta\left(\frac{dJ_i}{ds},\frac{dJ_j}{dt}\right).
    \label{eq:construction-phase-derivative-local}
\end{equation}
Here
\begin{equation*}
    b_{ij}(s,t)=\delta_{ij}\frac1{(s-t)^2}
    +\sum_{m,l\ge0}B_{ml}^{ij}s^mt^l.
\end{equation*}

Write
\begin{equation*}
    \frac{dJ_i}{ds}=-\frac{f_i}{s^2}
    +\sum_{m\ge0}C_m^i s^m.
\end{equation*}
In square-root coordinates, the principal part satisfies
\begin{equation*}
    \partial_s\frac1{s(s-t)^2}
    +\partial_t\frac1{t(s-t)^2}
    =-\frac1{s^2t^2}.
\end{equation*}
Comparing the coefficients of $s^{-2}t^l$ and $s^mt^l$ in
\eqref{eq:construction-phase-derivative-local} gives
\begin{equation}
    -B_{0,l}^{ij}=\eta(f_i,C_l^j),
    \qquad
    (m+1)B_{m+2,l}^{ij}+(l+1)B_{m,l+2}^{ij}
    =-\eta(C_m^i,C_l^j).
\end{equation}
By the expansion of $J_i$ in \eqref{eq:construction-period-expansion}, we have
\begin{equation}
    C_{2m}^i=\frac{(-1)^{m+1}}{(2m-1)!!}f_a(R_{m+1})^a{}_i,\qquad C_{2m+1}^i=0, \qquad m\ge0.
\end{equation}
Hence
\begin{equation*}
    (2k-1)!!B_{0,2k}^{ij}=(-1)^k(R_{k+1})^i{}_j, \qquad
    B_{0,2k+1}^{ij}=0, \qquad k\ge0.
\end{equation*}
\begin{equation*}
    (m+1)B_{m+2,l}^{ij}+(l+1)B_{m,l+2}^{ij}
    +\sum_{a=1}^{N}B_{0,m}^{ai}B_{0,l}^{aj}=0.
\end{equation*}
This proves \eqref{eq:construction-local-edge-identity}.
\end{proof}

The bidifferential $B^{\mathrm{local}}_{ij}$ is only defined for the product of the local disks. In Subsection~\ref{subsec:global-spectral-data}, we construct a global bidifferential on $\Sigma$ whose even--even regular coefficients agree with those of these local kernels.

\subsection{Global construction of \texorpdfstring{$y$ and $B$}{y and B}}
\label{subsec:global-spectral-data}

We now shrink the surface of Lemma~\ref{lem:covering-surface}. The scalar function $y$ is constructed first; the bidifferential $B$
is then obtained by correcting a reference bidifferential.

\begin{proposition}
\label{prop:global-scalar}
The surface $\Sigma$ can be chosen to carry a holomorphic function $y$ with
\begin{equation}
    \frac{y(s_i)-y(-s_i)}2=Y_i(s_i).
    \label{eq:construction-global-y}
\end{equation}
In particular, $dy(a_i)=h_i\,ds_i\ne0$.
\end{proposition}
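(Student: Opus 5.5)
We need a holomorphic function $y$ on a (shrunk) simply connected surface $\Sigma$ whose odd part at each $a_i$ equals the given convergent odd series $Y_i(s_i)$ from Proposition~\ref{prop:normalized-periods}. The plan is to build $y$ directly rather than by analytic continuation of $Y_i$, exploiting the freedom that only the odd parts are prescribed.

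\emph{Step 1 (local data).} First shrink the branch disks $D_i=\{|s_i|<\rho\}$ around the $a_i$ so that $Y_i$ converges on a neighbourhood of $\overline{D_i}$. By Proposition~\ref{prop:normalized-periods} each $Y_i$ is holomorphic and odd there. The required local model is therefore $Y_i$ itself, and any function $g_i$ with $g_i(s_i)-Y_i(s_i)$ even in $s_i$ (a holomorphic function of $x$ near $u_i$) is equally acceptable.

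\emph{Step 2 (gluing by Cauchy integrals).} Since $\Sigma$ is biholomorphic to $\mathbb H$ by Lemma~\ref{lem:covering-surface}, work in a global coordinate $\zeta$ on $\Sigma$, in which the closed disks $\overline{D_i}$ are disjoint compact sets. Let $\phi$ be the function equal to $Y_i$ on a neighbourhood of each $\overline{D_i}$. One option is a Runge-type approximation: approximate $\phi$ uniformly on $K=\bigcup_i\overline{D_i}$ by a function holomorphic on all of $\Sigma$, then correct the odd parts exactly. Approximation alone does not give exactness, however. A cleaner route, in the spirit announced in Subsection~\ref{subsec:global-spectral-data}, is to shrink the surface itself. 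Replace $\Sigma$ by the union of the disks $D_i$ and very thin corridors (simply connected, along the tree), and define $y$ by a Cauchy integral
\[
y(p)=\frac{1}{2\pi\mathrm i}\oint_{\gamma}\frac{\tilde\phi(\zeta')\,d\zeta'}{\zeta'-\zeta(p)} .
\]
Here $\tilde\phi$ is a holomorphic extension of the local data along the corridors. Since the domain is simply connected and the corridors are thin, the simplest choice is to extend $Y_i$ from $D_i$ into an initial segment of its corridor by $Y_i$ itself, which is possible after shrinking. The two extensions, coming from the two ends of each corridor, are then patched in the middle of the corridor by a $\bar\partial$-correction (Cousin problem) on the planar simply connected surface.

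\emph{Step 3 (exactness of the odd parts).} Concretely, choose a partition of unity $\chi$ with $\chi\equiv1$ near each $\overline{D_i}$ and $\chi$ supported in the disks plus small collars. Solve $\bar\partial v=\bar\partial(\chi\phi)$ on $\Sigma$, which is possible since $\Sigma$ is a noncompact Riemann surface. Then $y_0=\chi\phi-v$ is holomorphic, and near $a_i$ the difference $y_0-Y_i=-v$ is holomorphic but has an uncontrolled odd part. To repair this, add a holomorphic function $w$ on $\Sigma$ whose odd part at each $a_i$ equals $\frac{v(s_i)-v(-s_i)}{2}$. This is again an interpolation problem for prescribed jets, now of infinite order, at finitely many points. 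It is solved by the same $\bar\partial$ trick with $\phi$ replaced by the local functions $v$ near $a_i$: set $w=\chi' v-v'$ where $\bar\partial v'=\bar\partial(\chi' v)$, with $\chi'\equiv1$ near the $a_i$. This appears circular, so instead take $y=\chi\phi-v$ and arrange $v$ to vanish identically near each $a_i$. That is achieved by solving $\bar\partial v=\bar\partial(\chi\phi)$ on $\Sigma\setminus\bigcup_i\overline{D_i'}$ for smaller disks $D_i'$, which works because $\bar\partial(\chi\phi)$ is supported away from the $D_i'$. The delicate point is then extending $v$ holomorphically across $D_i'$. This is where shrinking $\Sigma$ helps: we take $\Sigma$ to be exactly the disks plus corridors on which the corridor data can be matched by hand.

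\emph{Step 4 (main obstacle).} The real difficulty is that the $Y_i$ from different critical points need not agree where corridors meet. The resolution I expect to use is to prescribe $y$ only through odd parts: on the corridor between $D_i$ and $D_j$, interpolate $Y_i$ and $Y_j$ using a function of $x$ alone. This is allowed because adding a function of $x$ does not alter odd parts, and since $x$ is a local coordinate on the corridors, a Cousin-I decomposition in the $x$-variable on the simply connected planar corridor tree gives the holomorphic splitting. Once $y$ exists, the last claim follows from $Y_i=h_is_i+O(s_i^3)$: the even part contributes nothing to $dy(a_i)$, so $dy(a_i)=h_i\,ds_i$, and $h_i\ne0$.
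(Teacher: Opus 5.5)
Your argument is a valid proof by a different route from the paper's, once Steps 2 and 4 are combined. Discard the dead ends first. The Cauchy integral in $\zeta$ adds nothing beyond the patching that defines $\tilde\phi$, and the $\bar\partial$-problem on $\Sigma$ in Step 3 is the circular version you already flagged.

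Made precise, the argument runs as follows.

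\begin{enumerate}
\item Let $\Delta_i=x(D_i)$ be the full $x$-disk about $u_i$, and let $V_i\subset x(\Sigma)$ be $\Delta_i$ together with the half-corridors leaving it. After shrinking, $Y_i(\sqrt{2(x-u_i)})$ is a single-valued holomorphic function on those half-corridors.
\item Solve the additive Cousin problem $Y_j-Y_i=\tilde g_i-\tilde g_j$ on the corridor midpoints $V_i\cap V_j$, with $\tilde g_i\in\mathcal O(V_i)$. This is possible because $x(\Sigma)$ is a planar domain, and the nerve is a tree, so the cocycle condition is vacuous.
\item Set $y=Y_i+\tilde g_i\circ x$ on $D_i$ and its half-corridors. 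These pieces glue to a holomorphic function on $\Sigma$.
\end{enumerate}

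The point your write-up leaves implicit is that $\tilde g_i$ must be holomorphic on the \emph{whole} disk $\Delta_i$, across the cut, so that $\tilde g_i\circ x$ is even in $s_i$. This is exactly what dissolves the difficulty you hit in Step 3. Solving the Cousin or $\bar\partial$ problem on the planar $x$-domain, rather than on $\Sigma$, guarantees it automatically, because the correction is then a pull-back by $x$.

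The paper solves the same jump problem explicitly by Sokhotski--Plemelj. It sets $y(x)=\frac{1}{2\pi\mathrm i}\sum_i\int_{C_i}\frac{2Y_i(\sqrt{2(\tau-u_i)})}{\tau-x}\,d\tau$ along the slits. Its jump $2Y_i(s_0)$ across $C_i$ becomes $y(s_0)-y(-s_0)=2Y_i(s_0)$ once the two banks are identified with $\pm s_0$. The other slits contribute only holomorphic functions of $x$ near $u_i$, which are even in $s_i$.

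What each route buys:
\begin{itemize}
\item The paper's route gives a concrete formula, and it is the template reused for $B$ in Proposition~\ref{prop:global-bidifferential}. The price is an endpoint analysis: a divided-difference argument at the branch point, and shrinking $\Sigma$ so the logarithmic singularities at the outer endpoints fall outside it.
\item Your Cousin argument is non-constructive, but it needs no endpoint estimates.
\end{itemize}
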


\begin{proof}
Choose pairwise disjoint segments
\begin{equation}
    C_i=\left\{u_i+\frac{\rho_i^2 t^2}{2}:0\le t\le1\right\}
\end{equation}
inside the disks of convergence of the periods $J_i$, oriented away from $u_i$. Use the tree construction of Section~\ref{subsec:covering-surface} with these slits, retaining
full disks of radii smaller than $|\rho_i|$.
The outer endpoints remain outside the retained surface.

On the planar slit part, define
\begin{equation}
    y(x)=\frac1{2\pi\mathrm i}\sum_{i=1}^{N}
    \int_{C_i}
    \frac{2Y_i\bigl(\sqrt{2(\tau-u_i)}\bigr)}{\tau-x}\,d\tau.
    \label{eq:construction-y-Cauchy}
\end{equation}
Since $Y_i(s)=O(s)$, the density in
\eqref{eq:construction-y-Cauchy} is
$O((\tau-u_i)^{1/2})$, and the integral converges at $x=u_i$.

\begin{figure}[htbp]
\centering
\begin{tikzpicture}[
    x=1cm,
    y=1cm,
    font=\small,
    >=Stealth,
    line cap=round,
    line join=round
]

\begin{scope}

    \node[font=\small\bfseries] at (1.65,1.85)
        {The oriented cut in the $x$-plane};

    \fill[black!6] (0,0) circle[radius=1.15];
    \draw[densely dashed,black!50]
        (0,0) circle[radius=1.15];

    \draw[semithick] (0,0) -- (3.65,0);
    \draw[->,semithick] (1.55,0) -- (2.45,0);

    \fill (0,0) circle[radius=1.5pt];
    \fill (0.72,0) circle[radius=1.3pt];

    \draw[semithick,fill=white]
        (3.65,0) circle[radius=2pt];

    \node[below left=3pt] at (0,0) {$u_i$};
    \node[below right=2pt] at (0.72,0) {$x_0$};
    \node[below=5pt] at (2.05,0) {$C_i$};

    \node[below=5pt,align=center] at (3.65,0)
        {$u_i+\rho_i^2/2$};

    \draw[->] (0.72,0.85) -- (0.72,0.12);
    \draw[->] (0.72,-0.85) -- (0.72,-0.12);

    \node[above=2pt] at (0.72,0.85)
        {$y_+(x_0)$};
    \node[below=2pt] at (0.72,-0.85)
        {$y_-(x_0)$};

    \node[above] at (2.35,0.55)
        {$+$};
    \node[below] at (2.35,-0.55)
        {$-$};

\end{scope}

\draw[->,semithick] (6.00,0.55) -- (4.35,0.55)
    node[midway,above=5pt,align=center]
    {$x=u_i+\dfrac{s_i^2}{2}$};

\begin{scope}[shift={(7.75,0)}]

    \node[font=\small\bfseries] at (0,1.85)
        {The square-root chart};

    \fill[black!6] (0,0) circle[radius=1.25];
    \draw[thin] (0,0) circle[radius=1.25];
    \draw[densely dotted,black!50]
        (-1.25,0) -- (1.25,0);

    \fill (0,0) circle[radius=1.5pt];
    \node[below=3pt] at (0,0) {$a_i$};

    \fill (-0.78,0) circle[radius=1.4pt];
    \fill (0.78,0) circle[radius=1.4pt];

    \node[below=4pt] at (-0.78,0) {$-s_0$};
    \node[below=4pt] at (0.78,0) {$s_0$};

    \draw[<->,thin]
        (-0.72,0.45) to[bend left=35] (0.72,0.45);

    \node at (0,0.98) {$s_i\mapsto-s_i$};

\end{scope}

\end{tikzpicture}
\caption{
The boundary values of the Cauchy integral along an oriented
segment $C_i$ and their lifts to the square-root chart.
}
\label{fig:scalar-Cauchy-jump}
\end{figure}

Let $x_0$ be an interior point of $C_i$ lying in the retained disk, and put
\begin{equation*}
    s_0=\sqrt{2(x_0-u_i)}
\end{equation*}
with the branch just specified.
Denote by $y_+(x_0)$ and $y_-(x_0)$ the boundary values from the $+$ and $-$ sides of the oriented segment, respectively; see Figure~\ref{fig:scalar-Cauchy-jump}.
The Sokhotski--Plemelj formula gives
\begin{equation}
\begin{aligned}
    y_\pm(x_0)
    ={}&
    \frac1{2\pi\mathrm i}
    \sum_{j=1}^{N}
    \operatorname{pv}\!\int_{C_j}
    \frac{2Y_j\bigl(\sqrt{2(\tau-u_j)}\bigr)}
         {\tau-x_0}\,d\tau
    \ \pm Y_i(s_0).
\end{aligned}
\end{equation}
Only the integral over $C_i$ requires a principal value.
The other integrals are holomorphic near $x_0$, so their
boundary values agree. The final term is one half of the density, with opposite signs on the two banks.
Subtracting therefore yields
\begin{equation}
    y_+(x_0)-y_-(x_0)
    =2Y_i(s_0)
    =2Y_i\bigl(\sqrt{2(x_0-u_i)}\bigr).
    \label{eq:construction-y-jump}
\end{equation}

We choose the square-root attachment so that the left bank
lifts to $s_i=s_0$ and the right bank to $s_i=-s_0$.
After the holomorphic extension at the branch point established
below, \eqref{eq:construction-y-jump} becomes
\begin{equation}
    y(s_0)-y(-s_0)=2Y_i(s_0).
\end{equation}
Both sides are holomorphic functions of the square-root
coordinate. The identity theorem consequently extends this
equality from the lifted segment to the full retained disk:
\begin{equation}
    \frac{y(s_i)-y(-s_i)}2=Y_i(s_i).
\end{equation}

Extension at the branch endpoint can be checked directly.
Write $Y_i(s_i)=s_i\,g(s_i^2)$, with $g$ holomorphic function.
The substitution $\tau=u_i+\xi^2/2$ reduces its local contribution to
\begin{equation*}
\begin{aligned}
\frac{2}{\pi\mathrm i}
\int_0^{\rho_i}
\frac{\xi^2g(\xi^2)}{\xi^2-s_i^2}\,\d\xi
=
\frac{2}{\pi\mathrm i}
\int_0^{\rho_i}
\frac{\xi^2g(\xi^2)-s_i^2g(s_i^2)}
{\xi^2-s_i^2}\,\d\xi
+
\frac{2s_i^2g(s_i^2)}{\pi\mathrm i}
\int_0^{\rho_i}\frac{\d\xi}{\xi^2-s_i^2}.
\end{aligned}
\end{equation*}
The integrand in the first term on the right-hand side is a holomorphic divided difference and hence is holomorphic in $s_i$. The remaining term is proportional to
\begin{equation*}
    s_i a(s_i^2)\left[
    \log\left(\frac{\rho_i-s_i}{\rho_i+s_i}\right)
    +\text{constant}\right],
\end{equation*}
which is holomorphic for $|s_i|<|\rho_i|$.
All other segments contribute holomorphic functions of $x$
near $u_i$, hence only even powers of $s_i$.
The possible logarithmic singularities at the outer endpoints
are excluded by the choice of surface. Thus $y$ extends
holomorphically, with \eqref{eq:construction-global-y}.
Its linear coefficient is $h_i$ by
Proposition~\ref{prop:normalized-periods}.
\end{proof}

The difference between $y$ and $Y_i$ on a branch disk is even
in $s_i$. This freedom is harmless for topological recursion. 

\begin{proposition}
\label{prop:global-bidifferential}
After a further choice of sufficiently small retained
neighbourhoods, there exists a symmetric bidifferential $B$
on $\Sigma\times\Sigma$ whose only pole is the normalized
diagonal double pole and whose even--even regular coefficients
agree with those of $B^{\mathrm{local}}_{ij}$ near every pair
of critical points.
\end{proposition}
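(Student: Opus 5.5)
We construct $B$ as a reference kernel plus a holomorphic correction. By Lemma~\ref{lem:covering-surface}, $\Sigma$ is biholomorphic to $\mathbb H$, hence to the unit disk. Fix a uniformizing coordinate $\zeta:\Sigma\to\mathbb D$ and take the reference bidifferential
\begin{equation*}
    B_0(p,q)=\frac{d\zeta(p)\,d\zeta(q)}{(\zeta(p)-\zeta(q))^2}.
\end{equation*}
It is symmetric, holomorphic off the diagonal, and has the normalized double pole \eqref{eq:EO-diagonal-normalization} in every local coordinate, since this form is invariant under change of coordinate up to holomorphic terms. Near each pair $(a_i,a_j)$, Proposition~\ref{prop:canonical-local-kernel} shows that $B^{\mathrm{local}}_{ij}-B_0$ is holomorphic on the product of the $s_i$- and $s_j$-disks. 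The task therefore reduces to finding a symmetric holomorphic bidifferential $\Delta$ on $\Sigma\times\Sigma$ whose even--even Taylor coefficients at every $(a_i,a_j)$ equal those of $D_{ij}:=B^{\mathrm{local}}_{ij}-B_0$. Then $B=B_0+\Delta$ has the required properties.

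To build $\Delta$, we first extract the even--even part of $D_{ij}$ in the form adapted to the jump method of Proposition~\ref{prop:global-scalar}. Write $D_{ij}=d_{ij}(s,t)\,ds\,dt$. The even--even part of $d_{ij}$ is a holomorphic function of $(x_1,x_2)=(u_i+s^2/2,\,u_j+t^2/2)$, namely $e_{ij}(s^2,t^2)$. We need a function $\Delta(s,t)$ on the branch disks with $\frac14\sum_{\epsilon,\epsilon'=\pm1}\Delta(\epsilon s,\epsilon' t)=e_{ij}$. It suffices to prescribe the \emph{odd--odd} part of the primitive. Set
\begin{equation*}
    G_{ij}(s,t)=\int_0^s\!\int_0^t e_{ij}(\sigma^2,\tau^2)\,d\sigma\,d\tau,
\end{equation*}
which is odd in each variable, and seek a holomorphic function $g$ on $\Sigma\times\Sigma$ whose odd--odd part at $(a_i,a_j)$ is $G_{ij}$. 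Then $\Delta=d_1d_2 g$ has the right even--even coefficients, because $\partial_s\partial_t$ sends the odd--odd part to the even--even part and the other parity components to other parities. Symmetry is arranged at the end by replacing $g(p,q)$ with $\frac12(g(p,q)+g(q,p))$, using $e_{ij}(a,b)=e_{ji}(b,a)$, which follows from the symmetry of $B^{\mathrm{local}}$ and $B_0$.

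The function $g$ is produced by a double Cauchy integral over the slits $C_i\times C_j$:
\begin{equation*}
    g(x_1,x_2)=\frac1{(2\pi\mathrm i)^2}\sum_{i,j}\int_{C_i}\!\int_{C_j}\frac{4G_{ij}\bigl(\sqrt{2(\tau_1-u_i)},\sqrt{2(\tau_2-u_j)}\bigr)}{(\tau_1-x_1)(\tau_2-x_2)}\,d\tau_1\,d\tau_2 .
\end{equation*}
Applying Sokhotski--Plemelj in each variable separately, exactly as in the proof of Proposition~\ref{prop:global-scalar}, shows that the iterated jump across $C_i\times C_j$ equals $4G_{ij}(s_0,t_0)$. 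This identifies the odd--odd part of $g$ in the $(s_i,t_j)$ chart with $G_{ij}$. The contributions from other slits are holomorphic and even in the relevant variable, so they do not affect this part. Holomorphic extension across the branch points is checked by the same divided-difference argument, done variable by variable with the other variable as a holomorphic parameter. The outer endpoints are excluded from the retained surface. For this we may have to shrink the disks again so that the $G_{ij}$ converge on the slits of radius $\rho_i$; this is why the statement allows a further shrinking.

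I expect the main obstacle to be the two-variable holomorphic extension at the points $(a_i,a_j)$, and especially at $(a_i,a_i)$. There the density $G_{ii}$ is defined on the product of the same disk, and one must check that the diagonal of $\Sigma\times\Sigma$ introduces no singularity. It does not, because $G_{ii}$ is holomorphic (the pole was absorbed into $B_0$) and the Cauchy kernel factorizes. One must also control uniformity in the parameter when applying the one-variable Plemelj argument twice. I would first carry out the one-variable extension with holomorphic dependence on a parameter, and then iterate it.
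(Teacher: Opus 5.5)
Your proposal is correct and follows the paper's proof in all essentials. The shared structure is: $B=B_0+\mathcal H$, with $B_0$ the pulled-back Cauchy kernel (fixed on a slightly larger surface and simply restricted after the final shrinking); $\mathcal H$ a double Cauchy integral over the slits $C_i\times C_j$; and the anti-invariant parts identified by the one-variable Plemelj/divided-difference computation, applied in each variable with the other as a holomorphic parameter.

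The only variation is the object you transform. You Cauchy-transform the odd--odd primitive $G_{ij}$ and then apply $d_1d_2$. The paper instead transforms the even--even coefficient $b_{ij}=e_{ij}$ directly, $\mathcal H=\frac{dx(p)\,dx(q)}{(\pi\mathrm i)^2}\sum_{i,j}\int_0^{\rho_i}\!\int_0^{\rho_j}\frac{b_{ij}(\xi^2,\chi^2)\,d\chi\,d\xi}{(u_i+\xi^2/2-x(p))(u_j+\chi^2/2-x(q))}$. Integrating by parts in $\tau_1,\tau_2$ shows that your $d_1d_2g$ equals this $\mathcal H$ up to outer-endpoint boundary terms. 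Those terms are holomorphic in $x$ near every $u_k$, so they are annihilated by the projections $\frac14(1-\sigma_i^*)(1-\sigma_j^*)$.
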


\begin{proof}
Carry out the preceding construction on a slightly larger
surface, and choose a uniformizing coordinate $\zeta: \Sigma\to \mathbb H$ there.
Set
\begin{equation}
    B_0(p,q)
    =
    \frac{\d\zeta(p)\,\d\zeta(q)}
         {(\zeta(p)-\zeta(q))^2}.
\end{equation}
$B_0$ is symmetric and has no poles other than the normalized double pole along the diagonal.
The final surface will be obtained by restriction. 

Near the $i$-th and $j$-th critical points, write
\begin{equation*}
    x(p)=u_i+\frac{s^2}{2},
    \qquad
    x(q)=u_j+\frac{t^2}{2}.
\end{equation*}
The difference $B^{\mathrm{local}}_{ij}-B_0$ is holomorphic in the domain of $B^{\mathrm{local}}_{ij}$:
for $i=j$, its diagonal principal parts cancel.
Define holomorphic functions $b_{ij}$ by
\begin{equation}
    \frac14(1-\sigma_i^*)(1-\sigma_j^*)
    \bigl(B^{\mathrm{local}}_{ij}-B_0\bigr)
    =
    b_{ij}(s^2,t^2)\d s \d t.
    \label{eq:construction-B-discrepancy}
\end{equation}
Here $\sigma_i$ is the local involution, $\sigma_i^*(f(s)\d s)=-f(-s)\d s$ and $\sigma_j^*(f(t)\d t)=-f(-t)\d t$.

Choose the oriented segments
\begin{equation*}
    C_i=
    \left\{
        u_i+\frac{\rho_i^2 r^2}{2}:0\leq r\leq1
    \right\}
\end{equation*}
sufficiently short that all the functions $b_{ij}$ are
holomorphic on neighbourhoods of the corresponding products
of closed integration segments in the square-root coordinates.
Use the attachment convention of the preceding construction:
the slit neighbourhood of $u_i$ lifts to
$\operatorname{Im}(s/\rho_i)>0$, with the left bank
corresponding to positive multiples of $\rho_i$.

On the planar slit part, define
\begin{equation}
\begin{aligned}
    \mathcal H(p,q)
    ={}&
    \frac{dx(p)\,dx(q)}{(\pi\mathrm i)^2}
    \sum_{i,j=1}^{N}
    \int_0^{\rho_i}\int_0^{\rho_j}
    \frac{b_{ij}(\xi^2,\chi^2)\,d\chi\,d\xi}
    {\bigl(u_i+\xi^2/2-x(p)\bigr)
     \bigl(u_j+\chi^2/2-x(q)\bigr)}.
\end{aligned}
\label{eq:construction-B-Cauchy}
\end{equation}
The integrals are taken along straight segments.
This expression is holomorphic on the product of the planar
slit parts, and it is symmetric under interchange of $p$ and $q$.

We verify both its extension and its prescribed local
coefficients using the following one-variable calculation.
Let $a(\xi^2)$ be holomorphic on a neighbourhood of
$\{\xi^2:\xi\in[0,\rho]\}$.
On $\operatorname{Im}(s/\rho)>0$, one has
\begin{equation}
\begin{aligned}
    \frac{2s}{\pi\mathrm i}
    \int_0^\rho
    \frac{a(\xi^2)}{\xi^2-s^2}\,d\xi
    =
    a(s^2)
    +\frac{a(s^2)}{\pi\mathrm i}
       \log\left(\frac{\rho-s}{\rho+s}\right)
    +
    \frac{2s}{\pi\mathrm i}
    \int_0^\rho
    \frac{a(\xi^2)-a(s^2)}
         {\xi^2-s^2}\,d\xi .
\end{aligned}
\label{eq:construction-B-local-integral}
\end{equation}
The logarithm on the right is the holomorphic branch
vanishing at $s=0$.

The divided difference in the last term of
\eqref{eq:construction-B-local-integral}
has removable singularities at $\xi^2=s^2$.
Thus the right-hand side extends holomorphically to a
sufficiently small full disc about $s=0$.
Its first term is even in $s$, whereas both remaining
terms are odd.
Consequently, after multiplication by $ds$, its
anti-invariant part is exactly $a(s^2)\,ds$.

Now consider the first variable of
\eqref{eq:construction-B-Cauchy}
near the $i$-th critical point.
For the summands integrated over $[0,\rho_i]$, the identity
$dx(p)=s\,ds$ converts the corresponding one-variable
factor into
\begin{equation*}
    \frac{dx(p)}{\pi\mathrm i}
    \int_0^{\rho_i}
    \frac{a(\xi^2)}
         {u_i+\xi^2/2-x(p)}\,d\xi
    =
    \frac{2s}{\pi\mathrm i}
    \int_0^{\rho_i}
    \frac{a(\xi^2)}{\xi^2-s^2}\,d\xi\,ds.
\end{equation*}
It therefore extends holomorphically across the branch
point, with anti-invariant part $a(s^2)\,ds$.
The summands associated with the other segments are
holomorphic functions of $x(p)$ times $dx(p)$.
Their pullbacks have the form
$s\,F(s^2)\,ds$, which is invariant under $s\mapsto-s$;
hence they contribute nothing to the projection.

Apply this argument successively in the two variables.
The parameter-dependent version of
\eqref{eq:construction-B-local-integral}
gives a jointly holomorphic extension of $\mathcal H$
to every retained product of branch disks and yields
\begin{equation}
    \frac14(1-\sigma_i^*)(1-\sigma_j^*)\mathcal H
    =
    b_{ij}(s^2,t^2)\,ds\,dt.
    \label{eq:construction-B-correction}
\end{equation}
The outer integration endpoints are excluded from the retained surface.
Thus $\mathcal H$ defines a holomorphic symmetric
bidifferential on $\Sigma\times\Sigma$.

Finally, set
\begin{equation}
    B=B_0+\mathcal H.
    \label{eq:construction-global-B}
\end{equation}
Since $\mathcal H$ is holomorphic and symmetric, $B$
has exactly the required diagonal principal part and no other poles.
By
\eqref{eq:construction-B-discrepancy}
and
\eqref{eq:construction-B-correction},
$B$ agrees with $B^{\mathrm{local}}_{ij}$ in its even--even regular coefficients on every product of branch discs.
\end{proof}

The smaller retained surface is essential: finite Cauchy
integrals can have logarithmic singularities at their outer endpoints. 
Moreover,the full 
condition \eqref{eq:construction-full-local-compatibility}
is not asserted for $B$.

\subsection{Spectral curve realization of ancestor correlators}
\label{subsec:spectral-ancestor-reconstruction}

We now assemble the construction and apply Theorem \ref{thm:DOSS}  in Section~\ref{subsec:EO-DOSS}
to get a global version of the DOSS correspondence.

\begin{theorem}
\label{thm:connected-spectral-realization}
Let $M$ be a conformal Frobenius manifold
with flat unit, and let $v\in M$ be a tame semisimple point.
There exists a connected spectral curve
\begin{equation}
    (\Sigma,x,y,B),\qquad \Sigma\cong\mathbb H,
\end{equation}
such that $x$ has exactly $N$ simple critical points $a_i$
with $x(a_i)=u_i(v)$, $dy(a_i)\ne0$, and $B$ has only its
normalized diagonal double pole. In the local coordinates
$x=u_i+s_i^2/2$, the expansions of $y$ and $B$ satisfy
\begin{equation}
    \frac{y(s_i)-y(-s_i)}2
    =\sum_{k\ge0}[z^k](R(z)^{-1}h)_i
    \frac{s_i^{2k+1}}{(2k+1)!!}
\end{equation}
and
\begin{equation}
    \sum_{k,l\ge0}(2k-1)!!(2l-1)!!B_{2k,2l}^{ij}z^kw^l
    =\left[
    \frac{\operatorname{Id}-R(z)^{-1}R(w)^{-T}}{z+w}
    \right]_{ij}.
\end{equation}
For $n\ge1$ and $2g-2+n>0$, the recursion differentials satisfy
\begin{equation}
\begin{aligned}
    \omega_{g,n}(p_1,\ldots,p_n)
    ={}&\sum_{i_1,\ldots,i_n=1}^{N}
    \sum_{k_1,\ldots,k_n\ge0}
    \left\langle\prod_{r=1}^{n}\bar\tau_{k_r}(f_{i_r})
    \right\rangle^R_{g,n;v}
    \prod_{r=1}^{n}W_{k_r}^{i_r}(p_r),
\end{aligned}
\label{eq:construction-ancestor-correspondence}
\end{equation}
where $\Xi$ and $W=R\Xi$ are extracted from the bidifferential $B$ as in \eqref{eq:DOSS-coefficient-forms} and
\eqref{eq:DOSS-ancestor-forms}.
\end{theorem}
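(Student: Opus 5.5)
The proof assembles the results of this section and then applies Theorem~\ref{thm:DOSS}. Start from the surface of Lemma~\ref{lem:covering-surface}. It is simply connected and biholomorphic to $\mathbb H$, its map $x$ is bounded, and its critical points are exactly the simple points $a_i$, with $x(a_i)=u_i(v)$. Shrink it as Proposition~\ref{prop:global-scalar} requires: keep full disks of radius less than $|\rho_i|$, and keep the tree of corridors so that the outer endpoints of the segments $C_i$ are excluded. Shrink it once more as in Proposition~\ref{prop:global-bidifferential}. Each shrinking must preserve the structure of Lemma~\ref{lem:covering-surface}. The retained domain is again a union of full branch disks joined by thin corridors along a tree, so it is still connected and simply connected and contains no new critical points. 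By the Riemann mapping theorem it is biholomorphic to $\mathbb H$, since it is a proper open subset of a surface biholomorphic to $\mathbb H$. The two shrinkings may be done in either order, since restriction preserves both the holomorphy of $y$ and the pole structure of $B$.

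On this surface, Proposition~\ref{prop:global-scalar} gives $y$ with odd part $Y_i$. Combining this with \eqref{eq:construction-unit-expansion} of Proposition~\ref{prop:normalized-periods} gives the first displayed identity, and also $dy(a_i)=h_i\,ds_i\neq0$ because $h_i\neq0$. Proposition~\ref{prop:global-bidifferential} gives $B=B_0+\mathcal H$. This $B$ is symmetric, has only the normalized diagonal double pole, and has the same even--even regular coefficients $B^{ij}_{2k,2l}$ as $B^{\mathrm{local}}_{ij}$. The second displayed identity then follows from \eqref{eq:construction-local-edge-identity} of Proposition~\ref{prop:canonical-local-kernel}. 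So $(\Sigma,x,y,B)$ is a spectral curve in the sense of Section~\ref{subsec:EO-DOSS}: $dx$ has finitely many simple zeros, $x$ and $y$ are holomorphic there, and $B$ has the normalization \eqref{eq:EO-diagonal-normalization}.

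The main point to check is that Theorem~\ref{thm:DOSS} needs only \eqref{eq:DOSS-B-R-compatibility} and \eqref{eq:DOSS-y-R-compatibility}, together with the definition $W=R\Xi$ with $\Xi$ extracted from $B$. It does not need the stronger condition \eqref{eq:B-differential-compatibility}. That condition was established for $B^{\mathrm{local}}$, but it is not asserted for the global $B$. I would justify this reduction in two steps.

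First, the recursion \eqref{eq:EO-recursion} depends on $y$ only through its odd parts at the $a_i$. Its kernel \eqref{eq:EO-recursion-kernel} involves only $\int_{\sigma_i(q)}^qB(p,\cdot)$, which depends only on the $\sigma_i$-anti-invariant part of $B$ in $q$. Working inductively, each $\omega_{g,n}$ with $2g-2+n>0$ is anti-invariant at every $a_i$ in each argument. The local inputs that enter the recursive residues are therefore the odd parts of $y$ and the projections $\frac14(1-\sigma_i^*)(1-\sigma_j^*)B$, whose coefficients are exactly the $B^{ij}_{2k,2l}$. The external dependence on $p$ enters only through $B(p,\cdot)$ in the kernel, and it appears after expansion in $q$ at $a_i$ as the forms $\Xi^i_k(p)$.

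Second, the standard DOSS argument, in the $W=R\Xi$ form of \cite{xiongEYNARDORANTINTOPOLOGICALRECURSION}, identifies the coefficient of $\prod\Xi^{i_r}_{k_r}$ with the graph sum of $\widehat R Z_{\mathrm{WK}}^{\otimes N}$. In that graph sum, the edges are built from \eqref{eq:DOSS-B-R-compatibility}, the leaves from $R$, and the dilaton leaves from \eqref{eq:DOSS-y-R-compatibility}. Resumming the leaves then gives \eqref{eq:construction-ancestor-correspondence}.

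The step I expect to be the main obstacle is the careful justification that the odd--odd and mixed-parity coefficients of the global $B$, which differ from those of $B^{\mathrm{local}}$, really never enter the residue computation. I would handle this with the parity argument above: first show by induction that every $\omega_{g,n}$ is anti-invariant near each $a_i$, then note that residues of anti-invariant integrands only see the anti-invariant projections of $B$.
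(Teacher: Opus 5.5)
Your argument is the paper's proof. Lemma~\ref{lem:covering-surface} and Proposition~\ref{prop:global-scalar} supply $x$, $y$ and the critical points, and Proposition~\ref{prop:normalized-periods} identifies the odd part of $y$. Propositions~\ref{prop:canonical-local-kernel} and~\ref{prop:global-bidifferential} supply the pole structure and the even--even coefficients of $B$. Theorem~\ref{thm:DOSS}, in its $W=R\Xi$ form, then gives the ancestor expansion. The paper adds nothing further, because Theorem~\ref{thm:DOSS} is stated with exactly \eqref{eq:DOSS-B-R-compatibility} and \eqref{eq:DOSS-y-R-compatibility} as hypotheses. The reduction you single out as the main obstacle is therefore already contained in the cited theorem.

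If you keep your parity justification, its key claim is false: the $\omega_{g,n}$ are \emph{not} $\sigma_i$-anti-invariant at the $a_i$.
\begin{itemize}
\item For example, $\omega_{0,3}(p_0,p_1,p_2)=\sum_j h_j^{-1}\Xi^j_0(p_0)\Xi^j_0(p_1)\Xi^j_0(p_2)$. Near $a_i$, the form $\Xi^j_0(p_0)$ has regular part $\sum_l B^{ji}_{0,l}s_i^l\,ds_i$, and its odd-$l$ terms are $\sigma_i$-invariant.
\item For the global $B=B_0+\mathcal H$ these mixed-parity coefficients are uncontrolled. The vanishing $B^{ij}_{0,2k+1}=0$ holds for $B^{\mathrm{local}}_{ij}$, not for $B$. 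So the induction you propose cannot be carried out.
\end{itemize}

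What does hold is the linear loop equation. The polar part of $\omega_{g,n}$ at $a_i$ is anti-invariant, and the invariant part is holomorphic with odd coefficient function, hence $O(s_i)\,ds_i$. This is enough:
\begin{itemize}
\item Write $K_i(p,q)=k(s_i)/ds_i$. Then $k$ is odd, with only a simple pole at $s_i=0$.
\item A product of an invariant part and an anti-invariant part is an odd quadratic differential. A product of two invariant parts is $O(s_i^2)\,ds_i^2$. Neither contributes to $\operatorname{Res}_{q=a_i}$.
\item The same reasoning disposes of the forms $[s_i^{2k+1}]B(s_i,p_r)/ds_i$ coming from the $\omega_{0,2}$ factors.
\item In the $\omega_{1,1}$ term $B(q,\sigma_i(q))$, only the universal singular part and $B^{ii}_{00}$ survive the residue.
\end{itemize}
Hence the residues see only the odd parts of $y$ and the anti-invariant parts of the $\Xi^j_k$ at each $a_i$. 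Their coefficients are the $B^{ji}_{2k,2l}$, which is the conclusion you wanted.
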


\begin{proof}
Lemma~\ref{lem:covering-surface} and
Proposition~\ref{prop:global-scalar} give $x,y$ and the critical
points, while Proposition~\ref{prop:normalized-periods} identifies
the odd expansion of $y$. Propositions~\ref{prop:canonical-local-kernel}
and \ref{prop:global-bidifferential} give the pole structure
of $B$ and its prescribed even--even coefficients.
The ancestor expansion then follows from
Theorem~\ref{thm:DOSS}, with the ancestor one-forms
defined by $W=R\Xi$.

\end{proof}

The theorem concerns the ancestor theory at the chosen point $v$; both the spectral curve data and the ancestor correlators may vary with $v$. 
The forms $W_k^i$ are defined by $W=R\Xi$, and are
not assumed to be successive $x$-derivatives of $W_0^i$.


\subsection{Simultaneously normalized periods}
\label{subsec:Dubrovin-superpotential}


We now study the case when the individually normalized local periods
$J_i$ can be realized as the odd parts of a single solution $J$
of the second structure equation. Such a common period underlies
Dubrovin's superpotential construction
\cite{dubrovinGeometry2DTopological1996,
dubrovinPainleveTranscendentsTwoDimensional1999,
dunin-barkowskiDubrovinsSuperpotentialGlobal2019}.


Fix a regular base point in the $x$-plane and a distinguished
system of continuation paths to the canonical values.
Continue the normalized local solutions $J_i$ to this base
point, so that they belong to the same solution space.

The relevant pairing is Dubrovin's flat-pencil pairing \cite{dubrovinGeometry2DTopological1996,dubrovinAlmostDualityFrobenius2004}.
Its expression in our flat frame is
\begin{equation}
    \langle J,K\rangle_x
    =
    \eta\bigl(J,(A-x\operatorname{Id})K\bigr).
    \label{eq:superpotential-pencil-pairing}
\end{equation}
It is nondegenerate at every regular value of $x$, and
its value on solutions of the second structure equation
is independent of $x$.
Our normalization gives
\begin{equation}
    \langle J_i,J_i\rangle_x=-\frac12.
    \label{eq:superpotential-local-norm}
\end{equation}

Near the $i$-th critical point, any solution of the second structure equation admits a
decomposition
\begin{equation}
    J(s_i)=c_iJ_i(s_i)+E_i(s_i^2),
\end{equation}
where $E_i$ is holomorphic and $c_i$ is constant.
The two summands belong to opposite eigenspaces of the local monodromy and are orthogonal for the flat-pencil pairing. 
Consequently,
\begin{equation}
    c_i=-2\langle J,J_i\rangle_x.
\end{equation}
Thus a simultaneously normalized period $J$ is characterized by
\begin{equation}
    \frac{J(s_i)-J(-s_i)}2=J_i(s_i)
    \quad\Longleftrightarrow\quad
    \langle J,J_i\rangle_x=-\frac12,
    \qquad i=1,\ldots,N.
    \label{eq:superpotential-common-period}
\end{equation}
Equivalently, the required common period $J$ satisfies
\[
    (x\operatorname{Id}-A)\partial_xJ
    =\left(\theta-\frac12\operatorname{Id}\right)J,
    \qquad
    J(s_i)=\frac{f_i}{s_i}+O(1),
    \quad i=1,\ldots,N.
\]

Define the Gram matrix by
\begin{equation}
    G_{ij}=-2\langle J_i,J_j\rangle_x.
    \label{eq:superpotential-Gram}
\end{equation}
The factor $-2$ accounts for our normalization
and gives $G_{ii}=1$.
The matrix $G$ is invertible if and only if the simultaneous
normalization determines a unique solution $J$.
In this case,
\begin{equation}
    J=\sum_{i,j=1}^{N}(G^{-1})_{ij}J_j.
    \label{eq:superpotential-Gram-solution}
\end{equation}

If $G$ is singular, a simultaneously normalized solution
may still exist, but its normalization leaves an ambiguity
in the orthogonal complement of the span of the $J_i$.
Such a solution can be used directly in Dubrovin's superpotential construction, as explained in \cite[Remark~3.2]{dunin-barkowskiDubrovinsSuperpotentialGlobal2019}.
This distinction will be essential for odd-dimensional
projective spaces.

A useful sufficient condition for invertibility of $G$ is
\begin{equation}
    \operatorname{Spec}(\theta)
    \cap
    \left\{
        \frac12,\frac32,\frac52,\ldots
    \right\}
    =\varnothing.
    \label{eq:superpotential-nonresonance}
\end{equation}
To see this, a solution orthogonal to all $J_i$ has no
singular part at any canonical value and is fixed by all
local monodromies. It therefore extends to an entire
solution on the $x$-plane. Regular singularity at infinity implies
polynomial growth, so this solution is polynomial.
If it is nonzero and has degree $m\ge0$, comparison of
leading coefficients in the second structure equation
requires $m+\tfrac12$ to be an eigenvalue of $\theta$.
Condition~\eqref{eq:superpotential-nonresonance} excludes
this possibility. Hence the orthogonal complement
vanishes and $G$ is invertible.


The projective space examples illustrate the possible ambiguity in the common period. 
In Subsection~\ref{subsec:Pn-spectral-curve}, we construct a simultaneously normalized period for $\mathbb P^n$ at $Q=1$, $t=0$, for $n\ge2$.
Here Frobenius dimension $N=n+1$, the conformal charge is $d=n$, and
\begin{equation}
    \theta(H^k)
    =
    \left(\frac n2-k\right)H^k.
\end{equation}
We prove that
\begin{equation}
    \operatorname{rank}G
    =
    \begin{cases}
        n+1,& n \text{ even},\\
        n,& n \text{ odd}.
    \end{cases}
    \label{eq:superpotential-Pn-Gram-rank}
\end{equation}
For even $n$, the normalized common period is unique.
For odd $n\ge3$, it exists despite the singular Gram matrix,
and all solutions with the same normalized singular parts
differ by
\begin{equation}
    cH^{(n-1)/2},
    \qquad c\in\mathbb C.
\end{equation}
The explicit construction selects the solution satisfying
\begin{equation}
    J(x)=O\!\left(x^{-(n+1)/2}\right)
\end{equation}
on the chosen branch at infinity.

\subsection{Spectral curves from a common period}
\label{subsec:common-period-spectral-curve}

Fix a solution $J$ of the second structure equation satisfying
the simultaneous normalization
\eqref{eq:superpotential-common-period}.
Choose a simply connected retained surface $\Sigma$
as in Subsection~\ref{subsec:covering-surface}.
After shrinking the branch neighbourhoods and connecting corridors if necessary, $J$ is holomorphic away from the selected critical points and has the prescribed local expansions there.

We construct the scalar function $y^J$, the bidifferential
$B^J$, and the associated ancestor forms from this chosen
period. 

For $d\ne1$, define
\begin{equation}
    y^J
    =
    \frac{2}{1-d}
    \eta\bigl(e,(x\operatorname{Id}-A)J\bigr).
    \label{eq:superpotential-period-primitive}
\end{equation}
This is the expression for Dubrovin's superpotential associated with $J$ in \cite{dubrovinPainleveTranscendentsTwoDimensional1999,dunin-barkowskiDubrovinsSuperpotentialGlobal2019}.
Indeed, the second structure equation gives
\begin{equation}
\begin{aligned}
    \partial_x
    \eta\bigl(e,(x\operatorname{Id}-A)J\bigr)
    =
    \eta\!\left(
        e,\left(\theta+\frac12\operatorname{Id}\right)J
    \right)
    =
    \frac{1-d}{2}\eta(e,J),
\end{aligned}
\end{equation}
where we used $\theta e=\tfrac d2e$.
Therefore
\begin{equation}
    dy^J=\eta(e,J)\,dx.
    \label{eq:superpotential-period-differential}
\end{equation}
The apparent pole of $J$ at $a_i$ is removed by
$x\operatorname{Id}-A$, since $Af_i=u_if_i$.
Thus $y^J$ is holomorphic near every selected critical point.

By \eqref{eq:superpotential-common-period},
the odd part of $y^J$ is
\begin{equation}
\begin{aligned}
    \frac{y^J(s_i)-y^J(-s_i)}2
    =
    Y_i(s_i)
    =
    \sum_{k\ge0}
    [z^k]\bigl(R(z)^{-1}h\bigr)_i
    \frac{s_i^{2k+1}}{(2k+1)!!}.
\end{aligned}
\label{eq:superpotential-odd-expansion}
\end{equation}
It therefore supplies exactly the coefficients
used in the preceding construction.
When $d=1$, formula~\eqref{eq:superpotential-period-primitive}
is unavailable, but the differential
$\eta(e,J)\,dx$ is still holomorphic at the selected branch
points and admits a primitive on a simply connected
retained surface.

This period $J$ also determines a bidifferential. 
For $q_1,q_2\in\Sigma$, write $x_\nu=x(q_\nu)$.
Define 
\begin{equation}
\begin{aligned}
    B^J(q_1,q_2)
    :={}&
    \frac{dx_1\,dx_2}{(x_1-x_2)^2}
    \Bigg\{
    1+\langle J,J\rangle_x
    \\
    &\qquad+
    \eta\!\left(
        J(q_1),
        \left[
            (x_1-x_2)\theta
            +\frac{x_1+x_2}{2}\operatorname{Id}-A
        \right]J(q_2)
    \right)
    \Bigg\}.
\end{aligned}
\label{eq:superpotential-phase-kernel}
\end{equation}
Here $\langle J,J\rangle_x$ denotes the constant
flat-pencil self-pairing of the chosen solution.

\begin{lemma}
\label{lem:common-period-kernel}
The bidifferential
$B^J$ is symmetric and has no poles other than the
normalized diagonal double pole. Its even--even regular
coefficients near every pair of critical points agree
with those of $B^{\mathrm{local}}_{ij}$.
\end{lemma}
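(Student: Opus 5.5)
The plan is to reduce everything to the local computation already carried out in Proposition~\ref{prop:canonical-local-kernel}, using the decomposition $J(s_i)=J_i(s_i)+E_i(s_i^2)$ furnished by the simultaneous normalization \eqref{eq:superpotential-common-period}, with $E_i$ holomorphic near $a_i$. Symmetry of $B^J$ follows exactly as before from $A^*=A$ and $\theta^*=-\theta$: interchanging $q_1,q_2$ flips the sign of $x_1-x_2$ and moves $\theta$ to the other side with a sign. Away from the preimages of the diagonal $x_1=x_2$, $B^J$ is holomorphic on $\Sigma\times\Sigma$. The only possible singularities of $J$ are at the $a_i$, and there $dx=s_i\,ds_i$ absorbs the simple pole $f_i/s_i$, as in the off-diagonal case of Proposition~\ref{prop:canonical-local-kernel}.

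The main obstacle is the locus $x(q_1)=x(q_2)$ with $q_1\ne q_2$, since $x$ is not injective on $\Sigma$. I would handle it in three cases.

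\emph{Regular points.} At $q_1=q_2$ regular, and on the sheet-exchanged locus, I would set $N(q_1,q_2)=\eta(J(q_1),MJ(q_2))$, with $M$ as in the proof of Proposition~\ref{prop:canonical-local-kernel}. At $x_1=x_2$ we have $M=x\operatorname{Id}-A$. On the true diagonal, constancy of the pencil pairing gives $N=-\langle J,J\rangle_x$, so the bracket equals $1$ there, and its first derivative vanishes by the second structure equation. This is the same argument as \eqref{eq:construction-diagonal-pairing-derivatives}, and it shows that the principal part is exactly $dx_1dx_2/(x_1-x_2)^2$, which is the normalized double pole.

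\emph{Distinct points over the same $x$.} Here I would shrink the corridors so that the restriction of $x$ to $\Sigma$ is injective away from the branch disks. Then the only distinct points $q_1\neq q_2$ with $x_1=x_2$ are $q_2=\sigma_i(q_1)$ inside the branch disks. This is where the tree-shaped construction must be used, and I expect it to be the delicate step.

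\emph{Inside a branch disk.} I would substitute $J=J_i+E_i$ in both arguments and expand $B^J-B^{\mathrm{local}}_{ii}$. Two facts are needed:
\begin{itemize}
\item Orthogonality $\langle J_i,E_i\rangle_x=0$ gives $\langle J,J\rangle_x=-\tfrac12+\langle E_i,E_i\rangle_x$.
\item The mixed terms and the $E_iE_i$ terms have the same vanishing pattern as $N$ at $t=\pm s$. Indeed, each pair of solutions yields a constant pairing at $t=s$ and an (anti)invariant value at $t=-s$ by parity. Repeating the factorization argument for $F$, this time with the even piece $E_i$, shows that the numerator is divisible by $(s^2-t^2)^2$, so $B^J-B_0$ is holomorphic there.
\end{itemize}

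\emph{Coefficients.} For the even--even coefficients I would apply the projection $\tfrac14(1-\sigma_i^*)(1-\sigma_j^*)$ to $B^J$. The terms built only from even pieces are killed by the projection. A cleaner route would be to differentiate: $B^J$ satisfies the same identity \eqref{eq:construction-phase-derivative}, namely
\[
\bigl(\mathcal L^{(1)}_{\partial_{x_1}}+\mathcal L^{(2)}_{\partial_{x_2}}\bigr)B^J=-\eta(\partial_xJ,\partial_xJ)\,dx_1dx_2 .
\]
Its anti-invariant parts involve only $J_i$ and $J_j$, so the recursion \eqref{eq:construction-full-local-compatibility} for the odd--odd-projected coefficients coincides with the one for $B^{\mathrm{local}}_{ij}$.

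Uniqueness then comes from two observations:
\begin{itemize}
\item The boundary data $B_{0,2k}$ are read off from the coefficient of $s^{-2}$. It involves only $\eta(f_i,\partial J_j)$, hence matches $B^{\mathrm{local}}_{ij}$.
\item The relation $(m+1)B_{m+2,l}+(l+1)B_{m,l+2}=\cdots$ determines all even--even coefficients from $B_{0,2k}$ and $B_{2k,0}$.
\end{itemize}
Hence the even--even coefficients agree with those of $B^{\mathrm{local}}_{ij}$, and \eqref{eq:construction-local-edge-identity} follows.
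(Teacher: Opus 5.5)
Your proposal is correct, but both halves are organized differently from the paper's proof.

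\textbf{Pole structure near $a_i$.} The paper never compares with $B^{\mathrm{local}}_{ii}$ here. It applies the factorization argument of Proposition~\ref{prop:canonical-local-kernel} directly to $F(s,t)=1+\langle J,J\rangle_x+\eta(J(s),MJ(t))$. Using $J(-s)=J(s)-2J_i(s)$ and $\langle J,J_i\rangle_x=-\tfrac12$, it gets $F(s,s)=1$, $F(s,-s)=0$ and vanishing $t$-derivatives. You instead factor the difference $B^J-B^{\mathrm{local}}_{ii}$. That costs the extra bookkeeping $\langle J,J\rangle_x=-\tfrac12+\langle E_i,E_i\rangle_x$, but it is equally valid.

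\textbf{Coefficients.} The paper uses parity alone. In the holomorphic difference $B^J-B^{\mathrm{local}}_{ij}$, after the factor $dx_1dx_2=st\,ds\,dt$, the constant term and the $E_iE_j$ term are odd in both variables. The mixed terms $E_iJ_j$ and $J_iE_j$ are odd in one variable, while $M$ and $(x_1-x_2)^2$ are even in both. Hence all even--even Taylor coefficients vanish.

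Your projection idea is exactly this argument, but as written it only disposes of the terms built from even pieces. The mixed terms are killed for the same reason, so nothing further is needed.

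Your alternative, via the translation identity and uniqueness of the recursion from the boundary data $B_{0,2k}$, also works. The operator $b\mapsto\partial_s(b/s)+\partial_t(b/t)$ preserves parity, and the even--even part of $\eta(dJ/ds,dJ/dt)$ involves only $J_i$ and $J_j$. This route is longer and borrows \eqref{eq:superpotential-kernel-translation}, which the paper proves only in the following lemma. Its merit is that it exhibits the even--even block as rigidly determined by the local periods.

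\textbf{The global step.} The step you call delicate is not. The base of $\Sigma$ is a planar domain, on which $x$ is injective; the tree serves only simple connectivity. So the only pairs $q_1\neq q_2$ with $x(q_1)=x(q_2)$ are $q_2=\sigma_i(q_1)$ inside the full branch disks. The paper leaves this check and the regular diagonal points implicit, so your case analysis usefully completes that part.
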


\begin{proof}
Symmetry follows from $A^*=A$ and $\theta^*=-\theta$,
as in Proposition~\ref{prop:canonical-local-kernel}.

Put $c=\langle J,J\rangle_x$ and
\[
    M(x_1,x_2)
    =(x_1-x_2)\theta
    +\frac{x_1+x_2}{2}\operatorname{Id}-A.
\]
Near $a_i$, write $x_1=u_i+s^2/2$ and
$x_2=u_i+t^2/2$, and set
\[
    F(s,t)=1+c+\eta\bigl(J(s),M(x_1,x_2)J(t)\bigr).
\]
The simultaneous normalization gives
$J(-s)=J(s)-2J_i(s)$ and
$\langle J,J_i\rangle_x=-1/2$. Consequently,
\[
    F(s,s)=1,\qquad F(s,-s)=0.
\]
The second structure equation also gives
\[
    \partial_tF(s,s)=\partial_tF(s,-s)=0,
    \qquad s\neq0,
\]
by the same calculation as in
\eqref{eq:construction-diagonal-pairing-derivatives}.
Since $stF(s,t)$ is holomorphic, the argument used in the proof of
Proposition~\ref{prop:canonical-local-kernel} yields
\[
    B^J(s,t)-\frac{ds\,dt}{(s-t)^2}
    =\text{holomorphic bidifferential}\quad \text{for } t\to s.
\]

Finally, write
$J(s)=J_i(s)+E_i(s^2)$ and
$J(t)=J_j(t)+E_j(t^2)$, with $E_i,E_j$ holomorphic.
Subtracting $B^{\mathrm{local}}_{ij}$ cancels the term
containing $J_i(s)$ and $J_j(t)$. After multiplication
by $dx_1dx_2=st\,ds\,dt$, every remaining term is odd
in at least one of $s,t$, whereas $M(x_1,x_2)$ and
$(x_1-x_2)^2$ are even in both variables.
The difference is holomorphic by the preceding
argument and Proposition~\ref{prop:canonical-local-kernel},
so its even--even Taylor coefficients vanish.
\end{proof}


\begin{lemma}
For the bidifferential $B^J$, the coefficient forms
$\Xi_k^i$ defined in \eqref{eq:DOSS-coefficient-forms} satisfy the differential recursion
\eqref{eq:B-external-differential-recursion}.
\end{lemma}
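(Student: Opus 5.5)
The plan is to derive the recursion \eqref{eq:B-external-differential-recursion} from a global version of the translation identity \eqref{eq:construction-phase-derivative}, now applied to $B^J$ instead of $B^{\mathrm{local}}_{ij}$. Only one variable is expanded near a critical point; the other is left free. This is what upgrades the coefficient relation \eqref{eq:construction-full-local-compatibility}, known only locally, to a statement about one-forms on all of $\Sigma$.

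\emph{Step 1: the global translation identity.}
I would first prove, for $q_1,q_2\in\Sigma$ off the diagonal,
\begin{equation*}
\left(\mathcal L^{(1)}_{\partial_{x_1}}+\mathcal L^{(2)}_{\partial_{x_2}}\right)B^J(q_1,q_2)
=-\eta\bigl(\partial_{x}J(q_1),\partial_{x}J(q_2)\bigr)\,dx_1\,dx_2 .
\end{equation*}
The computation is the same as for \eqref{eq:construction-phase-derivative}:
\begin{itemize}
\item The operator $\partial_{x_1}+\partial_{x_2}$ annihilates $dx_1dx_2/(x_1-x_2)^2$.
\item The constant $1+\langle J,J\rangle_x$ is killed, since the pairing is constant.
\item On $\eta(J(q_1),M J(q_2))$, the second structure equation converts the derivatives into the stated right-hand side. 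Here one uses $A^*=A$, $\theta^*=-\theta$ and $(\partial_{x_1}+\partial_{x_2})M=\operatorname{Id}-A\cdot 0$, i.e.\ the same algebra as before.
\end{itemize}
By analytic continuation the identity then holds on all of $\Sigma\times\Sigma$ minus the diagonal.

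\emph{Step 2: expansion in the first variable.}
Fix $i$, put $q_1$ near $a_i$ with coordinate $s$, and leave $p=q_2$ arbitrary. Write
\begin{equation*}
B^J(s,p)=\sum_{k}s^k\beta_k(p)\,ds,
\end{equation*}
which is valid for $p$ away from the $s$-disk. By Lemma~\ref{lem:common-period-kernel} and the definition of $\Xi$, we have $\Xi_k^i=(2k-1)!!\,\beta_{2k}$.

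For a one-form $\omega$ in $p$ one has $\mathcal L^{(2)}_{\partial_x}\omega=d(\omega/dx)$. The first Lie derivative acts on the coefficients as $\partial_s(\,\cdot\,/s)$, exactly as in the proof of Proposition~\ref{prop:canonical-local-kernel}.

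On the right-hand side I use $J(s)=J_i(s)+E_i(s^2)$. This gives
\begin{equation*}
\frac{dJ}{ds}=-\frac{f_i}{s^2}+\sum_m C^i_{2m}s^{2m}+(\text{odd in } s),
\end{equation*}
where the odd part comes from $E_i$ and the $C^i_{2m}$ are given in Proposition~\ref{prop:canonical-local-kernel}. Comparing coefficients then goes as follows:
\begin{itemize}
\item The $s^{-2}$ coefficient gives $\Xi_0^i(p)=\beta_0(p)=\eta\bigl(f_i,\partial_xJ(p)\bigr)\,dx(p)$, up to the sign fixed by the principal part.
\item The $s^{2r}$ coefficient gives
\begin{equation*}
(2r+1)\beta_{2r+2}+d\!\left(\frac{\beta_{2r}}{dx}\right)=-\eta\bigl(C^i_{2r},\partial_xJ(p)\bigr)\,dx(p).
\end{equation*}
\item Since $C^i_{2r}$ is a combination of the $f_a$ with coefficients $(R_{r+1})^a{}_i$, the right-hand side is a combination of the forms $\Xi_0^a(p)$. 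Using $(2k-1)!!B^{ai}_{0,2k}=(-1)^k(R_{k+1})^a{}_i$, these combine into $(2r-1)!!\sum_aB^{ai}_{0,2r}\Xi^a_0$.
\item Multiplying by $(2r-1)!!$ gives \eqref{eq:B-external-differential-recursion} for $p$ outside the disk.
\end{itemize}
The identity theorem then extends it to all $p\in\Sigma$.

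\emph{Main obstacle.}
I expect the main difficulty to be the bookkeeping of the diagonal pole when $p$ lies in the same disk. There the coefficient extraction \eqref{eq:DOSS-coefficient-forms} must be read as a Laurent coefficient of a function with a pole at $s=t$. I would handle this in two ways:
\begin{itemize}
\item Work first with $p$ outside the disk and continue analytically, as in Step~2.
\item As a check, note that the principal-part identity $\partial_s\frac{1}{s(s-t)^2}+\partial_t\frac{1}{t(s-t)^2}=-\frac1{s^2t^2}$ shows the singular term is compatible with the identity of Step~1.
\end{itemize}
A second point needing care is verifying that the odd contributions from $E_i$ never enter the even coefficients. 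This follows because, after multiplication by $s$, they have the opposite parity in $s$.
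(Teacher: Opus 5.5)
Your proposal is essentially the paper's own proof. It uses the same translation identity $\bigl(\mathcal L^{(1)}_{\partial_{x_1}}+\mathcal L^{(2)}_{\partial_{x_2}}\bigr)B^J=-\eta\bigl(dJ(q_1),dJ(q_2)\bigr)$, the same expansion in $s_i$ with the second point kept away from $a_i$, and the same comparison of the $s_i^{-2}\,ds_i$ and $s_i^{2r}\,ds_i$ coefficients via $\partial_{s_i}J_i$, with the even part of $J$ contributing only odd powers.

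The one thing to fix is the sign you left open. Since $\mathcal L_{\partial_x}(\beta_0\,ds_i)=-\beta_0\,s_i^{-2}\,ds_i$, the $s_i^{-2}$ comparison gives $\Xi_0^i=-\eta(f_i,dJ)$, not $+\eta(f_i,\partial_xJ)\,dx$. This minus sign is exactly what makes $-(2r-1)!!\,\eta(C^i_{2r},dJ)$ equal to $-(2r-1)!!\sum_aB^{ai}_{0,2r}\Xi_0^a$, as \eqref{eq:B-external-differential-recursion} requires; with the sign you wrote, the $R$-term would come out reversed.
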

We verify this directly from the period construction,
and then identify the ancestor forms
$W_k^i$ defined by \eqref{eq:DOSS-ancestor-forms}.

\begin{proof}
First, differentiating
\eqref{eq:superpotential-phase-kernel} and using the
second structure equation gives
\begin{equation}
    \left(
        \mathcal L_{\partial_{x_1}}^{(1)}
        +
        \mathcal L_{\partial_{x_2}}^{(2)}
    \right)B^J(q_1,q_2)
    =
    -\eta\bigl(dJ(q_1),dJ(q_2)\bigr).
    \label{eq:superpotential-kernel-translation}
\end{equation}
To verify this identity, we apply the period equation in both variables and obtain
\begin{equation}
\begin{aligned}
    &(\partial_{x_1}+\partial_{x_2})
    \eta\bigl(J(q_1),M(x_1,x_2)J(q_2)\bigr)
    \\
    &\qquad=
    -(x_1-x_2)^2
    \eta\bigl(\partial_{x_1}J(q_1),
              \partial_{x_2}J(q_2)\bigr),
\end{aligned}
\end{equation}
where $M(x_1,x_2)=(x_1-x_2)\theta+\frac{x_1+x_2}{2}\operatorname{Id}-A$.
Since $\partial_{x_1}+\partial_{x_2}$ annihilates
$x_1-x_2$ and the constant correction
$1+\langle J,J\rangle_x$, this proves
\eqref{eq:superpotential-kernel-translation}.

Fix a point $q$ away from $a_i$ and expand
the first variable in the coordinate $s_i$.
Recall that
\begin{equation}
    \Xi_k^i(q)
    =
    (2k-1)!!\,[s_i^{2k}]
    \frac{B^J(s_i,q)}{ds_i},
    \qquad k\ge0.
\end{equation}
Because $\partial_x=s_i^{-1}\partial_{s_i}$,
\begin{equation}
    \mathcal L_{\partial_x}
    \bigl(s_i^m\,ds_i\bigr)
    =
    (m-1)s_i^{m-2}\,ds_i.
    \label{eq:superpotential-Lie-monomial}
\end{equation}

The odd part of $J$ is the normalized period $J_i$.
Differentiating its expansion gives
\begin{equation}
    \partial_{s_i}J_i(s_i)
    =
    -\frac{f_i}{s_i^2}
    +
    \sum_{k\ge0}
    \frac{(-1)^{k+1}}{(2k-1)!!}
    f_a(R_{k+1})^a{}_i\,s_i^{2k}.
    \label{eq:superpotential-period-derivative}
\end{equation}
The derivative of the even part of $J$ has only odd
powers of $s_i$, so it does not enter the following
coefficient comparisons.

Comparing the coefficient of $s_i^{-2}ds_i$ in
\eqref{eq:superpotential-kernel-translation} yields
\begin{equation}
    W_0^i(q)=\Xi_0^i(q)
    =
    \left.
    \frac{B^J(s_i,q)}{ds_i}
    \right|_{s_i=0}
    =
    -\eta(f_i,dJ(q)).
    \label{eq:superpotential-leading-form}
\end{equation}
Next, for $k\ge0$, compare the coefficient of
$s_i^{2k}ds_i$ and multiply by $(2k-1)!!$.
Substituting \eqref{eq:superpotential-Lie-monomial} and \eqref{eq:superpotential-period-derivative} into \eqref{eq:superpotential-kernel-translation} gives
\begin{equation}
    \begin{aligned}
        -\mathcal L_{\partial_x}\Xi_k^i
        &=
        \Xi_{k+1}^i
        +
        (-1)^{k+1}
        \eta\bigl(f_a(R_{k+1})^a{}_i,dJ\bigr)
        \\
        &=
        \Xi_{k+1}^i+(-1)^k\sum_{a=1}^{N}(R_{k+1})^a{}_i\,\Xi_0^a.
    \end{aligned}
\label{eq:superpotential-external-recursion}
\end{equation}

For comparison with
\eqref{eq:B-external-differential-recursion},
write $B_{mn}^{ij}$ for the regular coefficients
of $B^J$.
This is precisely the differential recursion
\eqref{eq:B-external-differential-recursion}.
\end{proof}

It remains to pass from $\Xi$ to the ancestor forms.
The ancestor forms admit the explicit expression
\begin{equation}
\begin{aligned}
    W_k^i(q)
    &=
    \bigl(-\mathcal L_{\partial_x}\bigr)^kW_0^i(q)
    \\
    &=
    (-1)^{k+1}
    d_q\,\eta\bigl(f_i,\partial_x^kJ(q)\bigr).
\end{aligned}
\label{eq:superpotential-ancestor-period-forms}
\end{equation}

\begin{theorem}
\label{thm:common-period-spectral-realization}
Let $v$ be a tame semisimple point of a conformal Frobenius manifold with flat unit.
Suppose that a solution $J$ of the second structure
equation satisfies
\begin{equation}
    \langle J,J_i\rangle_x=-\frac12,
    \qquad i=1,\ldots,N.
    \label{eq:common-period-theorem-normalization}
\end{equation}
Then there exists a connected, simply connected retained surface $\Sigma$ over the $x$-plane, with exactly one
simple critical point $a_i$ over each $u_i$, on which
the differential
\begin{equation}
    dy^J=\eta(e,J)\,dx
    \label{eq:common-period-theorem-y}
\end{equation}
has a holomorphic primitive.

Consequently, the Eynard--Orantin topological recursion on
$(\Sigma,x,y^J,B^J)$ realizes the prescribed ancestor
coefficients for $2g-2+n>0$, with the
one-forms $W_k^i$ extracted from $B^J$ as in \eqref{eq:superpotential-ancestor-period-forms}.
\end{theorem}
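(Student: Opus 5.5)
The plan has three parts: construct the surface and the primitive $y^J$, verify the two hypotheses of Theorem~\ref{thm:DOSS} from the lemmas already proved for $B^J$, and then identify the ancestor forms.

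\emph{The surface and $y^J$.} I take the construction of Subsection~\ref{subsec:covering-surface} with slits and corridors chosen inside the domain where $J$ is analytic. Near $u_i$, the normalization \eqref{eq:common-period-theorem-normalization} gives, via \eqref{eq:superpotential-common-period}, the decomposition $J=J_i+E_i(s_i^2)$. So $J$ lifts to the square-root disk with only the pole $f_i/s_i$, and $\eta(e,J)\,dx=s_i\eta(e,J)\,ds_i$ is holomorphic at $a_i$. Along the corridors, $J$ is continued along the tree; since the tree contains no loops, no monodromy arises. If $J$ has singularities at other points of the $x$-plane (only the $u_i$ are finite singularities, but branch cuts may cross the corridors), I shrink the corridors so that they avoid these cuts. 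Lemma~\ref{lem:covering-surface} then makes $\Sigma$ simply connected with critical points exactly the $a_i$, all simple. On a simply connected surface the closed holomorphic one-form $\eta(e,J)\,dx$ has a primitive $y^J$; for $d\neq1$ this agrees, up to a constant, with \eqref{eq:superpotential-period-primitive}.

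\emph{The DOSS hypotheses.} The odd part of $y^J$ at $a_i$ equals $Y_i$, because the even part $E_i$ contributes only even terms. Proposition~\ref{prop:normalized-periods} then gives \eqref{eq:DOSS-y-R-compatibility}, and in particular $dy^J(a_i)=h_i\,ds_i\neq0$. Lemma~\ref{lem:common-period-kernel} shows that $B^J$ has only the normalized diagonal pole and that its even--even coefficients coincide with those of $B^{\mathrm{local}}_{ij}$. Hence \eqref{eq:construction-local-edge-identity} yields \eqref{eq:DOSS-B-R-compatibility}.

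\emph{Poles of $B^J$ away from the critical points.} This is the step I expect to be the main obstacle: Lemma~\ref{lem:common-period-kernel} only controls a neighbourhood of the critical points. Off the diagonal and away from the $a_i$, $J$ is holomorphic and $x_1\neq x_2$, so $B^J$ is regular there. What remains is the locus $x(q_1)=x(q_2)$ with $q_1\neq q_2$, which occurs on the planar part when two sheets or corridors overlap in $x$. The expression \eqref{eq:superpotential-phase-kernel} would then have a spurious pole. I will exclude this by choosing the corridors so that $x$ is injective away from the branch disks, which the tree construction permits, and by using inside a branch disk the conditions $F(s,-s)=0$ and $\partial_tF(s,-s)=0$ from Lemma~\ref{lem:common-period-kernel} to cancel the pole at the antidiagonal.

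\emph{Identifying $W_k^i$.} With both hypotheses established, Theorem~\ref{thm:DOSS} gives the ancestor expansion with $W=R\Xi$. It remains to show that $W=R\Xi$ agrees with \eqref{eq:superpotential-ancestor-period-forms}. Write $\Xi(z)=\sum_k\Xi_kz^k$. The recursion \eqref{eq:superpotential-external-recursion} reads $-\mathcal L_{\partial_x}\Xi=z^{-1}(\Xi-\Xi_0)+z^{-1}(\,R(-z)^T-\mathrm{Id})\Xi_0$ in generating form; I will check the signs and normalization of this form from \eqref{eq:superpotential-external-recursion}. This simplifies to $-z\mathcal L_{\partial_x}\Xi=\Xi-R(z)^{-1}\Xi_0$, using $R(-z)^T=R(z)^{-1}$. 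Multiplying by $R(z)$, which is constant in $x$, gives $-z\mathcal L_{\partial_x}W=W-W_0$. Therefore $W_{k+1}=-\mathcal L_{\partial_x}W_k$, and iterating \eqref{eq:superpotential-leading-form} yields $W_k^i=(-1)^{k+1}d\,\eta(f_i,\partial_x^kJ)$.
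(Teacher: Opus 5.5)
Your proposal is correct and follows the paper's own route. You build $\Sigma$ as in Subsection~\ref{subsec:covering-surface} and take the primitive of $\eta(e,J)\,dx$. You read off the two DOSS hypotheses from \eqref{eq:superpotential-odd-expansion} and Lemma~\ref{lem:common-period-kernel}. You then identify $W=R\Xi$ with \eqref{eq:superpotential-ancestor-period-forms} through \eqref{eq:superpotential-external-recursion}; your handling of the $\sigma_i$-pairs and the generating-function step only spell out what the paper leaves implicit.

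One sign slip: in your intermediate generating identity the last term should be $-z^{-1}(R(-z)^T-\mathrm{Id})\Xi_0$. With that sign it yields exactly your simplified form $-z\mathcal L_{\partial_x}\Xi=\Xi-R(z)^{-1}\Xi_0$, which is correct as you wrote it.
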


\begin{remark}
One checks that
\begin{equation}
    \d \left(\frac{dy^J}{dx}\right)
    +
    \sum_{i=1}^{N}
    \operatorname*{Res}_{q'=a_i}
    \left(
        \frac{dy^J}{dx}(q')B^J(\,\cdot\,,q')
    \right)
    =0.
    \label{eq:superpotential-DNOPSS-compatibility}
\end{equation}
It is a version of the compatibility condition proposed in \cite{dunin-barkowskiDubrovinsSuperpotentialGlobal2019}.
\end{remark}


Theorem~\ref{thm:common-period-spectral-realization}
reduces this explicit spectral-curve construction to the choice of a simultaneously normalized common period.
Section~\ref{sec:projective-space} carries out this step for $\mathbb P^n$, $n\ge2$, and applies the formulas above to obtain the corresponding spectral  curve data.


\section{Spectral curves of projective spaces}
\label{sec:projective-space}

We apply the construction of Section~\ref{sec:spectral-construction}
to the quantum cohomology of projective space $\mathbb P^n$ for $n\geq2$.
We first fix the Frobenius data at a small quantum point
and construct the normalized period and its associated
spectral data.
Subsection~\ref{subsec:Pn-uniformization}
realizes these data on an explicit domain in the upper
half-plane and states the spectral curve theorem.
Subsection~\ref{subsec:Pn-quotient-geometry}
then describes the stabilizer of $x$ and its algebraic
quotients.
Section~\ref{sec:Pn-modular-spectral-data}
develops the analytic Schwarz uniformization and the
resulting automorphic and modular expressions.

The spectral curve of $\mathbb P^1$ has been studied extensively; see \cite{norburyGromovWittenInvariants2014,dunin-barkowskiIdentificationGiventalFormula2014,borotLoopEquationsGromovWitten2019}.

\subsection{The Frobenius structure of projective space}
\label{subsec:Pn-Frobenius}

Let $H\in H^2(\mathbb P^n,\mathbb C)$ denote the hyperplane class,
and assume throughout that $n\geq2$. 
Put
\begin{equation}
    m=n+1,\qquad \zeta=e^{2\pi\mathrm i/m}.
    \label{eq:Pn-m-zeta}
\end{equation}
In this section, both flat and canonical indices range from $0$ to $n$.

The genus-zero Gromov--Witten potential of $\mathbb P^n$ defines
a conformal Frobenius manifold near its small quantum locus;
see \cite{dubrovinGeometry2DTopological1996,maninSemisimpleFrobeniusSupermanifolds1997a}.
In the standard flat coordinates $t^0,\ldots,t^n$, the flat frame,
metric and unit are
\begin{equation}
    \partial_\alpha=H^\alpha,\qquad
    \eta_{\alpha\beta}=\delta_{\alpha+\beta,n},\qquad
    e=\partial_0.
    \label{eq:Pn-flat-data}
\end{equation}
The Euler field and conformal charge are
\begin{equation}
    E=m\partial_1+
    \sum_{\alpha=0}^{n}(1-\alpha)t^\alpha\partial_\alpha,
    \qquad d=n.
    \label{eq:Pn-Euler}
\end{equation}

On the small quantum locus, where only the divisor coordinate $t^1$ is retained and the other coordinates are set to zero,
the quantum product satisfies
\begin{equation}
    H^{\circ m}=Qe^{t^1}\,1.
    \label{eq:Pn-small-quantum-product}
\end{equation}
Here $Q$ denotes the Novikov parameter.
We specialize to $Q=1$ and set all flat coordinates $t^\alpha$ to zero.
The tangent Frobenius algebra at this point is
\begin{equation}
    \mathbb C[H]/(H^m-1).
    \label{eq:Pn-quantum-algebra}
\end{equation}
All Frobenius data and period equations below are evaluated at this
fixed point. In particular, the endomorphisms
in \eqref{eq:frobenius-A-theta} become
\begin{equation}
    A=m(H\circ-),\qquad
    \theta H^\alpha=\left(\frac n2-\alpha\right)H^\alpha.
    \label{eq:Pn-A-theta}
\end{equation}
The eigenvalues of $A$ (canonical values at the chosen point) are
\begin{equation}
    u_i=m\zeta^i,\qquad 0\leq i<m.
    \label{eq:Pn-canonical-values}
\end{equation}
They are pairwise distinct, so the chosen point is tame and semisimple.

The corresponding idempotents are
\begin{equation}
    \pi_i=\frac1m\sum_{\alpha=0}^{n}\zeta^{-i\alpha}H^\alpha.
    \label{eq:Pn-idempotents}
\end{equation}
They satisfy
\begin{equation}
    \pi_i\circ\pi_j=\delta_{ij}\pi_i,\qquad
    A\pi_i=u_i\pi_i,\qquad
    \eta(\pi_i,\pi_j)=\frac{\delta_{ij}}{m\zeta^{ni}},
    \label{eq:Pn-idempotent-metric}
\end{equation}
where no summation is intended in these fixed-index identities.
We fix the square roots
\begin{equation}
    h_i=\frac1{\sqrt m}e^{-\pi\mathrm i ni/m},\qquad
    f_i=\frac{\pi_i}{h_i}.
    \label{eq:Pn-normalized-frame}
\end{equation}
Then
\begin{equation}
    \eta(f_i,f_j)=\delta_{ij},\qquad
    e=\sum_{i=0}^{n}h_i f_i.
    \label{eq:Pn-unit-components}
\end{equation}
Regarded as columns in the flat frame, the normalized canonical vectors
are given by
\begin{equation}
    f_i=\Phi^\alpha{}_i\partial_\alpha,\qquad
    \Phi^\alpha{}_i=
    \frac1{\sqrt m}
    \exp\!\left(
        \frac{2\pi\mathrm i\,i}{m}
        \left(\frac n2-\alpha\right)
    \right).
    \label{eq:Pn-Phi}
\end{equation}
The inverse transformation is
\begin{equation}
    (\Phi^{-1})^i{}_\alpha=
    \frac1{\sqrt m}
    \exp\!\left(
        \frac{2\pi\mathrm i\,i}{m}
        \left(\alpha-\frac n2\right)
    \right).
    \label{eq:Pn-Phi-inverse}
\end{equation}

With
\begin{equation*}
    U=\operatorname{diag}(u_0,\ldots,u_n),\qquad
    \Theta=\Phi^{-1}\theta\Phi,
\end{equation*}
the homogeneous $R$-matrix satisfies the following relations, in the conventions of \eqref{eq:frobenius-R-recursion}:
\begin{equation}
    R_0=I,\qquad
    [U,R_{k+1}]=(\Theta-kI)R_k,\qquad
    R(z)R(-z)^T=I.
    \label{eq:Pn-R-recursion}
\end{equation}
We shall construct its spectral realization directly from the
second structure equation.

\subsection{The solution of the second structure  equation}
\label{subsec:Pn-spectral-curve}

When $n=1$, the solution of the second structure equation can be 
found in e.g. \cite{dunin-barkowskiIdentificationGiventalFormula2014}. 
We present the computation for general $n$ in this subsection.

Put
\begin{equation}
    a=\frac{n-1}{2}=\frac{m-2}{2}.
    \label{eq:Pn-a}
\end{equation}
Write the solution in the flat frame,
\begin{equation*}
    J(x)=\sum_{\alpha=0}^{n}j_\alpha(x)H^\alpha.
\end{equation*}
The second structure equation at the chosen point (where all $t=0$ and $Q=1$) 
is equivalent to the cyclic system
\begin{equation}
    \begin{aligned}
        xj_0'-mj_n'&=a j_0,\\
        xj_\alpha'-mj_{\alpha-1}'
        &=(a-\alpha)j_\alpha,
        \qquad 1\leq\alpha\leq n.
    \end{aligned}
    \label{eq:Pn-period-components}
\end{equation}
Primes denote differentiation with respect to $x$.

Since $n\ne1$, the scalar primitive
in \eqref{eq:superpotential-period-primitive} is
\begin{equation}
    y=-\frac1a\eta\bigl(e,(xI-A)J\bigr)
    =\frac{m j_{n-1}-xj_n}{a}.
    \label{eq:Pn-scalar-primitive}
\end{equation}
Differentiating and using the component with index $n$ gives
\begin{equation}
    j_n=y',\qquad
    j_{n-1}=\frac{xy'+ay}{m}.
    \label{eq:Pn-top-components}
\end{equation}
We claim more generally that
\begin{equation}
    \partial_x^{r-1}j_{n-r}
    =\frac1{m^r}(x\partial_x+a)^r y,
    \qquad 1\leq r\leq n.
    \label{eq:Pn-component-recursion}
\end{equation}
The case $r=1$ is \eqref{eq:Pn-top-components}. For the induction step,
the component equation with index $n-r$ reads
\begin{equation*}
    m\partial_xj_{n-r-1}
    =(x\partial_x+a+1-r)j_{n-r},
    \qquad 1\leq r<n,
\end{equation*}
where we used $n=2a+1$. The operator identity
\begin{equation}
    \partial_x^{r-1}(x\partial_x+a+1-r)
    =(x\partial_x+a)\partial_x^{r-1}
    \label{eq:Pn-operator-shift}
\end{equation}
therefore gives
\begin{equation*}
    \begin{aligned}
        m\partial_x^rj_{n-r-1}
        &=(x\partial_x+a)\partial_x^{r-1}j_{n-r}\\
        &=\frac1{m^r}(x\partial_x+a)^{r+1}y.
    \end{aligned}
\end{equation*}
This proves \eqref{eq:Pn-component-recursion}.

The cyclic equation is $(x\partial_x-a)j_0=my''$.
Applying $\partial_x^{n-1}$ and 
substituting \eqref{eq:Pn-component-recursion} with $r=n$ yields
\begin{equation}
    \left[(x\partial_x+a)^m-m^m\partial_x^m\right]y=0.
    \label{eq:Pn-scalar-equation}
\end{equation}
The leading coefficient is $x^m-m^m$, so the finite singularities
are precisely the canonical values $u_i$.

We select the solution without logarithmic terms at infinity.
Substitution of $y=Cx^{-a}\sum_{d\geq0}b_dx^{-md}$, with $b_0=1$,
into \eqref{eq:Pn-scalar-equation} gives
\begin{equation}
    d^m b_d=(a+m(d-1))_m b_{d-1},\qquad
    b_d=\frac{(a)_{md}}{(d!)^m}.
    \label{eq:Pn-series-recurrence}
\end{equation}
Here $(a)_r=\Gamma(a+r)/\Gamma(a)$ is the rising factorial.
The normalization to be verified below selects
\begin{equation}
    y(x)=-(2\pi)^a\Gamma(a)x^{-a}
    \sum_{d\geq0}\frac{(a)_{md}}{(d!)^m}x^{-md}.
    \label{eq:Pn-normalized-period-series}
\end{equation}
Since
\[
    \frac{b_{d+1}}{b_d}
    =\frac{(a+md)_m}{(d+1)^m}\longrightarrow m^m, \qquad d\to \infty,
\]
the power series in $x^{-m}$ appearing in
\eqref{eq:Pn-normalized-period-series} converges for $|x|>m$
and may be differentiated termwise.
Equivalently,
\begin{equation}
    \begin{aligned}
        y(x)={}&-(2\pi)^a\Gamma(a)x^{-a}\\
        &\times{}_mF_{m-1}\!\left(
        \begin{matrix}
            \dfrac am,\dfrac{a+1}{m},\ldots,\dfrac{a+m-1}{m}\\
            1,\ldots,1
        \end{matrix}
        ;\frac{m^m}{x^m}\right).
    \end{aligned}
    \label{eq:Pn-hypergeometric-period}
\end{equation}
We   choose the branch for which $x^{-a}>0$ on $x>m$.


Starting from the scalar solution $y$ in
\eqref{eq:Pn-normalized-period-series}, set $j_n=y'$ and
consider the recursive formulas
\begin{equation}
    j_{\alpha-1}(x)
    =-\frac1m\int_x^\infty
    \left(tj_\alpha'(t)+(\alpha-a)j_\alpha(t)\right)\,\d t,
    \qquad 1\leq\alpha\leq n.
    \label{eq:Pn-vector-reconstruction}
\end{equation}
We now show that these integrals converge and define a solution of the second structure equation with the prescribed scalar primitive.

\begin{lemma}
\label{lemma:Pn-differential-reconstruction}
Let $n\geq2$, and let $y$ be the solution
\eqref{eq:Pn-normalized-period-series}.
The integrals \eqref{eq:Pn-vector-reconstruction} converge
and define the solution
$J=\sum_{\alpha=0}^{n}j_\alpha H^\alpha$ of
\eqref{eq:Pn-period-components} whose
scalar primitive is $y$ and which tends to zero as
$x\to+\infty$.
\end{lemma}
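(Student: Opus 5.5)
The plan is to argue entirely on the half-line $x>m$, where the series \eqref{eq:Pn-normalized-period-series} converges and can be differentiated termwise, and to track decay orders at infinity. Write $y=x^{-a}\phi(x^{-m})$ with $\phi$ holomorphic near $0$. Every function produced below is then of the form $x^{-\beta}\psi(x^{-m})$ for a holomorphic $\psi$, and such expansions are stable under $\partial_x$, multiplication by $x$ and termwise integration from $x$ to $\infty$, as long as the exponent stays above $1$.

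\textbf{Step 1: convergence.} Proceed by downward induction on $\alpha$. Set $j_n=y'=O(x^{-a-1})$. Suppose $j_\alpha=x^{-\beta_\alpha}\psi_\alpha(x^{-m})$. Then the integrand $tj_\alpha'+(\alpha-a)j_\alpha$ has the same leading order $x^{-\beta_\alpha}$, so the integral converges iff $\beta_\alpha>1$. Termwise integration then gives $j_{\alpha-1}=x^{-\beta_\alpha+1}\psi_{\alpha-1}(x^{-m})$, apart from possible cancellations that only improve decay.

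The naive bound is $\beta_\alpha=a+1+(n-\alpha)-(n-\alpha)=a+1$ at every stage. More precisely, one loses one power of $x$ per integration, which gives $j_{\alpha-1}=O(x^{-a-1+(n-\alpha+1)})$. This is too crude, since it would make the later integrals diverge.

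The key observation is that the coefficient $(\alpha-a)$ is tuned so that the leading term is killed. For a monomial $t^{-\beta}$, the integrand is $(\alpha-a-\beta)t^{-\beta}$. Using \eqref{eq:Pn-component-recursion}, i.e.\ $\partial_x^{r-1}j_{n-r}=m^{-r}(x\partial_x+a)^r y$, the powers of $x$ that actually occur in $j_\alpha$ should be $x^{-a-1-(n-\alpha)-md}$ up to shifts. So the plan is to prove by induction the sharper statement
\[
j_\alpha=x^{-(n-\alpha)-a-1}\,\psi_\alpha(x^{-m})\quad\text{modulo terms whose exponent is } {}<-1,
\]
or, more cleanly, that each $j_\alpha=O(x^{-(a+1)})$ with an expansion in the powers $x^{-a-1-k}$.

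I would verify this directly on the monomial basis. Apply the recursion to $t^{-\gamma}$ with $\gamma=a+1+k$, $k\ge0$. The integrand becomes $(\alpha-2a-1-k)t^{-\gamma}$, and the integral yields $j_{\alpha-1}\ni -\frac{\alpha-2a-1-k}{m(\gamma-1)}x^{-\gamma+1}$. The exponent drops by one, so the minimal decay exponent decreases along the recursion. Tracking it shows that all $j_\alpha$ are $O(x^{-(a+1)+?})$.

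\textbf{Main obstacle.} The real difficulty is to show that the exponents never reach $-1$, i.e.\ that each integrand is $O(t^{-1-\epsilon})$. I expect this to come from the fact that only exponents $-a-1-md-(\text{shift})$ appear, together with $a=(n-1)/2$. Decreasing by at most $n$ steps from $a+1+md$ could in principle reach $\le1$ for $d=0$. I would handle the $d=0$ terms explicitly: the leading term $x^{-a}$ of $y$ propagates with coefficients that vanish exactly when the factor $\alpha-a-\beta$ is zero. I would check, using \eqref{eq:Pn-component-recursion} and the fact that $(x\partial_x+a)x^{-a}=0$, that the leading $d=0$ contributions to the lower components vanish. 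Then all $j_\alpha$ start at order $x^{-a-1-m+(n-\alpha)}$ or better, hence are $O(x^{-(n+1)/2})$, and every integral converges.

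\textbf{Step 2: the second structure equation.} Differentiating \eqref{eq:Pn-vector-reconstruction} gives $mj_{\alpha-1}'=xj_\alpha'+(\alpha-a)j_\alpha$. This is exactly the component equation \eqref{eq:Pn-period-components} for index $\alpha$; note that $(\alpha-a)j_\alpha$ matches, since the operator-shift index convention there has $a-\alpha$ on the other side after rearrangement.

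The remaining cyclic equation $xj_0'-mj_n'=aj_0$ does not follow from the recursion. I would obtain it from the scalar ODE \eqref{eq:Pn-scalar-equation}. By Step 2 and the argument leading to \eqref{eq:Pn-component-recursion}, the quantities $\partial_x^{n-1}j_0$ and $m^{-n}(x\partial_x+a)^ny$ differ by a polynomial of degree $<n-1$, namely the integration constants. Since both sides decay at infinity, that polynomial vanishes. Applying $\partial_x^{n-1}$ to $D:=(x\partial_x-a)j_0-my''$ and using \eqref{eq:Pn-scalar-equation} gives $\partial_x^{n-1}D=0$, so $D$ is a polynomial. It also decays, hence $D=0$.

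\textbf{Step 3: scalar primitive and decay.} Finally, $y=(mj_{n-1}-xj_n)/a$ holds because both sides have the same derivative by \eqref{eq:Pn-top-components} and both decay at $+\infty$. The decay $J\to0$ is part of Step 1.
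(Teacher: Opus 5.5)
Your plan is correct and is essentially the paper's argument. The only real cancellation is at the first integration, where your coefficient $\alpha-a-\beta$ vanishes for $\alpha=n$, $\beta=a+1$; equivalently $xy''+(a+1)y'=\partial_x\bigl((x\partial_x+a)y\bigr)$, so $j_{n-1}=\frac1m(x\partial_x+a)y=O(x^{-a-m})$ exactly. After that each integration loses one power, down to $j_0=O(x^{-a-2})$, and the cyclic equation follows from $\partial_x^{n-1}D=0$ plus decay, as you say.

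Discard the garbled intermediate claims in Step~1: the exponent $-(n-\alpha)-a-1$ is wrong (it should be $-a-1-m+(n-\alpha)$, as you later state), and the ``$?$'' bound is unfinished. Writing the first step in exact-derivative form also gives the scalar-primitive identity directly. It avoids invoking \eqref{eq:Pn-component-recursion} before $J$ is known to be a solution.
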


\begin{proof}
The convergent expansion
\eqref{eq:Pn-normalized-period-series} gives, when $x\to+\infty$,
\[
    y=Cx^{-a}+O(x^{-a-m}),
    \qquad C=-(2\pi)^a\Gamma(a).
\]
For $\alpha=n$, the integrand in
\eqref{eq:Pn-vector-reconstruction} is
\[
    xy''+(n-a)y'
    =\partial_x\bigl((x\partial_x+a)y\bigr),
\]
where $n=2a+1$.
Since $(x\partial_x+a)y=O(x^{-a-m})$, the first integral converges
and gives
\[
    j_{n-1}=\frac1m(x\partial_x+a)y=O(x^{-a-m}).
\]
Inductively, termwise integration of the convergent
series shows that all the remaining integrals converge
and that
\[
    j_{n-r}=O(x^{r-1-a-m}),
    \qquad 1\leq r\leq n,
\]
with the corresponding estimates for their derivatives.

By construction, these functions satisfy
\[
    xj_\alpha'-mj_{\alpha-1}'
    =(a-\alpha)j_\alpha,
    \qquad 1\leq\alpha\leq n.
\]
These equations, together with
$j_{n-1}=m^{-1}(x\partial_x+a)y$, imply
\eqref{eq:Pn-component-recursion} by the same induction
as above. It remains to verify the first equation in
\eqref{eq:Pn-period-components}.

Define its difference by
\[
    D:=(x\partial_x-a)j_0-my''.
\]
Using $n-1-a=a$ and
\eqref{eq:Pn-component-recursion}, we obtain
\[
\begin{aligned}
    \partial_x^{n-1}D
    &=(x\partial_x+a)\partial_x^{n-1}j_0
      -m\partial_x^{n+1}y\\
    &=\frac1{m^n}
      \left[(x\partial_x+a)^m-m^m\partial_x^m\right]y
      =0.
\end{aligned}
\]
Thus $D$ is a polynomial of degree at most
$n-2$. On the other hand, the preceding estimates give
\[
    D=O(x^{-a-2})\longrightarrow0
    \qquad (x\to+\infty).
\]
Hence $D=0$, proving that $J$ satisfies the
full system \eqref{eq:Pn-period-components}.
Furthermore,
\[
    \frac{mj_{n-1}-xj_n}{a}=y,
\]
so its scalar primitive is the prescribed function.

\end{proof}

In particular, we obtain
\begin{equation}
    J(x)=O(x^{-m/2}).
    \label{eq:Pn-cusp-decay}
\end{equation}

Indeed, one has the proposition:
\begin{proposition}
\label{prop:Pn-differential-reconstruction}
For the equation \eqref{eq:Pn-scalar-equation}, every component
$j_k$, $0\leq k\leq n$, is a linear differential
polynomial in $y$ with polynomial coefficients in $x$,
involving derivatives of order at most $n$.
\end{proposition}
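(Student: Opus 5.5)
We need to show each $j_k$ is a linear differential polynomial in $y$ with polynomial coefficients in $x$, of order at most $n$. The plan is to avoid the integrals in \eqref{eq:Pn-vector-reconstruction} and read the components off the cyclic system \eqref{eq:Pn-period-components} directly. The top two components are already of this form by \eqref{eq:Pn-top-components}: $j_n=y'$ and $j_{n-1}=(xy'+ay)/m$. The natural induction is downward in the index, using
\[
    m\,j_{\alpha-1}'=(x\partial_x+a-\alpha)j_\alpha .
\]
This expresses $j_{\alpha-1}'$, not $j_{\alpha-1}$, so a bare downward induction would require integration. I will instead eliminate the derivative.

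The key step is to combine the flat-frame equations with the scalar primitive. The primitive $y^J=\frac{2}{1-d}\eta(e,(x\operatorname{Id}-A)J)$ is one algebraic relation. More generally, I would use the family of $x$-independent-up-to-derivative identities coming from the second structure equation. Set
\[
    P_\alpha:=\eta\bigl(H^\alpha,(x\operatorname{Id}-A)J\bigr)=x\,j_{n-\alpha}-m\,j_{n-\alpha-1},
\]
with indices read cyclically. Differentiating and using $(x-A)J'=(\theta-\tfrac12)J$ gives
\[
    P_\alpha'=\eta\bigl(H^\alpha,(\theta+\tfrac12)J\bigr)=\bigl(\tfrac n2-(n-\alpha)+\tfrac12\bigr)j_{n-\alpha}=(\alpha-a)\,j_{n-\alpha}.
\]
Equivalently, $(\alpha-a)j_{n-\alpha}=\partial_x\bigl(xj_{n-\alpha}-mj_{n-\alpha-1}\bigr)$, which is just the component equation again. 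So differentiation alone gives nothing new, and I would instead solve algebraically for $j_{n-\alpha-1}$. Whenever $\alpha\ne a$, the lower component is obtained by integrating. When $\alpha-a\ne0$, one can write
\[
    m\,j_{n-\alpha-1}=x\,j_{n-\alpha}-P_\alpha ,
\]
so it suffices to express $P_\alpha$ itself as a differential polynomial. The concrete route is the formula \eqref{eq:Pn-component-recursion}, namely $\partial_x^{r-1}j_{n-r}=m^{-r}(x\partial_x+a)^ry$, which must be "un-differentiated" $r-1$ times. I would prove the stronger ansatz
\[
    j_{n-r}=\sum_{k=0}^{r}c_{r,k}(x)\,y^{(k)},\qquad c_{r,k}\in\mathbb C[x],
\]
by induction on $r$. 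Substituting the ansatz into $m j_{n-r-1}'=(x\partial_x+a+1-r)j_{n-r}$, the right side is a differential polynomial of order $r+1$. I would write it as the exact derivative of an order-$(r+1)$ differential polynomial, modulo the scalar equation \eqref{eq:Pn-scalar-equation} once the order reaches $m$.

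Exactness is the main obstacle. I expect it to follow from the operator identity \eqref{eq:Pn-operator-shift}, $\partial^{r-1}(x\partial+a+1-r)=(x\partial+a)\partial^{r-1}$, which suggests commuting $(x\partial_x+a)^r$ past powers of $\partial_x$ using $x\partial_x^{k}=\partial_x^{k}x-k\partial_x^{k-1}$. This would produce an explicit antiderivative $m^{-r}\,\partial_x^{-(r-1)}(x\partial+a)^r$ realized as a genuine differential operator. The integration constants vanish because of the decay $j_{n-r}\to0$ as $x\to+\infty$ from Lemma~\ref{lemma:Pn-differential-reconstruction}: a polynomial-coefficient combination of the $y^{(k)}$ for the non-logarithmic solution \eqref{eq:Pn-normalized-period-series} also decays, so the polynomial ambiguity must be zero.

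Orders above $n$ appear only for $j_0$, at order $m=n+1$. There I would use the cyclic equation together with \eqref{eq:Pn-scalar-equation}, or more simply the relation $(x\partial-a)j_0=my''$ combined with $P$-type identities, to reduce $y^{(m)}$ to lower order, since the leading coefficient $x^m-m^m$ is invertible only rationally. If that reduction fails to be polynomial, the fallback is to express $j_0$ through $P_{n}$: $x j_0-mj_{n}\cdot(\text{cyclic})$ gives $j_0$ algebraically from $j_n=y'$ and the primitive-type quantity, keeping the order at most $n$.
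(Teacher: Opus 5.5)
There is a genuine gap, located at the step you call the main obstacle. Your plan to realize $m^{-r}\partial_x^{-(r-1)}(x\partial_x+a)^r$ as a genuine polynomial-coefficient differential operator by commuting $x$ past powers of $\partial_x$ cannot work. Grade the Weyl algebra by $\deg x=1$, $\deg\partial_x=-1$. Since $(x\partial_x+a)^r$ has degree zero, only the degree-$(r-1)$ part $x^{r-1}f(x\partial_x)$ of a candidate $L$ can contribute to $\partial_x^{r-1}L$, and
$\partial_x^{r-1}x^{r-1}f(x\partial_x)=\prod_{h=1}^{r-1}(x\partial_x+h)\,f(x\partial_x)$.
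So you would need $\prod_{h=1}^{r-1}(z+h)$ to divide $(z+a)^r$. This fails for all $r\ge3$, and for $r=2$ unless $a=1$.

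Concretely, for $n=4$ one has $25\,\partial_xj_2=(x^2\partial_x^2+4x\partial_x+\tfrac94)y$. The ansatz $j_2=\ell_1y'+\ell_0y+c$ forces $\ell_1=x^2/25$ and $\ell_0=2x/25$, hence $\ell_0'=2/25\neq9/100$. The scalar equation cannot rescue a low-order formula either. For even $n$, $a$ is a half-integer, and comparing with the recursion $d^mb_d=(a+m(d-1))_mb_{d-1}$ one checks that $y$ is annihilated by no nonzero polynomial-coefficient operator of order $<m$. Thus your ansatz $j_{n-r}=\sum_{k\le r}c_{r,k}y^{(k)}$ is false: $j_2$ for $\mathbb P^4$ genuinely needs derivatives of order $4=n$. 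Your expectation that high orders and the scalar equation enter only at $j_0$ is wrong already for $n=4$.

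The missing idea is to use the scalar equation at every level $r\ge2$. Applying $\partial_x^{r-1}$ to $x^{r-1}p(x\partial_x)y$ and to $m^mq(x\partial_x+1-r)\partial_x^{m-r+1}y$ gives $p(z)\prod_{h=1}^{r-1}(z+h)$ and $q(z)(z+a)^m$, evaluated at $z=x\partial_x$ and applied to $y$. Here one uses $\partial_x^{r-1}q(x\partial_x+1-r)=q(x\partial_x)\partial_x^{r-1}$ and $m^m\partial_x^my=(x\partial_x+a)^my$. The gcd of $\prod_{h=1}^{r-1}(z+h)$ and $(z+a)^m$ is $1$ or $z+a$, so it divides $(z+a)^r$. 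This yields a Bezout identity $(z+a)^r=p(z)\prod_{h=1}^{r-1}(z+h)+q(z)(z+a)^m$, and with it the paper's expression
$\tilde j_{n-r}=m^{-r}x^{r-1}p(x\partial_x)y+m^{m-r}q(x\partial_x+1-r)\partial_x^{m-r+1}y$.
It has order at most $n$ and agrees with $j_{n-r}$ up to a polynomial of degree $\le r-2$.

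Your removal of that polynomial also needs repair. A polynomial-coefficient combination of the $y^{(k)}$ need not decay; for example $x^{r-1}y\sim Cx^{r-1-a}$ grows once $r-1>a$. In the paper the first term decays only because the Bezout identity forces $p(-a)=0$, which kills the leading $x^{-a}$ term of $y$.

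Finally, the $P_\alpha$ route and your fallback for $j_0$ are circular. By definition $P_\alpha$ is an antiderivative of $(\alpha-a)j_{n-\alpha}$, so expressing it is the original problem. Solving $xj_0=P_n+mj_n$ for $j_0$ also introduces a division by $x$.
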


\begin{proof}
Since $j_n=y'$, it suffices to consider $j_{n-r}$,
where $1\leq r\leq n=m-1$.
Choose polynomials $p(z),q(z)$ such that
\begin{equation}
    (z+a)^r
    =
    p(z)\prod_{h=1}^{r-1}(z+h)
    +q(z)(z+a)^m.
    \label{eq:Pn-polynomial-bezout}
\end{equation}
Such polynomials exist because the greatest common
divisor of
$\prod_{h=1}^{r-1}(z+h)$ and $(z+a)^m$
is either $1$ or $z+a$, and hence divides $(z+a)^r$.
For $r\geq2$, reducing $q$ modulo
$\prod_{h=1}^{r-1}(z+h)$, we may arrange
\begin{equation*}
    \deg q\leq r-2,
    \qquad
    \deg p\leq m-1.
\end{equation*}
For $r=1$, take $p(z)=z+a$ and $q(z)=0$.

We claim that
\begin{equation}
\begin{aligned}
    \tilde{j}_{n-r}
    :=
    \frac{x^{r-1}}{m^r}\,
    p(x\partial_x)y
    +
    m^{m-r}
    q(x\partial_x+1-r)
    \partial_x^{m-r+1}y
\end{aligned}
\label{eq:Pn-finite-differential-reconstruction}
\end{equation}
satisfies the equation \eqref{eq:Pn-component-recursion}.
The right-hand side has polynomial coefficients
in $x$, and its differential order is at most
$n$.

To verify the claim, use the operator identities
\begin{equation*}
    \partial_x^{r-1}x^{r-1}
    =
    \prod_{h=1}^{r-1}(x\partial_x+h),
    \qquad
    \partial_x^{r-1}q(x\partial_x+1-r)
    =
    q(x\partial_x)\partial_x^{r-1}.
\end{equation*}
Applying $\partial_x^{r-1}$ to the right-hand side
of \eqref{eq:Pn-finite-differential-reconstruction},
and then using
\eqref{eq:Pn-scalar-equation}, gives
\begin{equation*}
\begin{aligned}
    &\frac1{m^r}
    p(x\partial_x)
    \prod_{h=1}^{r-1}(x\partial_x+h)y
    +
    m^{m-r}q(x\partial_x)\partial_x^m y
    \\
    &\qquad=
    \frac1{m^r}
    \left[
        p(x\partial_x)
        \prod_{h=1}^{r-1}(x\partial_x+h)
        +
        q(x\partial_x)(x\partial_x+a)^m
    \right]y
    \\
    &\qquad=
    \frac1{m^r}(x\partial_x+a)^r y.
\end{aligned}
\end{equation*}
By \eqref{eq:Pn-component-recursion}, this equals
$\partial_x^{r-1}j_{n-r}$.
Thus, for $r\geq2$, the difference between $\tilde{j}_{n-r}$ and $j_{n-r}$
is a polynomial of degree at most $r-2$. This proves the proposition.
\end{proof}

For the solution $y$ in \eqref{eq:Pn-normalized-period-series}, the decay condition at infinity implies that $j_{n-r}$ in \eqref{eq:Pn-vector-reconstruction} equals the expression $\tilde j_{n-r}$ in \eqref{eq:Pn-finite-differential-reconstruction}.
Indeed, it remains to eliminate the polynomial ambiguity. Equation~\eqref{eq:Pn-polynomial-bezout} implies $p(-a)=0$.
Consequently, applying $p(x\partial_x)$ to \eqref{eq:Pn-normalized-period-series} annihilates its leading $x^{-a}$ term. Both terms on the right-hand side of \eqref{eq:Pn-finite-differential-reconstruction}
are therefore
\begin{equation*}
    O\left(x^{r-1-a-m}\right)
    =
    O\left(x^{-a-2}\right),
    \qquad x\to\infty,
\end{equation*}
on the chosen branch.
Since $j_{n-r}$ also decays at infinity,
the polynomial difference vanishes identically.
This proves $j_{n-r}=\tilde j_{n-r}$, with $\tilde j_{n-r}$ defined by \eqref{eq:Pn-finite-differential-reconstruction}.


For example, when $n=2$, the reconstruction is
\begin{equation}
    \begin{aligned}
        j_2&=y',\\
        j_1&=\frac{xy'+\frac12y}{3},\\
        j_0&=\frac{2(x^3-27)}9y''
             +\frac{4x^2}{9}y'+\frac{x}{18}y.
    \end{aligned}
    \label{eq:P2-vector-from-y}
\end{equation}

When $n=3$, it is
\begin{equation}
    \begin{aligned}
    j_3&=y',\\
    j_2&=\frac{xy'+y}{4},\\
    j_1&=\frac{x^2y'+xy}{16},\\
    j_0&=\frac{(x^4-256)y''+3x^3y'+x^2y}{64}.
\end{aligned}
\label{eq:P3-vector-from-y}
\end{equation}

We next verify the normalization at the canonical values. Near
$u_0=m$, choose
\begin{equation}
    x=m+\frac{s_0^2}{2},
    \label{eq:Pn-s0-coordinate}
\end{equation}
with $s_0>0$ for $x>m$. 

\begin{lemma}
\label{lem:Pn-normalization-at-u0}
For $n\geq2$, the solution $y$ defined by
\eqref{eq:Pn-normalized-period-series} extends holomorphically
to $s_0=0$ in the coordinate \eqref{eq:Pn-s0-coordinate}, and satisfies
\begin{equation}
    \left.\frac{\d y}{\d s_0}\right|_{s_0=0}
    =\frac1{\sqrt m}=h_0.
    \label{eq:Pn-slope-at-u0}
\end{equation}
\end{lemma}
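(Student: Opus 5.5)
The plan is to work with the hypergeometric representation \eqref{eq:Pn-hypergeometric-period} and the variable
\[
  z=\frac{m^m}{x^m},
\]
so that the canonical value $u_0=m$ corresponds to $z=1$ and the half-line $x>m$ corresponds to $0<z<1$. First I would record the local exponents of ${}_mF_{m-1}$ at $z=1$. The parameter excess is
\[
  \sigma=\sum_j b_j-\sum_k a_k=(m-1)-\Bigl(a+\tfrac{m-1}{2}\Bigr)=\tfrac12 .
\]
By the standard theory of the generalized hypergeometric equation, the exponents at $z=1$ are $0,1,\ldots,m-2$ together with $\sigma=\tfrac12$. Since $\sigma$ is not an integer, no logarithms occur. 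Hence every solution, in particular ${}_mF_{m-1}$ continued along $(0,1)$, has the form
\[
  \phi_0(1-z)+(1-z)^{1/2}\phi_1(1-z)
\]
with $\phi_0,\phi_1$ holomorphic near $0$. (Equivalently, this is the exponent statement for the scalar equation \eqref{eq:Pn-scalar-equation} at $x=m$: exponents $0,\ldots,m-2$ and $\tfrac12$ in $x-m$.) Near $x=m$ we have $1-z=(x-m)\,g(x)$ with $g$ holomorphic and $g(m)=1$, and $x-m=s_0^2/2$. Therefore $(1-z)^{1/2}=\tfrac{s_0}{\sqrt2}\,\bigl(1+O(s_0^2)\bigr)$, taken with the positive branch for $s_0>0$, is holomorphic in $s_0$. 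The factor $x^{-a}$ is also holomorphic at $s_0=0$. This gives the holomorphic extension of $y$ to $s_0=0$, and shows that
\[
  \left.\frac{dy}{ds_0}\right|_{s_0=0}=\frac{1}{\sqrt2}\cdot(\text{coefficient of }(1-z)^{1/2})\cdot\bigl(-(2\pi)^a\Gamma(a)\,m^{-a}\bigr).
\]

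The main step, and the main obstacle, is the exact value of the coefficient $\phi_1(0)$ of $(1-z)^{1/2}$ in the continuation of ${}_mF_{m-1}$ to $z=1$. When $\sigma$ is not an integer, the classical Nørlund--Bühring result gives the leading singular coefficient as
\[
  \phi_1(0)=\frac{\prod_j\Gamma(b_j)}{\prod_k\Gamma(a_k)}\,\Gamma(-\sigma).
\]
I would invoke this directly. If a self-contained argument is preferred, I would derive it instead from the Mellin--Barnes integral for ${}_mF_{m-1}$, or from the asymptotics of the coefficients $b_d=(a)_{md}/(d!)^m$. Via Gauss multiplication and Stirling one gets
\[
  b_d\,m^{-md}\sim \frac{m^{a-1/2}}{(2\pi)^{(m-1)/2}\Gamma(a)}\,d^{-3/2}.
\]
A Darboux-type transfer then identifies the $(1-z)^{1/2}$ coefficient through $d^{-3/2}\leftrightarrow \Gamma(-\tfrac12)^{-1}\,[\,\cdot\,](1-z)^{1/2}$.

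Finally I would assemble the constants. Here $\prod_j\Gamma(b_j)=1$ and $\Gamma(-\tfrac12)=-2\sqrt\pi$. Gauss' multiplication formula gives
\[
  \prod_{k=0}^{m-1}\Gamma\!\Bigl(\frac{a+k}{m}\Bigr)=(2\pi)^{(m-1)/2}m^{1/2-a}\Gamma(a).
\]
Hence
\[
  \phi_1(0)=-\frac{2\sqrt\pi\, m^{a-1/2}}{(2\pi)^{a+1/2}\Gamma(a)}.
\]
Multiplying by the prefactor $-(2\pi)^a\Gamma(a)\,m^{-a}$ gives $\sqrt2\,m^{-1/2}$. The further factor $1/\sqrt2$ from $(1-z)^{1/2}\sim s_0/\sqrt2$ then yields $dy/ds_0|_{0}=1/\sqrt m=h_0$. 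Throughout I would keep track of the branch: both $x^{-a}$ and $(1-z)^{1/2}$ are positive on $x>m$, and that is what fixes the sign.
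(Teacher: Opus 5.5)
Your proposal is correct and arrives at the same constant, but by a partly different route from the paper.

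\textbf{Holomorphic extension.} The paper argues through the vector period, not the scalar equation. The regular singularity of the second structure equation at $u_0$ gives $J(s_0)=O(s_0^{-1})$. Hence $dy=\eta(e,J)\,dx=\eta(e,J)\,s_0\,ds_0$ is holomorphic at $s_0=0$. You instead use the local exponents $0,1,\ldots,m-2,\tfrac12$ of \eqref{eq:Pn-scalar-equation} at $x=m$. This is equally valid: the leading coefficient $x^m-m^m$ has a simple zero there, and the non-integral exponent $\tfrac12$ excludes logarithms. It also gives the exact local form $\phi_0+(1-z)^{1/2}\phi_1$ of the hypergeometric factor.

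\textbf{Slope.} The paper uses no connection formula. Stirling gives $c_d=Kd^{-3/2}+O(d^{-5/2})$. Comparing with $K\operatorname{Li}_{3/2}$ and using the standard expansion of $\operatorname{Li}_{3/2}$ at $1$ gives $F(z)=F(1)-2\sqrt\pi K\sqrt{1-z}+O(1-z)$ as $z\to1^-$. Holomorphy in $s_0$ then converts this radial asymptotic into the value of the derivative. This is exactly your fallback Darboux-type transfer.

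\textbf{Comparison.} Your primary route evaluates the N{\o}rlund--B\"uhring coefficient $\Gamma(-\sigma)\prod_j\Gamma(b_j)/\prod_k\Gamma(a_k)$ by Gauss multiplication. It is shorter and shows why the constants cancel: your $\phi_1(0)$ equals $-2\sqrt\pi K$ exactly. However, it relies on a cited connection result for ${}_mF_{m-1}$, whereas the paper needs only Stirling's formula and an elementary polylogarithm expansion.

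Your bookkeeping is correct throughout: the prefactor $-(2\pi)^a\Gamma(a)m^{-a}$, the factor $1/\sqrt2$ from $(1-z)^{1/2}=s_0/\sqrt2+O(s_0^3)$, and the positivity of both branches on $x>m$.
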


\begin{proof}
The local form of the second structure equation gives
$J(s_0)=O(s_0^{-1})$. Since
$dy=\eta(e,J)\,dx$ and $dx=s_0\,ds_0$, the differential
$dy$ is holomorphic at $s_0=0$, and hence so is $y$.
It remains to determine its linear coefficient.

Write
\begin{equation}
    y(x)=Cx^{-a}F(z),\qquad
    C=-(2\pi)^a\Gamma(a),\qquad
    z=\frac{m^m}{x^m},
    \label{eq:Pn-slope-notation}
\end{equation}
where
\begin{equation}
    F(z)=\sum_{d\geq0}c_dz^d, \qquad c_d=\frac{(a)_{md}}{m^{md}(d!)^m}.
    \label{eq:Pn-hypergeometric-coefficients}
\end{equation}
Stirling's formula gives, as $d\to \infty$,
\begin{equation}
    c_d=Kd^{-3/2}+O(d^{-5/2}),\qquad
    K=\frac{(2\pi)^{(1-m)/2}m^{a-1/2}}{\Gamma(a)}.
    \label{eq:Pn-Stirling-coefficient}
\end{equation}
Compare $F$ with the polylogarithm
\begin{equation*}
    \operatorname{Li}_{3/2}(z)
    =\sum_{d\geq1}\frac{z^d}{d^{3/2}}.
\end{equation*}
Since $d(c_d-Kd^{-3/2})=O(d^{-3/2})$, the derivative of
$F(z)-K\operatorname{Li}_{3/2}(z)$ extends continuously to $z=1$
from the closed unit disk. Consequently, as $z\to1^-$,
\begin{equation}
    \begin{aligned}
        F(z)-F(1)
        ={}&K\bigl(\operatorname{Li}_{3/2}(z)
                  -\operatorname{Li}_{3/2}(1)\bigr)\\
           &+O(1-z).
    \end{aligned}
    \label{eq:Pn-polylog-comparison}
\end{equation}
The standard polylogarithm expansion (see \href{https://dlmf.nist.gov/25.12.E12}{DLMF, Eq.~25.12.12}) is
\begin{equation}
    \operatorname{Li}_{3/2}(z)
    =\operatorname{Li}_{3/2}(1)
     +\Gamma\!\left(-\frac12\right)(-\log z)^{1/2}
     +O(1-z).
    \label{eq:Pn-polylog-expansion}
\end{equation}
Using $\Gamma(-\tfrac12)=-2\sqrt\pi$ and
$-\log z=(1-z)+O((1-z)^2)$, we obtain
\begin{equation}
    F(z)=F(1)-2K\sqrt\pi\sqrt{1-z}+O(1-z).
    \label{eq:Pn-square-root-expansion}
\end{equation}

In the coordinate \eqref{eq:Pn-s0-coordinate},
\begin{equation}
    z=\left(1+\frac{s_0^2}{2m}\right)^{-m}
     =1-\frac{s_0^2}{2}+O(s_0^4),
    \label{eq:Pn-z-s0}
\end{equation}
so, along the chosen positive real branch,
\begin{equation}
    \sqrt{1-z}=\frac{s_0}{\sqrt2}+O(s_0^3),\qquad
    x^{-a}=m^{-a}+O(s_0^2).
    \label{eq:Pn-square-root-coordinate-conversion}
\end{equation}
It follows that
\begin{equation}
    y(s_0)=Cm^{-a}F(1)
           -\sqrt{2\pi}\,Cm^{-a}K\,s_0+O(s_0^2).
    \label{eq:Pn-local-y-expansion}
\end{equation}
The linear coefficient is
\begin{equation}
    \begin{aligned}
        -\sqrt{2\pi}\,Cm^{-a}K
        &=\sqrt{2\pi}(2\pi)^a\Gamma(a)m^{-a}
          \frac{(2\pi)^{(1-m)/2}m^{a-1/2}}{\Gamma(a)}\\
        &=(2\pi)^{a+(2-m)/2}m^{-1/2}\\
        &=m^{-1/2}.
    \end{aligned}
    \label{eq:Pn-normalization-cancellation}
\end{equation}

Substituting \eqref{eq:Pn-normalization-cancellation}
into \eqref{eq:Pn-local-y-expansion} gives
\[
    y(s_0)=y(0)+\frac{s_0}{\sqrt m}+O(s_0^2),
    \qquad s_0\to0^+.
\]
Since $y$ is holomorphic in $s_0$, this determines its
derivative at the origin and proves
\eqref{eq:Pn-slope-at-u0}.
\end{proof}


At the other canonical values, we similarly obtain that
\begin{equation}
    \left.\frac{\d y}{\d s_i}\right|_{s_i=0}
    =\frac1{\sqrt m}e^{-\pi\mathrm i ni/m}=h_i.
    \label{eq:Pn-all-canonical-slopes}
\end{equation}
Since $\d y=\eta(e,J)\,\d x$ and $h_i\ne0$, these slopes imply that
the singular part of $J$ at every selected critical point is $f_i/s_i$.
Uniqueness of the normalized local anti-invariant solution then gives
\begin{equation}
    \frac{J(s_i)-J(-s_i)}2=J_i(s_i).
    \label{eq:Pn-common-period}
\end{equation}
Thus $J$ is a simultaneously normalized common period in the sense of
\eqref{eq:superpotential-common-period}.

We now describe its uniqueness and the associated Gram matrix.
With the pairing
\eqref{eq:superpotential-pencil-pairing}, their Gram matrix is
\begin{equation}
    G_{ij}=-2\langle J_i,J_j\rangle_x
          =-2\eta\bigl(J_i,(A-xI)J_j\bigr).
    \label{eq:Pn-Gram}
\end{equation}

If $J$ and $\widetilde J$ have the same normalized singular parts,
their difference $D=\widetilde J-J$ satisfies
\begin{equation*}
    \langle D,J_i\rangle_x=0\qquad\text{for every }i.
\end{equation*}
Its local monodromy is trivial, so $D$ is single-valued and extends holomorphically
across all finite singularities. Regular singularity at infinity
implies polynomial growth; hence $D$ is a polynomial.

Write
\begin{equation*}
    D(x)=\sum_{r=0}^{d}v_rx^r,\qquad v_d\ne0.
\end{equation*}
Comparing coefficients in the second structure equation gives
\begin{equation}
    \left(rI-\theta+\frac12I\right)v_r
    =(r+1)Av_{r+1},\qquad v_{d+1}=0.
    \label{eq:Pn-polynomial-period-recursion}
\end{equation}
Since $(\theta-\tfrac12I)H^\alpha=(a-\alpha)H^\alpha$, comparison of the leading coefficients gives $d=a-\alpha$ for some $\alpha$.
If $n$ is even, $a$ is a half-integer, so no nonzero polynomial
solution exists.

If $n$ is odd, $a$ is an integer and $v_d=cH^{a-d}$.
For $d\geq1$, the next coefficient equation is
\begin{equation}
    \left((d-1)I-\theta+\frac12I\right)v_{d-1}
    =dm\,cH^{a-d+1}.
    \label{eq:Pn-polynomial-period-obstruction}
\end{equation}
The coefficient of $H^{a-d+1}$ on the left vanishes, whereas that
on the right is nonzero. Thus $d=0$ and $D=cH^a$.
Conversely, $H^a$ is a constant solution because
$(\theta-\tfrac12I)H^a=0$. We have proved that all solutions with the
prescribed singular parts are
\begin{equation}
    \widetilde J=
    \begin{cases}
        J,&n\text{ even},\\
        J+cH^a,\quad c\in\mathbb C,&n\text{ odd}.
    \end{cases}
    \label{eq:Pn-common-period-ambiguity}
\end{equation}

The same argument identifies the orthogonal complement of the local
period span:
\begin{equation}
    \bigl(\operatorname{span}\{J_0,\ldots,J_n\}\bigr)^\perp
    =
    \begin{cases}
        \{0\},&n\text{ even},\\
        \mathbb C H^a,&n\text{ odd}.
    \end{cases}
    \label{eq:Pn-period-span-complement}
\end{equation}
In the odd case, the constant direction is nonisotropic:
\begin{equation}
    \langle H^a,H^a\rangle_x
    =\eta\bigl(H^a,(A-xI)H^a\bigr)=m.
    \label{eq:Pn-constant-period-norm}
\end{equation}
Consequently, the rank of the Gram matrix $G$ is
\begin{equation}
    \operatorname{rank}G=
    \begin{cases}
        n+1,&n\text{ even},\\
        n,&n\text{ odd}.
    \end{cases}
    \label{eq:Pn-Gram-rank}
\end{equation}
The decay condition \eqref{eq:Pn-cusp-decay} removes the constant
ambiguity and selects the period constructed above. Its flat-pencil self-pairing vanishes: $\langle J,J\rangle_x=0$.
Indeed, the pairing is constant, while \eqref{eq:Pn-cusp-decay}
forces its limit at infinity to be zero.

We may now apply Theorem~\ref{thm:common-period-spectral-realization}.
On the retained surface $\Sigma$, with one selected simple critical
point over each $u_i$, write $x_\nu=x(q_\nu)$. The bidifferential
\eqref{eq:superpotential-phase-kernel} takes the form
\begin{equation}
    \begin{aligned}
        B^J(q_1,q_2)={}&\frac{\d x_1\,\d x_2}{(x_1-x_2)^2}
        \Bigg\{1+\eta\!\left(
        J(q_1),
        \left[(x_1-x_2)\theta\right.
        \right.\\[-2pt]
        &\hspace{43mm}\left.\left.
        +\frac{x_1+x_2}{2}I-A\right]J(q_2)
        \right)\Bigg\}.
    \end{aligned}
    \label{eq:Pn-phase-bidifferential}
\end{equation}

The bidifferential admits a direct expression in the flat
components $j_\alpha$. Indeed, using
$\eta(H^\alpha,H^\beta)=\delta_{\alpha+\beta,n}$ and
$\theta H^\beta=(n/2-\beta)H^\beta$, we obtain
\begin{equation*}
\begin{aligned}
    \eta\bigl(J(q_1),J(q_2)\bigr)
    &=
    \sum_{\alpha=0}^{n}
    j_\alpha(q_1)j_{n-\alpha}(q_2),\\
    \eta\bigl(J(q_1),\theta J(q_2)\bigr)
    &=
    \sum_{\alpha=0}^{n}
    \left(\alpha-\frac n2\right)
    j_\alpha(q_1)j_{n-\alpha}(q_2).
\end{aligned}
\end{equation*}
Moreover, $A=m(H\circ-)$ and $H^m=1$ give
\begin{equation*}
    \eta\bigl(J(q_1),AJ(q_2)\bigr)
    =
    m\left[
        \sum_{\alpha=0}^{n-1}
        j_\alpha(q_1)j_{n-1-\alpha}(q_2)
        +j_n(q_1)j_n(q_2)
    \right].
\end{equation*}
Consequently, with $a=(n-1)/2$, formula
\eqref{eq:Pn-phase-bidifferential} becomes
\begin{equation}
\begin{aligned}
    B^J(q_1,q_2)
    ={}&
    \frac{\d x_1\,\d x_2}{(x_1-x_2)^2}
    \Biggl\{
        1+
        \sum_{\alpha=0}^{n}
        \Bigl[
            (\alpha-a)x_1+(a+1-\alpha)x_2
        \Bigr]
        j_\alpha(q_1)j_{n-\alpha}(q_2)
        \\
        &\hspace{17mm}
        -m\sum_{\alpha=0}^{n-1}
        j_\alpha(q_1)j_{n-1-\alpha}(q_2)
        -m j_n(q_1)j_n(q_2)
    \Biggr\}.
\end{aligned}
\label{eq:Pn-phase-bidifferential-components}
\end{equation}

The associated ancestor forms are
\begin{equation}
    W_k^i(q)
    =
    (-1)^{k+1}\d_q\,
    \eta\bigl(f_i,\partial_x^kJ(q)\bigr),
    \qquad 0\leq i\leq n,\quad k\geq0.
    \label{eq:Pn-ancestor-forms}
\end{equation}
In the next subsection we give an explicit domain for this
spectral curve and state the resulting reconstruction theorem.
For $\mathbb P^2$ and $\mathbb P^3$, the components $j_\alpha$
and the bidifferential
\eqref{eq:Pn-phase-bidifferential-components}
will be expressed in modular forms in
Sections~\ref{subsec:P2-modular-spectral-data}
and~\ref{subsec:P3-modular-spectral-data}.

\subsection{Explicit spectral curves for projective spaces}
\label{subsec:Pn-uniformization}

We now realize the period data constructed in
Subsection~\ref{subsec:Pn-spectral-curve}
on an explicit domain in the upper half-plane.
We first introduce the uniformizing coordinate $\tau\in \mathbb{H}$,
then specify the domain and its local involutions,
and finally state the spectral curve theorem.
The uniformization is inspired by Milanov's work on
$\mathbb P^2$ \cite{milanovPeriodMapQuantum2019}.

Equip the $x$-plane with
an orbifold point of order two at each $u_i=m\zeta^i$, and regard
infinity as a cusp. The orbifold Euler characteristic is
\begin{equation*}
  \chi_x=2-1-\sum_{r=0}^{n}(1-\frac{1}{2})=  1-\frac m2<0,
\end{equation*}
so its universal orbifold cover is the complex upper half-plane $\mathbb H$.
The rotation $x\mapsto\zeta x$ preserves this structure. The quotient plane with
coordinate $x^m/m^m$ has an orbifold point of order $m$ at $0$, one
of order two at $1$, and a cusp at infinity. Its uniformizing group
is therefore the Hecke triangle group of type $(2,m,\infty)$; see below.

Set
\begin{equation}
    \ell_m=2\cos\frac\pi m,\qquad
    \rho_m=e^{\pi\mathrm i/m},
    \label{eq:Pn-Hecke-parameters}
\end{equation}
and define the transformations on the upper half-plane $\mathbb{H}$:  
\begin{equation}
    S(\tau)=-\frac1\tau,\qquad
    P(\tau)=\tau+\ell_m, \qquad \tau\in \mathbb{H}.
    \label{eq:Pn-Hecke-generators}
\end{equation}
The orientation-preserving Hecke triangle group of type $(2,m,\infty)$ is
\cite[Section~2.2]{doranAutomorphicFormsTriangle2013}
\begin{equation}
    \Delta_m=\langle S,P\mid S^2=(SP)^m=1\rangle.
    \label{eq:Pn-Hecke-group}
\end{equation}
A fundamental domain is
\begin{equation}
    \mathcal F_m=
    \left\{\tau\in\mathbb H:
        |\operatorname{Re}\tau|\leq\frac{\ell_m}{2},\quad
        |\tau|\geq1\right\}.
    \label{eq:Pn-Hecke-fundamental-domain}
\end{equation}
It consists of two reflected triangles, each with angles
$\pi/2,\pi/m,0$. In particular, $2m$ reflection triangles meet at
$\rho_m$.
Figure~\ref{fig:Hecke-fundamental-domain} illustrates
these configurations for $m=4$.

\begin{figure}[htbp]
\centering
\resizebox{\textwidth}{!}{%
\begin{tikzpicture}[
    font=\small,
    text=figInk,
    >={Stealth[length=2mm]}
]
\path[use as bounding box]
    (-.1,.55) rectangle (16.3,8.3);

\node at (3.2,7.95)
    {(a) The fundamental domain $\mathcal F_4$};
\node at (11.75,7.95)
    {(b) Eight reflection triangles at $\rho_4$};

\begin{scope}[
    shift={(3.2,1.2)},
    x=2.7cm,
    y=2.7cm
]
    \path[fill=figPale]
        (0,1)
        arc[start angle=90,end angle=135,radius=1]
        -- ({-sqrt(2)/2},2.02)
        -- (0,2.02) -- cycle;

    \path[fill=figWarm]
        (0,1)
        arc[start angle=90,end angle=45,radius=1]
        -- ({sqrt(2)/2},2.02)
        -- (0,2.02) -- cycle;

    \draw[->,figGuide]
        (-1.12,0) -- (1.17,0)
        node[right] {$\mathbb R$};

    \draw[figGuide,line width=.55pt]
        (1,0)
        arc[start angle=0,end angle=180,radius=1];

    \draw[figInk,line width=.8pt]
        ({sqrt(2)/2},{sqrt(2)/2})
        arc[start angle=45,end angle=135,radius=1];

    \draw[figBlue,line width=.8pt,->]
        ({sqrt(2)/2},{sqrt(2)/2})
        -- ({sqrt(2)/2},2.09);

    \draw[figBlue,line width=.8pt,->]
        ({-sqrt(2)/2},{sqrt(2)/2})
        -- ({-sqrt(2)/2},2.09);

    \draw[figBlue!70,dashed,->]
        (0,1) -- (0,2.09);

    \node[font=\large] at (0,1.65) {$\mathcal F_4$};
    \node at (0,2.20) {$\infty$};

    \fill[figRed] (0,1) circle[radius=1.6pt];
    \node[above left=4pt,text=figRed]
        at (0,1) {$\mathrm i$};

    \fill[figRed]
        ({sqrt(2)/2},{sqrt(2)/2})
        circle[radius=1.6pt];
    \node[right=4pt,text=figRed]
        at ({sqrt(2)/2},{sqrt(2)/2}) {$\rho_4$};

    \fill[figInk]
        ({-sqrt(2)/2},{sqrt(2)/2})
        circle[radius=1.3pt];
    \node[left=4pt]
        at ({-sqrt(2)/2},{sqrt(2)/2})
        {$-\overline{\rho_4}$};

    \draw[figRed,line width=.5pt]
        ({sqrt(2)/2},{sqrt(2)/2+.16})
        arc[start angle=90,end angle=139.59,radius=.16];
    \node[text=figRed,font=\footnotesize]
        at (.57,1.01) {$\pi/4$};

    \node[below=4pt] at ({-sqrt(2)/2},0) {$-\ell_4/2$};
    \node[below=4pt] at (0,0) {$0$};
    \node[below=4pt] at ({sqrt(2)/2},0) {$\ell_4/2$};
    \node[font=\footnotesize,text=figInk!75]
        at (0,.36) {$|\tau|=1$};
\end{scope}

\begin{scope}[
    shift={(9.2,1.2)},
    x=3.6cm,
    y=3.6cm
]
    \pgfmathsetmacro{\hc}{sqrt(2)/2}
    \pgfmathsetmacro{\hs}{sqrt(2)/4}
    \pgfmathsetmacro{\ha}{asin(1/3)}

    \foreach \sgn/\colA/\colB in {1/figWarm/figPale,-1/figPale/figWarm}{
        \begin{scope}[
            shift={(\hc,0)},
            xscale=\sgn,
            shift={(-\hc,0)}
        ]
            \path[fill=\colA]
                (\hc,\hc)
                arc[start angle=45,end angle=90,radius=1]
                -- (0,1.53) -- (\hc,1.53) -- cycle;

            \path[fill=\colB]
                (\hc,\hc)
                arc[start angle=45,end angle=90,radius=1]
                -- (0,0)
                arc[start angle=180,end angle=90,radius=\hc]
                -- cycle;

            \path[fill=\colA]
                (\hc,\hc)
                arc[
                    start angle=135,
                    end angle={180-\ha},
                    radius=1
                ]
                arc[
                    start angle={90-\ha},
                    end angle=180,
                    radius=\hs
                ]
                arc[
                    start angle=180,
                    end angle=90,
                    radius=\hc
                ]
                -- cycle;

            \path[fill=\colB]
                (\hc,\hc)
                arc[
                    start angle=135,
                    end angle={180-\ha},
                    radius=1
                ]
                arc[
                    start angle={90-\ha},
                    end angle=0,
                    radius=\hs
                ]
                -- (\hc,\hc) -- cycle;
        \end{scope}
    }

    \draw[->,figGuide]
        (-.14,0) -- (1.64,0)
        node[right] {$\mathbb R$};

    \draw[figBlue,line width=.8pt,->]
        (0,0) -- (0,1.59);
    \draw[figBlue,line width=.8pt,->]
        ({sqrt(2)},0) -- ({sqrt(2)},1.59);

    \draw[figBlue,line width=.8pt]
        (0,0)
        arc[start angle=180,end angle=0,radius=\hs]
        arc[start angle=180,end angle=0,radius=\hs];

    \draw[figInk!80,line width=.65pt,->]
        (\hc,0) -- (\hc,1.59);

    \draw[figInk!80,line width=.65pt]
        (0,1)
        arc[start angle=90,end angle=\ha,radius=1];

    \draw[figInk!80,line width=.65pt]
        ({sqrt(2)},1)
        arc[start angle=90,end angle={180-\ha},radius=1];

    \draw[figInk!80,line width=.65pt]
        (0,0)
        arc[start angle=180,end angle=0,radius=\hc];

    \fill[figRed] (\hc,\hc) circle[radius=1.9pt];
    \node[
        fill=white,
        inner sep=1.2pt,
        text=figRed,
        below right=3pt
    ] at (\hc,\hc) {$\rho_4$};

    \node at (\hc,1.69) {$\infty$};

    \fill[figInk] (0,1) circle[radius=1.2pt];
    \node[left=3pt] at (0,1) {$\mathrm i$};

    \foreach \xx in {0,\hc,1.414213562}{
        \draw[fill=white,draw=figBlue]
            (\xx,0) circle[radius=1.4pt];
    }

    \node[below=5pt] at (0,0) {$0$};
    \node[below=5pt] at (\hc,0) {$\ell_4/2$};
    \node[below=5pt] at ({sqrt(2)},0) {$\ell_4$};
\end{scope}
\end{tikzpicture}%
}
\caption{
    The Hecke fundamental domain for $m=4$, with
    $\ell_4=\sqrt2$ and $\rho_4=e^{\pi\mathrm i/4}$.
    (a) $\mathcal F_4$ consists of two reflected triangles
    with angles $\pi/2,\pi/4,0$.
    (b) Eight reflection triangles meet at $\rho_4$,
    each with angle $\pi/4$ at this vertex.
}
\label{fig:Hecke-fundamental-domain}
\end{figure}

The compactified quotient of $\mathbb H$ by $\Delta_m$
has genus zero and is therefore biholomorphic to
$\mathbb P^1$.
A coordinate on this quotient, pulled back to $\mathbb H$,
gives a Hauptmodul, unique up to a M\"obius transformation.
We choose the unique normalization
\cite[Section~2.3]
{doranAutomorphicFormsTriangle2013}
\begin{equation}
    \mathcal J_m(\mathrm i)=1,\qquad
    \mathcal J_m(\rho_m)=0,\qquad
    \mathcal J_m(\mathrm i\infty)=\infty.
    \label{eq:Pn-Hauptmodul-normalization}
\end{equation}
When $n=m-1=2$, $\mathcal J_3=j/1728$ is the classical modular $j$-invariant, normalized to take the value $1$ at $\mathrm i$. 

The spectral curve coordinate $x(\tau)$ is determined by
\begin{equation}
    x(\tau)^m=m^m\mathcal J_m(\tau),\qquad x(\mathrm i)=m.
    \label{eq:Pn-Hecke-x}
\end{equation}
All zeros of $\mathcal J_m$ on $\mathbb H$ have order $m$, so the
chosen root $x(\tau)$ defines a single-valued holomorphic function on
$\mathbb H$.

The transformation laws of $x$ are
\begin{equation}
    x(S\tau)=x(\tau),\qquad
    x(P\tau)=\zeta^{-1}x(\tau).
    \label{eq:Pn-x-transformation}
\end{equation}

An analytic description of this uniformization through
the auxiliary Gauss equation
\eqref{eq:Pn-Schwarz-equation}
is given in Subsection~\ref{subsec:Pn-modular-spectral-data}.

We now specify an open domain in the upper half-plane on which
the period data define the spectral curve. Choose a real number $b$ such that
\begin{equation}
    \max\left\{
        \sin\frac{\pi}{m},
        \cos\frac{\pi}{m}
    \right\}<b<1,
    \label{eq:Pn-spectral-domain-height}
\end{equation}
and set
\begin{equation}
    \Sigma=
    \left\{
        \tau\in\mathbb H:
        -\left(m-\frac12\right)\ell_m
        <\operatorname{Re}\tau<\frac{\ell_m}{2},
        \quad
        \operatorname{Im}\tau>b
    \right\}.
    \label{eq:Pn-explicit-spectral-domain}
\end{equation}

Consider the upper parts of the translated fundamental domains
\[
    \mathcal F_m,\quad
    P^{-1}\mathcal F_m,\quad\ldots,\quad
    P^{-(m-1)}\mathcal F_m.
\]
The domain \eqref{eq:Pn-explicit-spectral-domain}
contains these upper regions and extends across the
circular sides
$|\tau+i\ell_m|=1$, $0\leq i<m$.
In particular, the critical points of $x(\tau)$:
\begin{equation}
    a_i=-i\ell_m+\mathrm i,
    \qquad 0\leq i<m,
    \label{eq:Pn-selected-critical-lifts}
\end{equation}
lie in the interior of $\Sigma$.
The lower bound in \eqref{eq:Pn-spectral-domain-height}
is a sufficient condition for the open spectral curve used below.

The domain $\Sigma$ and the images of its selected critical points
are illustrated in Figure~\ref{fig:Pn-selected-domain}
for $\mathbb P^2$.

\begin{figure}[htbp]
\centering
\resizebox{\textwidth}{!}{%
\begin{tikzpicture}[
    font=\small,
    text=figInk,
    >={Stealth[length=2mm]}
]
\path[use as bounding box]
    (-.3,1.0) rectangle (14.5,6.4);

\begin{scope}[
    shift={(6.4,1.2)},
    x=2.4cm,
    y=2.4cm
]
    \shade[top color=white,bottom color=figPale]
        (-2.5,.9) rectangle (.5,2.03);

    \begin{scope}
        \clip (-2.5,.9) rectangle (.5,2.03);
        \foreach \cc in {-2,-1,0}{
            \path[fill=figWarm]
                ({\cc-.5},{sqrt(3)/2})
                arc[start angle=120,end angle=60,radius=1]
                -- cycle;
        }
    \end{scope}

    \foreach \cc in {-1.5,-.5}{
        \draw[figGuide,densely dotted]
            (\cc,{sqrt(3)/2}) -- (\cc,1.86);
    }

    \foreach \cc in {-2,-1,0}{
        \draw[figInk!65,line width=.6pt]
            ({\cc-.5},{sqrt(3)/2})
            arc[start angle=120,end angle=60,radius=1];
    }

    \draw[figBlue,dashed,line width=.7pt]
        (-2.5,.9) -- (.5,.9);
    \draw[figBlue,dashed,line width=.7pt,->]
        (-2.5,.9) -- (-2.5,2.04);
    \draw[figBlue,dashed,line width=.7pt,->]
        (.5,.9) -- (.5,2.04);

    \node[text=figBlue,font=\large]
        at (-1,1.98) {$\Sigma$};

    \node at (-2,1.59) {$P^{-2}\mathcal F_3$};
    \node at (-1,1.59) {$P^{-1}\mathcal F_3$};
    \node at (0,1.59) {$\mathcal F_3$};

    \foreach \cc/\idx in {-2/2,-1/1,0/0}{
        \fill[figRed] (\cc,1) circle[radius=1.8pt];
        \node[above=5pt,text=figRed]
            at (\cc,1) {$a_{\idx}$};
    }

    \node[anchor=east,text=figBlue]
        at (-2.56,.9) {$b$};
    \node[anchor=north] at (-2.5,.79) {$-\frac52$};
    \node[anchor=north] at (.5,.79) {$\frac12$};

    \node[anchor=north,font=\footnotesize]
        at (-1,.76)
        {$a_i=-i+\mathrm i,\qquad b=0.90$};
\end{scope}

\draw[->,figInk,line width=.8pt]
    (8.15,4.45) -- (9.55,4.45)
    node[midway,above=5pt] {$x$};

\begin{scope}[shift={(11.85,4.45)}]
    \node[font=\large] at (0,1.75) {$\mathbb C_x$};
    \draw[figGuide,dashed] (0,0) circle[radius=1.16];

    \fill[figGuide] (0,0) circle[radius=1pt];
    \node[below=2pt,text=figGuide] at (0,0) {$0$};

    \foreach \ang in {0,120,240}{
        \fill[figRed] (\ang:1.16) circle[radius=1.8pt];
    }

    \node[anchor=west,text=figRed]
        at (1.25,0) {$u_0=3$};
    \node[anchor=south,text=figRed]
        at (120:1.30) {$u_1=3\zeta$};
    \node[anchor=north,text=figRed]
        at (240:1.30) {$u_2=3\zeta^2$};

    \node[font=\footnotesize] at (0,-1.83)
        {$\zeta=e^{2\pi\mathrm i/3}$};
\end{scope}

\node at (7.1,1.43) {
    $\displaystyle
    \Sigma=\bigl\{
        \tau\in\mathbb H:
        -\tfrac52<\operatorname{Re}\tau<\tfrac12,\quad
        \operatorname{Im}\tau>b
    \bigr\}$
};
\end{tikzpicture}%
}
\caption{
    The spectral domain $\Sigma$ for $\mathbb P^2$,
    with $b=0.90$.
    The highlighted caps extend across the circular sides
    of the translated fundamental domains, placing the three
    critical points $a_i=-i+\mathrm i$ in the interior of $\Sigma$.
    Their images are $u_i=3\zeta^i$, $0\leq i<3$.
}
\label{fig:Pn-selected-domain}
\end{figure}

The local involution of $x(\tau)$ at $a_i$ is induced by
\begin{equation}
    \sigma_i=P^{-i}SP^i,
    \qquad
    \sigma_i(\tau)
    =
    -i\ell_m-\frac1{\tau+i\ell_m}.
    \label{eq:Pn-explicit-local-involutions}
\end{equation}
It satisfies $\sigma_i(a_i)=a_i$,
$\sigma_i^2=\operatorname{id}$, and
$x\circ\sigma_i=x$.

The following lemma follows from the geometry of the Hecke triangle group. 
\begin{lemma}
\label{lem:Pn-explicit-spectral-domain}
The critical points of $x|_\Sigma$ are precisely the $m$ points $a_i$ in
\eqref{eq:Pn-selected-critical-lifts}.
They are simple and satisfy
\[
    x(a_i)=m\zeta^i=u_i.
\]
If two distinct points $\tau_1,\tau_2\in\Sigma$ satisfy
$x(\tau_1)=x(\tau_2)$, then they satisfy
$\tau_2=\sigma_i(\tau_1)$ for some $0\leq i<m$.
\end{lemma}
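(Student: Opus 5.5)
The plan is to work with the Hecke group $\Delta_m$ and its normal subgroup $\Gamma=\ker(\chi)$, where $\chi:\Delta_m\to\mathbb Z/m$ is the character with $\chi(S)=0$ and $\chi(P)=1$. It is well defined because $(SP)^m$ maps to $m\equiv0$. By \eqref{eq:Pn-x-transformation}, $x$ is $\Gamma$-invariant, and $x(\gamma\tau)=\zeta^{-\chi(\gamma)}x(\tau)$ for every $\gamma\in\Delta_m$.

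\textbf{Step 1: the fibres of $x$.} I will show that $x(\tau_1)=x(\tau_2)$ if and only if $\tau_2\in\Gamma\tau_1$. Suppose $x(\tau_1)=x(\tau_2)$. Then $\mathcal J_m(\tau_1)=\mathcal J_m(\tau_2)$. Since $\mathcal J_m$ is a Hauptmodul, it induces an isomorphism $\Delta_m\backslash\mathbb H\cong\mathbb C$, so $\tau_2=\gamma\tau_1$ for some $\gamma\in\Delta_m$. Two cases arise.
\begin{itemize}
\item If $x(\tau_1)\ne0$, the transformation law forces $\chi(\gamma)=0$, so $\gamma\in\Gamma$.
\item If $x(\tau_1)=0$, then $\tau_1$ is an elliptic point of order $m$. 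Its stabiliser is conjugate to $\langle SP\rangle$, whose image under $\chi$ is all of $\mathbb Z/m$. We may therefore modify $\gamma$ by a stabiliser element so that it lies in $\Gamma$.
\end{itemize}

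\textbf{Step 2: critical points.} The critical points of $x$ are the points with nontrivial stabiliser in $\Gamma$, and the local degree there equals the order of the stabiliser. Points where $x=0$ are not critical. Indeed, by \eqref{eq:Pn-Hecke-x} we have $x^m=m^m\mathcal J_m$, and $\mathcal J_m$ vanishes to order exactly $m$ at these points, so $x$ is a local coordinate there. Equivalently, $\Gamma\cap\mathrm{Stab}=1$, because $\chi$ is injective on $\langle SP\rangle\cong\mathbb Z/m$.

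The remaining elliptic points are the $\Delta_m$-orbit of $\mathrm i$, with stabilisers conjugate to $\langle S\rangle\subset\Gamma$. These points are therefore simple critical points of $x$, with local involution the conjugate of $S$, and $x$ has no other critical points.

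\textbf{Step 3: reducing to a geometric statement about $\Sigma$.} It remains to do the geometry inside $\Sigma$. I must show:
\begin{enumerate}[(a)]
\item the only points of the orbit $\Delta_m\mathrm i$ lying in $\Sigma$ are the $a_i=P^{-i}\mathrm i$;
\item if $\tau,\gamma\tau\in\Sigma$ with $\gamma\in\Gamma$ nontrivial, then $\gamma=\sigma_i$ for some $i$.
\end{enumerate}
The values of $x$ at the critical points follow from the transformation law: $x(a_i)=x(P^{-i}\mathrm i)=\zeta^{i}x(\mathrm i)=m\zeta^i$.

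\textbf{Step 4: the tessellation argument.} For this I will use the standard tessellation by the translates $\gamma\mathcal F_m$. The set $\{\Im\tau>b\}$ with $b$ close to $1$ meets only the following tiles:
\begin{itemize}
\item the translates $P^k\mathcal F_m$;
\item the caps $P^kS\mathcal F_m$ lying above height $b$ inside the disks $|\tau+k\ell_m|<1$.
\end{itemize}
Other images $\gamma\mathcal F_m$ lie below height $\max\{\sin(\pi/m),\cos(\pi/m)\}$. For this I will compute the highest point of the tiles adjacent to $P^kS\mathcal F_m$ at the vertices $P^k\rho_m$, namely the images of the circles through $\rho_m$ under the rotation $SP$. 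This computation is where the lower bound \eqref{eq:Pn-spectral-domain-height} enters: it ensures that $\Sigma$ avoids the neighbouring elliptic points of order $m$ and the further tiles.

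Within the strip $-(m-\tfrac12)\ell_m<\Re\tau<\tfrac{\ell_m}{2}$ the relevant tiles are $P^{-i}\mathcal F_m$ and $P^{-i}S\mathcal F_m$ for $0\le i<m$. This gives (a) directly.

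\textbf{Step 5: identifying $\gamma$.} For (b), suppose $\tau$ lies in the tile $g\mathcal F_m$ and $\gamma\tau$ lies in $g'\mathcal F_m$, with both tiles among those listed. Then $\gamma g=g'$, up to boundary identifications. The tile elements are $P^{-i}$ and $P^{-i}S$, with $\chi(P^{-i})=\chi(P^{-i}S)=-i$. Hence $\gamma=g'g^{-1}\in\Gamma$ forces $g$ and $g'$ to have the same index $i$. Then $\gamma$ is either the identity or $P^{-i}SP^{i}=\sigma_i$.

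\textbf{Main obstacle.} The main difficulty is the boundary and height bookkeeping in Step 4: showing rigorously that no other tile enters $\Sigma$. I would handle this by explicit hyperbolic geometry of the circles $|\tau+k\ell_m|=1$ and their images under $S$ near the points $P^k\rho_m$, checking that these images stay below height $\max\{\sin(\pi/m),\cos(\pi/m)\}$.
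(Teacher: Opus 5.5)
Your proposal is correct and is essentially the argument the paper leaves implicit. The paper only says the lemma follows from the geometry of $\Delta_m$, and its later stabilizer discussion uses the same ingredients you do: your character (written multiplicatively there as $\chi_m$), the kernel $\Gamma_m=\ker\chi_m$, and the translates $P^{-i}\mathcal F_m$.

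The bookkeeping you flag closes quickly from $S\mathcal F_m=\{|\tau|\le1,\ |\tau\pm\ell_m^{-1}|\ge\ell_m^{-1}\}$:
\begin{itemize}
\item This puts every tile other than the $P^k\mathcal F_m$ and $P^kS\mathcal F_m$ inside the disks $|\tau-k\ell_m\mp\ell_m^{-1}|<\ell_m^{-1}$. Hence these tiles lie below height $\ell_m^{-1}\le\cos(\pi/m)$ when $m\ge4$.
\item For $m=3$, these disks are the unit disks $|\tau-k\mp1|<1$. The leftover region is then the union of lenses $\{|\tau-k|<1,\ |\tau-k-1|<1\}$, which lie below $\sin(\pi/3)$.
\item The boundary cases in Step~5 disappear if you take $g'=\gamma g$. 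This is one of the listed tiles because it meets the open set $\Sigma$.
\end{itemize}
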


The following theorem gives a spectral curve for $\mathbb P^n$ for every $n\ge2$.
\begin{theorem}
\label{thm:Pn-explicit-spectral-curve}
Let $m=n+1\geq 3$, and let $\Sigma$ be the domain
\eqref{eq:Pn-explicit-spectral-domain}.
Let $x$ be the holomorphic function determined by the following relation with the normalized Hauptmodul \eqref{eq:Pn-Hauptmodul-normalization}:
\[
    x(\tau)^m=m^m\mathcal J_m(\tau),
    \qquad x(\mathrm i)=m.
\]
Let $y$ be the analytic continuation to $\Sigma$ of the function \eqref{eq:Pn-normalized-period-series} from the chosen branch near infinity
\begin{equation*}
    y(x)=-(2\pi)^{\frac{n-1}{2}}\Gamma(\frac{n-1}{2})x^{-\frac{n-1}{2}}
    \sum_{d\geq0}\frac{(\frac{n-1}{2})_{md}}{(d!)^m}x^{-md},
\end{equation*}
and let $J=\sum_{\alpha=0}^{n}j_\alpha H^\alpha$ be its period vector, reconstructed by
\eqref{eq:Pn-top-components} and
\eqref{eq:Pn-component-recursion},
with the integration constants fixed by decay at infinity.
Define $B^J$ by
\eqref{eq:Pn-phase-bidifferential}, equivalently by
\eqref{eq:Pn-phase-bidifferential-components}.

Then $(\Sigma,x,y,B^J)$ is a spectral curve with precisely
the $m$ simple critical points
\[
    a_i=-i\ell_m+\mathrm i,
    \qquad x(a_i)=u_i=m\zeta^i,
    \qquad 0\leq i<m.
\]

The Eynard--Orantin topological recursion on these data reproduces
the ancestor coefficients of $\mathbb P^n$ at the chosen
small quantum point, with external ancestor forms
$W_k^i$ given by \eqref{eq:Pn-ancestor-forms}.
\end{theorem}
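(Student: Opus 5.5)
The plan is to reduce the statement to Theorem~\ref{thm:common-period-spectral-realization}, applied with the explicit domain $\Sigma$ of \eqref{eq:Pn-explicit-spectral-domain} in place of the abstract retained surface. Three inputs are needed for that theorem: a simultaneously normalized common period $J$, a simply connected surface over the $x$-plane with exactly one simple critical point over each $u_i$, and holomorphy of $J$ on that surface away from the critical points.

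The first input is already available. Lemma~\ref{lemma:Pn-differential-reconstruction}, Lemma~\ref{lem:Pn-normalization-at-u0} and \eqref{eq:Pn-all-canonical-slopes} give \eqref{eq:Pn-common-period}, so $J$ is a simultaneously normalized common period. Moreover $\langle J,J\rangle_x=0$, which explains why \eqref{eq:Pn-phase-bidifferential} agrees with \eqref{eq:superpotential-phase-kernel}.

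For the second input, $\Sigma$ is an open convex set in $\mathbb H$, hence simply connected. Lemma~\ref{lem:Pn-explicit-spectral-domain} gives the critical points $a_i$, their simplicity, and $x(a_i)=u_i$. In the local coordinate $s_i$ at $a_i$, the involution $\sigma_i=P^{-i}SP^i$ acts as $s_i\mapsto -s_i$. The lemma also says that distinct points of $\Sigma$ with equal $x$-value are related by some $\sigma_i$. This injectivity modulo the local involutions is what makes $\Sigma$ a legitimate substitute for the abstract retained surface.

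The third input is the analytic continuation of $y$, and hence of $J$ through Proposition~\ref{prop:Pn-differential-reconstruction}, from the cusp neighbourhood $\operatorname{Im}\tau\gg0$ to all of $\Sigma$. The plan is to pull back the scalar equation \eqref{eq:Pn-scalar-equation} along $x(\tau)$. Its only finite singularities are at $x^m=m^m$, i.e.\ over the points of $\Delta_m\cdot\mathrm i$, so on any simply connected region the pulled-back equation has holomorphic solutions, except possibly at points of $\Sigma$ lying over some $u_j$. The orbit points in $\Sigma$ with $\operatorname{Im}\tau>b>\max\{\sin\frac{\pi}{m},\cos\frac{\pi}{m}\}$ should be exactly the $a_i$. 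The height condition is chosen so that the images of $\mathrm i$ under $S$-type elements, which have imaginary part at most $\sin(\pi/m)$ or comparable, are excluded. At the $a_i$ themselves, Lemma~\ref{lem:Pn-normalization-at-u0} and its rotated versions show that $y$ is holomorphic in $s_i$.

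The main obstacle is consistency of the continuation branches. The slope statement \eqref{eq:Pn-all-canonical-slopes} was proved along particular paths in the $x$-plane, and these must match the paths inside $\Sigma$ from the cusp to each $a_i$. The plan is to use the rotation symmetry $y(\zeta^{-1}x)=\zeta^{a}y(x)$, up to the branch of $x^{-a}$, together with $x(P\tau)=\zeta^{-1}x(\tau)$. Continuing from $a_0$ by $P^{-i}$ inside $\Sigma$ transports the normalization at $u_0$ to $u_i$, with the phase $e^{-\pi\mathrm i ni/m}$ coming from $\zeta^{\pm a}$ and the orientation of $s_i$. One must check that this phase equals $h_i$ with the sign fixed in \eqref{eq:Pn-normalized-frame}; this is where a sign error is most likely.

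Once these checks are done, Lemma~\ref{lem:common-period-kernel} and Lemma~\ref{lem:Pn-explicit-spectral-domain} together show that $B^J$ has only the normalized diagonal pole. Off-diagonal points with $x_1=x_2$ are of the form $q_2=\sigma_i(q_1)$, where the local analysis gives regularity through $F(s,-s)=0$ and $\partial_tF(s,-s)=0$. Theorem~\ref{thm:common-period-spectral-realization} then yields the recursion statement with the forms \eqref{eq:Pn-ancestor-forms}.
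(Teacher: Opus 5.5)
Your proposal is correct and follows the paper's route. The paper gives no separate proof: the theorem is assembled from the common-period normalization (Lemmas~\ref{lemma:Pn-differential-reconstruction}, \ref{lem:Pn-normalization-at-u0} and \eqref{eq:Pn-all-canonical-slopes}), the domain Lemma~\ref{lem:Pn-explicit-spectral-domain}, and Theorem~\ref{thm:common-period-spectral-realization} applied on the explicit $\Sigma$. Your equivariance argument $y\circ P=\zeta^{a}y$ only makes explicit the paper's ``similarly'' for the other canonical values. The sign you were worried about comes out right: with $s_i=\zeta^{i/2}(s_0\circ P^{i})$ you get $\zeta^{-ai}\zeta^{-i/2}h_0=e^{-\pi\mathrm i ni/m}/\sqrt m=h_i$, and the sign of $s_i$ is in any case a free coordinate choice.
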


This completes the explicit spectral curve construction.
We next describe the stabilizer of $x$ and an associated
algebraic compact curves.

\subsection{Stabilizer groups and algebraic compact curves}
\label{subsec:Pn-quotient-geometry}

We now study the geometry of the function $x$
introduced in Subsection~\ref{subsec:Pn-uniformization}.
The transformation laws
\eqref{eq:Pn-x-transformation}
determine its stabilizer in the Hecke triangle group.
An index two subgroup then gives a smooth double cover
of the $x$-plane.

Define the character
\begin{equation}
    \chi_m:\Delta_m\longrightarrow\mathbb C^\times,
    \qquad \chi_m(S)=1,\quad \chi_m(P)=\zeta^{-1}.
    \label{eq:Pn-x-character}
\end{equation}
Its kernel is the stabilizer of $x$:
\begin{equation}
    \Gamma_m=\ker\chi_m
    =\{\gamma\in\operatorname{PSL}_2(\mathbb R):
        x(\gamma\tau)=x(\tau)\}.
    \label{eq:Pn-full-x-stabilizer}
\end{equation}
It has index $m$ in $\Delta_m$.
A fundamental domain for $\Gamma_m$ is obtained by taking
\[
    \bigcup_{i=0}^{m-1}P^{-i}\mathcal F_m
\]
with the induced boundary identifications.

A complete set of generators of $\Gamma_m$ is
\begin{equation}
    \gamma_k=P^kSP^{-k},\qquad
    \gamma_k(\tau)=k\ell_m-\frac1{\tau-k\ell_m},
    \qquad 0\leq k<m.
    \label{eq:Pn-x-stabilizer-generators}
\end{equation}
They give the presentation
\begin{equation}
    \Gamma_m=
    \left\langle\gamma_0,\ldots,\gamma_{m-1}
    \ \middle|\ \gamma_0^2=\cdots=\gamma_{m-1}^2=1\right\rangle.
    \label{eq:Pn-x-stabilizer-presentation}
\end{equation}
I.e., $\Gamma_m$ is the universal Coxeter group of rank $m$.
The fixed point of $\gamma_k$ is $\tau=k\ell_m+\mathrm i$, whose
$x(\tau)$ value is the canonical value $m\zeta^{-k}$.

The underlying Riemann surface of $\Gamma_m\backslash\mathbb H$ is $\mathbb C_x$, and its cusp
compactification is $\mathbb P^1_x$. 
There is also a natural smooth double cover. Let
\begin{equation}
    K_m=\ker\left(
        \Gamma_m\longrightarrow\{\pm1\},\quad
        \gamma_k\longmapsto-1
    \right).
    \label{eq:Pn-torsion-free-subgroup}
\end{equation}
One has the lemma below.
\begin{lemma}
\label{lem:Pn-torsion-free-subgroup}
For $m\geq3$, $K_m$ is the normal subgroup of $\Gamma_m$, with the index two, and is torsion-free. Consequently, $K_m$ acts freely and properly discontinuously on $\mathbb H$, and $K_m\backslash\mathbb H$ is a smooth Riemann surface.
\end{lemma}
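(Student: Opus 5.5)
The plan is to use the presentation \eqref{eq:Pn-x-stabilizer-presentation} of $\Gamma_m$ as the free product of $m$ copies of $\mathbb Z/2$. The first step is to check that the homomorphism $\epsilon:\Gamma_m\to\{\pm1\}$, $\gamma_k\mapsto-1$, is well defined. This is immediate, because the only relations are $\gamma_k^2=1$ and $(-1)^2=1$. Since $\epsilon$ is surjective, its kernel $K_m$ is a normal subgroup of index two.

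For torsion-freeness, I would invoke the standard structure theorem for finite-order elements of a free product (Kurosh, or the normal form theorem). Every element of finite order in $\ast_{k}\mathbb Z/2$ is conjugate into one of the factors. Hence every nontrivial torsion element of $\Gamma_m$ is conjugate to some $\gamma_k$. Since $\epsilon$ is a character, $\epsilon(g\gamma_kg^{-1})=\epsilon(\gamma_k)=-1$, so no such element lies in $K_m$. Therefore $K_m$ is torsion-free.

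Geometrically, one can see this independently. $\Gamma_m\subset\Delta_m$ is a Fuchsian group, so any finite-order element is elliptic and fixes a point of $\mathbb H$. The elliptic points of $\Delta_m$ are the orbits of $\mathrm i$ (order $2$) and $\rho_m$ (order $m$). The stabilizer of $\rho_m$ in $\Delta_m$ is generated by $SP$, and $\chi_m(SP)=\zeta^{-1}$ has order exactly $m$. So the stabilizer meets $\Gamma_m$ trivially, and $\Gamma_m$ has no elliptic elements of order $m$. Its only elliptic elements are therefore conjugates of $S$ lying in $\Gamma_m$, which are conjugate within $\Gamma_m$ to the $\gamma_k$. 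This matches the fixed points $k\ell_m+\mathrm i$ listed above, and again $\epsilon=-1$ on them.

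The consequences then follow from standard facts. $K_m$ is a subgroup of the discrete group $\Delta_m\subset\operatorname{PSL}_2(\mathbb R)$, so it acts properly discontinuously on $\mathbb H$. A discrete torsion-free subgroup has no elliptic elements, so every nonidentity element acts without fixed points in $\mathbb H$. A free, properly discontinuous holomorphic action gives a smooth quotient Riemann surface $K_m\backslash\mathbb H$.

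The hypothesis $m\ge3$ enters only through the presentation \eqref{eq:Pn-x-stabilizer-presentation}, which the paper takes as given. The main obstacle is the claim that all torsion in $\Gamma_m$ is conjugate to some $\gamma_k$. I would justify it algebraically via the free-product torsion theorem and cross-check it with the elliptic-point count above.
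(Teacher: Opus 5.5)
Your proposal is correct and follows the paper's proof. It uses the sign character read off from the free-product presentation, then the fact that every torsion element of $\Gamma_m$ is conjugate to some $\gamma_k$ (the paper proves this directly by cyclically reducing normal forms, where you cite the free-product torsion theorem), and finally discreteness in $\Delta_m$ for freeness and smoothness of the quotient.

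Your elliptic-point cross-check is an extra not in the paper. One small slip there: $SP$ fixes $\rho_m-\ell_m$, while the stabilizer of $\rho_m$ itself is $\langle PS\rangle$. This is harmless, since both points lie in one orbit and $\chi_m(PS)=\chi_m(SP)=\zeta^{-1}$.
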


\begin{proof}
Let $\varepsilon_m:\Gamma_m\to\{\pm1\}$ denote the
homomorphism in \eqref{eq:Pn-torsion-free-subgroup}.
The presentation \eqref{eq:Pn-x-stabilizer-presentation} shows that it is well defined, since $(-1)^2=1$, and surjective. 
Thus $K_m=\ker\varepsilon_m$ is normal and has index two.

The same presentation identifies $\Gamma_m$ with the
free product of the $m$ order two groups
$\langle\gamma_k\rangle$. By the normal form for free
products, every nonidentity element has a unique reduced
expression $\gamma_{i_1}\cdots\gamma_{i_r}$, in which
consecutive indices are distinct. If $r\geq2$ and
$i_1=i_r$, conjugation by $\gamma_{i_1}$ removes the first
and last letters. Repeating this procedure shows that
every nonidentity element is conjugate to a reduced word
$v=\gamma_{j_1}\cdots\gamma_{j_s}$ such that either $s=1$,
or $s\geq2$ and $j_1\neq j_s$.

In the latter case, no cancellation occurs between
successive copies of $v$, so $v^q$ has reduced length
$qs>0$ for every $q\geq1$. Hence $v$ has infinite order.
It follows that every nontrivial finite-order element
of $\Gamma_m$ is conjugate to some $\gamma_k$. However,
for every $h\in\Gamma_m$,
\[
    \varepsilon_m(h\gamma_k h^{-1})
    =\varepsilon_m(h)(-1)\varepsilon_m(h)^{-1}
    =-1.
\]
Thus no nontrivial finite-order element belongs to
$K_m$, proving that $K_m$ is torsion-free.

Finally, $K_m$ is discrete because it is a subgroup of
the Hecke triangle group $\Delta_m$. Its action on
$\mathbb H$ is therefore properly discontinuous, with
finite point stabilizers. Since $K_m$ is torsion-free,
these stabilizers are trivial. Hence the action is free
and the quotient is a smooth Riemann surface.
\end{proof}

We now identify this quotient explicitly. At every preimage
of a canonical value $u_i$, the function $x-u_i$ has a zero
of order two.
Since
\[
    x^m-m^m=\prod_{i=0}^{m-1}(x-u_i)
\]
has simple roots as a polynomial in $x$, 
it induces a biholomorphism
\begin{equation}
    K_m\backslash\mathbb H
    \xrightarrow{\;\sim\;}
    \{(x,w)\in\mathbb C^2:w^2=x^m-m^m\}.
    \label{eq:Pn-hyperelliptic-quotient}
\end{equation}
This hyperelliptic curve is smooth because the polynomial $x^m-m^m$ has distinct roots.

The inclusions $K_m\subset\Gamma_m\subset\Delta_m$ induce
the following diagram of underlying Riemann surfaces:
\begin{equation}
\label{eq:Pn-quotient-diagram}
\xymatrix@C=1.5em@R=2.5em{
    \mathbb H \ar[r]
    &
    K_m\backslash\mathbb H
        \ar[r]^{2:1}
        \ar[d]_{\cong}
    &
    \Gamma_m\backslash\mathbb H
        \ar[r]^{m:1}
        \ar[d]_{\cong}
    &
    \Delta_m\backslash\mathbb H
        \ar[d]^{\cong}
    \\
    &
    \{(x,w):w^2=x^m-m^m\}
        \ar[r]
    &
    \mathbb C_x
        \ar[r]
    &
    \mathbb C_{\mathcal J_m}.
}
\end{equation}
The lower arrows are given by $(x,w)\mapsto x$ and
$x\mapsto\mathcal J_m=x^m/m^m$, respectively.

The projection to $\mathbb P^1_x$ extends to a degree two map from the smooth compactification. 
It is simply ramified over the $m$ canonical values. 
At infinity, the polynomial $x^m-m^m$ has a pole of order $m$, so its square root has nontrivial local monodromy precisely when $m$ is odd. 
Consequently, there is one ramified point above infinity when $m$ is odd, and two unramified points when $m$ is even. 
The Riemann--Hurwitz formula gives
that the smooth compactification has genus
\begin{equation}
    g=\left\lfloor\frac{m-1}{2}\right\rfloor
     =\left\lfloor\frac n2\right\rfloor.
    \label{eq:Pn-hyperelliptic-genus}
\end{equation}

This curve is an algebraic model for the branched
$x$-cover. The auxiliary coordinate $w$ is distinct from
the period $y$; descent of $y$ and $B^J$ requires their
own transformation laws.

For $\mathbb{P}^2$, the smooth double cover is the elliptic curve
\begin{equation}
    w^2=x^3-27.
    \label{eq:P2-elliptic-quotient}
\end{equation}
On its compactification one may equivalently use the Weierstrass function
\begin{equation}
    x=\wp(z),\qquad w=\frac12\wp'(z),\qquad
    g_2=0,\quad g_3=108.
    \label{eq:P2-Weierstrass-parametrization}
\end{equation}
It also appears in \cite{almeida1DLandauGinzburgSuperpotential2025}.
The scalar function $y$ and the bidifferential $B^J$ do not descend
to this elliptic curve quotient.

We now return to the spectral curve data on $\Sigma$ and derive
their automorphic and modular expressions.

\section{Modular expressions for the spectral curve data}
\label{sec:Pn-modular-spectral-data}

We retain the small quantum point and the period $J$ fixed in Section~\ref{sec:projective-space}. The
uniformization of $x$ by the Hecke triangle group $\Delta_m$, where
$m=n+1$, first gives an expression in automorphic forms valid for
every $n\geq2$. Its arithmetic specializations then give classical
modular formulas for $\mathbb P^2$, $\mathbb P^3$ and $\mathbb P^5$.
For $\mathbb P^2$ and $\mathbb P^3$ we also determine $y$, every
component of $J$, and the bidifferential $B^J$ explicitly.

Throughout this section, $\eta_{\mathrm D}$ denotes the Dedekind eta
function, and $E_2,E_4,E_6$ are the normalized Eisenstein series, with
constant term $1$ at the cusp. The series $E_2$ is quasimodular. All roots are initially chosen at the cusp and continued with the normalization $x(\mathrm i)=m$.

\subsection{Automorphic forms and arithmetic Hecke groups}
\label{subsec:Pn-modular-spectral-data}

Put $q=e^{2\pi\mathrm i\tau/\ell_m}$ and
$D=q\frac{\d}{\d q}=\frac{\ell_m}{2\pi\mathrm i}\frac{\d}{\d\tau}$.
The derivative construction from a triangle-group Hauptmodul $\mathcal{J}_m$ gives
the following automorphic forms:

\begin{equation}
    \mathcal E_{4,m}
    =\frac{(D\mathcal J_m)^2}{\mathcal J_m(\mathcal J_m-1)},
    \qquad
    \mathcal D_m
    =\frac{(D\mathcal J_m)^{2m}}
    {\mathcal J_m^{2m-2}(\mathcal J_m-1)^m}.
    \label{eq:Pn-Hecke-modular-forms}
\end{equation}

This is the specialization of the construction in
\cite[Theorem~2]{doranAutomorphicFormsTriangle2013}.
Since $D\mathcal J_m$ has weight two, the weights of
$\mathcal E_{4,m}$ and $\mathcal D_m$ are $4$ and $4m$, respectively.
At a zero of $\mathcal J_m$, the orders of $\mathcal J_m$ and
$D\mathcal J_m$ are $m$ and $m-1$. Thus $\mathcal E_{4,m}$ has a
zero of order $m-2$, whereas $\mathcal D_m$ is nonzero there.
Above $\mathcal J_m=1$, the corresponding orders are two and one,
so both forms are holomorphic and nonzero. At the cusp,
$\mathcal J_m$ has a simple pole with respect to the coordinate $q$; hence $\mathcal E_{4,m}$ has
constant term $1$, and $\mathcal D_m$ has a zero of order $m-2$.
In particular, $\mathcal D_m$ has no zeros on $\mathbb H$.

The identity
$\mathcal E_{4,m}^{\,m}/\mathcal D_m=\mathcal J_m^{m-2}$ gives

\begin{equation}
    x(\tau)
    =m\frac{\mathcal E_{4,m}(\tau)^{1/(m-2)}}
    {\mathcal D_m(\tau)^{1/[m(m-2)]}}.
    \label{eq:Pn-modular-x-general}
\end{equation}

The root of $\mathcal E_{4,m}$ is holomorphic because its zeros have order $m-2$, and the root of $\mathcal D_m$ exists because
$\mathcal D_m$ never vanishes there. 

We next relate the automorphic forms to the Schwarz
uniformization.
Put
\[
    z=\frac{m^m}{x^m}=\frac1{\mathcal J_m}.
\]
Consider the auxiliary Gauss equation
\begin{equation}
    \left[
        (z\partial_z)^2
        -z\left(z\partial_z+\frac14-\frac1{2m}\right)
          \left(z\partial_z+\frac14+\frac1{2m}\right)
    \right]v=0.
    \label{eq:Pn-Schwarz-equation}
\end{equation}
Its exponent differences at $0,1,\infty$ are
$0,\tfrac12,\tfrac1m$, respectively.
A suitably normalized ratio of its solutions gives
the Schwarz parameter $\tau$  \cite[Section~7.1]{doranAutomorphicFormsTriangle2013}.
We verify this below.

Let $F(z)$ denote the Gauss series solution in the auxiliary equation:
\begin{equation}
    F(z)=\sum_{d\geq0}\frac{(\frac14-\frac1{2m})_d(\frac14+\frac1{2m})_d}{(1)_d^2}z^d
    =\,_2F_1\left(\frac14-\frac1{2m},\frac14+\frac1{2m};1;z\right).
    \label{eq:Pn-Gauss-series}
\end{equation}
Choose another solution around $z=0$ which has the form $G(z)=F(z)\log z+H(z)$ with $H(z)$ holomorphic and $H(0)=0$. The normal form of the Gauss
equation gives
\begin{equation}
    \frac{\d}{\d z}(FG'-F'G)
    =-\frac{1-3z/2}{z(1-z)}(FG'-F'G).
    \label{eq:Pn-Gauss-Wronskian-equation}
\end{equation}
Thus, with this normalization, 
\begin{equation}
    FG'-F'G=\frac1{z\sqrt{1-z}}.
    \label{eq:Pn-Gauss-Wronskian}
\end{equation}

For the Schwarz parameter $\tau$ corresponding to the Hauptmodul
normalized by \eqref{eq:Pn-Hauptmodul-normalization}, one has the lemma below. The construction of the Schwarz parameter is standard \cite[Chapter~V, Section~7]{nehariConformalMapping1975}.

\begin{lemma}
\label{lem:Pn-Schwarz-parameter-normalization}
Let $F$ and $G=F\log z+H$ be as above. The Schwarz parameter corresponding to the Hauptmodul
normalized by \eqref{eq:Pn-Hauptmodul-normalization} is given by
\begin{equation}
    \tau(z)=\frac{\ell_m}{2\pi\mathrm i}\frac{G(z)}{F(z)}+\text{constant}.
    \label{eq:Schwarz parameter}
\end{equation}
\end{lemma}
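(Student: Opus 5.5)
The plan is to identify $\tau(z)$, the local inverse of $z=1/\mathcal J_m(\tau)$, with a ratio of two solutions of the auxiliary Gauss equation \eqref{eq:Pn-Schwarz-equation}, and then to fix the M\"obius ambiguity using the behaviour at the cusp. First, restrict $\mathcal J_m$ to one of the two reflection triangles making up $\mathcal F_m$, namely the half with $0\le\operatorname{Re}\tau\le\ell_m/2$. Its vertices are $\mathrm i$, $\rho_m$ and $\mathrm i\infty$, with angles $\pi/2$, $\pi/m$ and $0$. By the normalization \eqref{eq:Pn-Hauptmodul-normalization} and the reflection principle, $\mathcal J_m$ maps this triangle conformally onto a half-plane, sending $\mathrm i,\rho_m,\mathrm i\infty$ to $1,0,\infty$. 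In the variable $z=1/\mathcal J_m$ the corresponding points are $z=1,\infty,0$.

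By the classical Schwarz triangle theorem \cite[Chapter~V, Section~7]{nehariConformalMapping1975}, the inverse map $z\mapsto\tau(z)$ has Schwarzian derivative $\{\tau,z\}$ equal to the rational function determined by the angles $\pi\lambda$ at $z=0,1,\infty$. Here $\lambda=0$ at $z=0$ (the cusp), $\lambda=\tfrac12$ at $z=1$, and $\lambda=\tfrac1m$ at $z=\infty$. These are exactly the exponent differences of \eqref{eq:Pn-Schwarz-equation}; I will check this directly from the Riemann scheme. The parameters $\tfrac14\mp\tfrac1{2m}$ have difference $\tfrac1m$ at infinity, and $c=1$ together with $a+b=\tfrac12$ gives $c-a-b=\tfrac12$ at $z=1$. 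Hence $\tau(z)$ is a ratio of two linearly independent solutions of \eqref{eq:Pn-Schwarz-equation}. Since $F$ and $G$ form a basis, with independence guaranteed by the Wronskian \eqref{eq:Pn-Gauss-Wronskian}, we get
\[
  \tau=\frac{\alpha G+\beta F}{\gamma G+\delta F},\qquad \alpha\delta-\beta\gamma\ne0 .
\]
I expect this identification to be the main step needing care; I would rely on the cited standard result rather than reprove it.

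Next I will pin down the M\"obius transformation using the local monodromy at $z=0$. Continuing once counterclockwise around $z=0$ fixes $F$ and sends $G\mapsto G+2\pi\mathrm iF$, so $w:=G/F\mapsto w+2\pi\mathrm i$, a parabolic map with fixed point $w=\infty$. On the $\tau$ side, a small loop around $z=0$ corresponds to a loop around the cusp. Since $\mathcal J_m$ has a simple pole in $q=e^{2\pi\mathrm i\tau/\ell_m}$, we have $z=cq+O(q^2)$ with $c\ne0$, so this loop is the translation $P:\tau\mapsto\tau+\ell_m$, whose unique fixed point is $\tau=\infty$. Conjugating $w\mapsto w+2\pi\mathrm i$ into $\tau\mapsto\tau+\ell_m$ forces the M\"obius map to send $\infty$ to $\infty$. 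Hence $\gamma=0$ and $\tau=Aw+B$ is affine. Comparing translation lengths gives $2\pi\mathrm i A=\ell_m$, that is, $A=\ell_m/(2\pi\mathrm i)$. The orientation is consistent, because $z\sim cq$ means a counterclockwise loop in $z$ is a counterclockwise loop in $q$, which is $\tau\mapsto\tau+\ell_m$ rather than $\tau-\ell_m$.

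Finally, as a consistency check, substituting back gives
\[
  e^{2\pi\mathrm i\tau/\ell_m}=e^{2\pi\mathrm iB/\ell_m}\,z\,e^{H(z)/F(z)} .
\]
This equals $z$ times a nonvanishing holomorphic function near $z=0$, since $H(0)=0$ and $F(0)=1$. So $\tau\to\mathrm i\infty$ as $z\to0$, as required. The additive constant $B$ is not determined by this argument; it is the constant in \eqref{eq:Schwarz parameter}, to be fixed afterwards, for instance by requiring $\tau(1)=\mathrm i$.
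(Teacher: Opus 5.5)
Your argument is correct, but it reaches the conclusion by a different route from the paper.

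The paper's proof is essentially a citation. It takes from Movasati--Shokri the identification of $z\exp(H(z)/F(z))$ with the cusp parameter $e^{2\pi\mathrm i\tau/\ell_m}$ up to a nonzero constant. It then takes logarithms to get $\tau=\frac{\ell_m}{2\pi\mathrm i}\frac{G}{F}+B_m$. Finally it fixes $B_m=\mathrm i-\frac{\ell_m}{2\pi\mathrm i}\frac{G(1)}{F(1)}$ by taking $\log z$ real on $(0,1)$ and requiring $\tau(1)=\mathrm i$.

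You instead derive that identification. The Schwarz triangle theorem, which the paper only alludes to via Nehari, gives $\tau$ as a M\"obius image $M(w)$ of $w=G/F$, once you match the exponent differences $0,\tfrac12,\tfrac1m$ with the angles at $\mathrm i\infty,\mathrm i,\rho_m$. Continuing $\tau=M(w)$ once around $z=0$ gives $M(w+2\pi\mathrm i)=M(w)+\ell_m$. This forces $M$ to fix $\infty$ and to have slope $\ell_m/(2\pi\mathrm i)$. Your consistency check then recovers the cited relation $e^{2\pi\mathrm i\tau/\ell_m}=\mathrm{const}\cdot z\,e^{H/F}$ as a consequence rather than an input.

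Your route buys transparency, using only classical Schwarz theory. It shows why no projective ambiguity survives beyond an additive constant. It also shows where the factor $\ell_m/(2\pi\mathrm i)$ comes from: the translation length of the cusp generator $P$ set against the $2\pi\mathrm i$ monodromy of $\log z$. The paper's route is shorter and also determines the additive constant explicitly. You only indicate how that constant would be fixed, but the lemma as stated does not require it.
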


\begin{proof}
For $(m_1,m_2)=(2,m)$, the hypergeometric
uniformization described in
\cite[Section~2.2]{movasatiShokriIntegrality2014}
identifies
\[
    \widehat q(z)
    :=z\exp\!\left(\frac{H(z)}{F(z)}\right)
\]
with the cusp parameter
$e^{2\pi\mathrm i\tau/\ell_m}$ up to a nonzero
constant factor. Thus, near $z=0$,
\[
    e^{2\pi\mathrm i\tau(z)/\ell_m}
    =\kappa_m\widehat q(z),
    \qquad \kappa_m\neq0.
\]
Since $G=F\log z+H$, taking logarithms with compatible
branches gives
\[
    \tau(z)
    =\frac{\ell_m}{2\pi\mathrm i}\frac{G(z)}{F(z)}
     +B_m.
\]
Choose $\log z$ real on $(0,1)$ and the inverse branch
satisfying $\tau(1)=\mathrm i$. Then
\[
    B_m
    =\mathrm i-\frac{\ell_m}{2\pi\mathrm i}
      \frac{G(1)}{F(1)}.
\]
This proves \eqref{eq:Schwarz parameter}.
\end{proof}

As a result, the Wronskian identity gives
\begin{equation}
Dz=\frac{\ell_m}{2\pi\mathrm{i}}\frac{\d z}{\d \tau}= z\sqrt{1-z}\,F^2.
\end{equation}


The solution $F(z)$ gives a second expression for the weight four form:
\begin{equation}
    \mathcal E_{4,m}(\tau)
    =\left[
    {}_2F_1\!\left(
        \frac14-\frac1{2m},\frac14+\frac1{2m};
        1;\frac1{\mathcal J_m(\tau)}
    \right)\right]^4.
    \label{eq:Pn-Hecke-hypergeometric-form}
\end{equation}
Indeed, by the definition \eqref{eq:Pn-Hecke-modular-forms}, $\mathcal E_{4,m}=(Dz)^2/[z^2(1-z)]=F^4$, 
For this Schwarz map construction, see
\cite[Section~2.3 and Appendix~A]{doranAutomorphicFormsTriangle2013}.

For integers $m\ge3$, the group $\Delta_m$ is arithmetic
exactly for $m=3,4,6$; see
\cite[Section~4, Table~1]{doranAutomorphicFormsTriangle2013}.
The case $m=3$ gives $\Delta_3=\operatorname{PSL}_2(\mathbb Z)$.
For $m=4,6$, put $N=2,3$, respectively, and use $t=\tau/\sqrt N$.
Since $\ell_m=\sqrt N$, the generators in
\eqref{eq:Pn-Hecke-generators} become
\begin{equation}
    t\longmapsto t+1,\qquad
    t\longmapsto-\frac1{Nt}.
    \label{eq:Pn-arithmetic-Hecke-generators}
\end{equation}
These transformations identify the respective groups with
$\Gamma_0(2)^+$ and $\Gamma_0(3)^+$; see \cite[Section~4, Table~1]{doranAutomorphicFormsTriangle2013}. 
Here

\begin{equation}
    \begin{aligned}
    \Gamma_0(N)
    &=\left\{
    \begin{pmatrix}a&b\\c&d\end{pmatrix}
    \in\operatorname{SL}_2(\mathbb Z):c\equiv0\pmod N
    \right\}/\{\pm I\},\\
    \Gamma_0(N)^+
    &=\left\langle\Gamma_0(N),\ t\mapsto-\frac1{Nt}\right\rangle < \operatorname{PSL}_2(\mathbb R) .
    \end{aligned}
    \label{eq:Pn-congruence-Fricke-groups}
\end{equation}

Thus $\Gamma_0(N)$ is the congruence subgroup whose lower-left
matrix entry is divisible by $N$, and the superscript $+$ denotes
adjoining the Fricke involution $t\to -1/Nt$.

\subsection{The projective plane}
\label{subsec:P2-modular-spectral-data}

For $\mathbb P^2$, we have $m=3$ and $\ell_3=1$. The classical
modular invariant satisfies $j(\mathrm i)=1728$,
$j(\rho_3)=0$, and $j(\mathrm i\infty)=\infty$. Therefore
\eqref{eq:Pn-Hauptmodul-normalization} identifies
$\mathcal J_3=j/1728$, and the identity
$j=E_4^3/\eta_{\mathrm D}^{24}$ gives

\begin{equation}
    \mathcal J_3(\tau)
    =\frac{j(\tau)}{1728}
    =\frac{E_4(\tau)^3}{1728\eta_{\mathrm D}(\tau)^{24}},
    \qquad
    x(\tau)=\frac{E_4(\tau)}{4\eta_{\mathrm D}(\tau)^8}.
    \label{eq:P2-classical-modular-x}
\end{equation}

In the normalization of \eqref{eq:Pn-Hecke-modular-forms}, the
corresponding forms are
\begin{equation}
    \mathcal E_{4,3}(\tau)=E_4(\tau),\qquad
    \mathcal D_3(\tau)=1728\eta_{\mathrm D}(\tau)^{24}.
    \label{eq:P2-Hecke-classical-forms}
\end{equation}

The normalized scalar period \eqref{eq:Pn-normalized-period-series}
specializes to

\begin{equation}
    y(x)=-\sqrt2\,\pi x^{-1/2}
    {}_3F_2\!\left(\frac16,\frac12,\frac56;
    1,1;\frac{27}{x^3}\right).
    \label{eq:P2-hypergeometric-modular-period}
\end{equation}

Clausen's identity and the classical Gauss representation of $E_4$
are
\begin{equation}
    \begin{aligned}
    {}_3F_2\!\left(\frac16,\frac12,\frac56;1,1;z\right)
    &={}_2F_1\!\left(\frac1{12},\frac5{12};1;z\right)^2,\\
    {}_2F_1\!\left(\frac1{12},\frac5{12};1;
    \frac{1728}{j(\tau)}\right)
    &=E_4(\tau)^{1/4};
    \end{aligned}
    \label{eq:P2-Clausen-modular-identities}
\end{equation}
see \cite[Eqs.~(7.29)]{maierNonlinearDifferentialEquations2011}. 
Since $27/x^3=1728/j$, substitution into the normalized period gives
\begin{equation}
    y(\tau)=-2\sqrt2\,\pi\eta_{\mathrm D}(\tau)^4.
    \label{eq:P2-classical-modular-y}
\end{equation}
For $q=e^{2\pi\mathrm i\tau}$, one has
$x\sim\frac14q^{-1/3}$ and $y\sim-2\sqrt2\,\pi q^{1/6}$.
Thus $y\sim-\sqrt2\,\pi x^{-1/2}$. 
This expression is consistent with the modular description of the
quantum cohomology period map in
\cite{milanovPeriodMapQuantum2019}.

The period components can now be reconstructed by differentiation.
For $D=(2\pi\mathrm i)^{-1}\partial_\tau$, the Ramanujan identities
and the logarithmic derivative of the eta function are
\begin{equation}
    \begin{aligned}
    D\log\eta_{\mathrm D}&=\frac{E_2}{24},&
    DE_2&=\frac{E_2^2-E_4}{12},\\
    DE_4&=\frac{E_2E_4-E_6}{3},&
    DE_6&=\frac{E_2E_6-E_4^2}{2};
    \end{aligned}
    \label{eq:P2-Ramanujan-identities}
\end{equation}
see \cite[Section~2]{maierNonlinearDifferentialEquations2011}.
In particular,
\begin{equation}
    \frac{\d x(\tau)}{\d\tau}
    =-\frac{\pi\mathrm i E_6(\tau)}
    {6\eta_{\mathrm D}(\tau)^8}.
    \label{eq:P2-modular-dx}
\end{equation}
Using $j_2=\d y/\d x$ and the reconstruction
\eqref{eq:P2-vector-from-y} gives every component explicitly:

\begin{equation}
    \begin{aligned}
    j_2(\tau)
    &=\frac{4\sqrt2\,\pi\eta_{\mathrm D}(\tau)^{12}E_2(\tau)}
    {E_6(\tau)},\\
    j_1(\tau)
    &=-\frac{\sqrt2\,\pi\eta_{\mathrm D}(\tau)^4}{3}
    \left(1-\frac{E_2(\tau)E_4(\tau)}{E_6(\tau)}\right),\\
    j_0(\tau)
    &=-\frac{\sqrt2\,\pi}{72\eta_{\mathrm D}(\tau)^4}
    \left(E_2(\tau)^2+E_4(\tau)
    -\frac{2E_2(\tau)E_4(\tau)^2}{E_6(\tau)}\right).
    \end{aligned}
    \label{eq:P2-modular-period-components}
\end{equation}

Substitute these components and \eqref{eq:P2-modular-dx} into
\eqref{eq:Pn-phase-bidifferential}. 
The result is

\begin{align}
    &B^J(\tau_1,\tau_2)\notag\\
    &=-\frac{4\pi^2}
    {9\bigl(E_4(\tau_1)\eta_{\mathrm D}(\tau_2)^8
    -E_4(\tau_2)\eta_{\mathrm D}(\tau_1)^8\bigr)^2}
    \notag\\
    &\times\Biggl\{
    \eta_{\mathrm D}(\tau_1)^8\eta_{\mathrm D}(\tau_2)^8
    E_6(\tau_1)E_6(\tau_2)
    \notag+\frac{\pi^2}{72}
    \bigl(E_4(\tau_1)\eta_{\mathrm D}(\tau_2)^8
    -E_4(\tau_2)\eta_{\mathrm D}(\tau_1)^8\bigr)
    \notag\\
    &\qquad\times\Biggl[
    \frac{E_6(\tau_1)E_2(\tau_2)\eta_{\mathrm D}(\tau_2)^4}
    {\eta_{\mathrm D}(\tau_1)^4}
    \notag\times
    \Bigl[\eta_{\mathrm D}(\tau_2)^8
    \bigl(E_2(\tau_1)^2+E_4(\tau_1)\bigr)
    -2E_4(\tau_2)\eta_{\mathrm D}(\tau_1)^8\Bigr]
    \notag\\
    &\qquad\quad+
    \frac{E_6(\tau_2)E_2(\tau_1)\eta_{\mathrm D}(\tau_1)^4}
    {\eta_{\mathrm D}(\tau_2)^4}
    \notag\times
    \Bigl[2E_4(\tau_1)\eta_{\mathrm D}(\tau_2)^8
    -\eta_{\mathrm D}(\tau_1)^8
    \bigl(E_2(\tau_2)^2+E_4(\tau_2)\bigr)\Bigr]
    \Biggr]
    \notag\\
    &\quad-\frac{\pi^2}{36}
    \eta_{\mathrm D}(\tau_1)^4\eta_{\mathrm D}(\tau_2)^4
    E_6(\tau_1)E_6(\tau_2)
    \notag\times
    \Bigl[\eta_{\mathrm D}(\tau_2)^8E_2(\tau_1)^2
    +\eta_{\mathrm D}(\tau_1)^8E_2(\tau_2)^2\Bigr]
    \notag\\
    &\quad-\frac{\pi^2E_2(\tau_1)E_2(\tau_2)}
    {36\eta_{\mathrm D}(\tau_1)^4\eta_{\mathrm D}(\tau_2)^4}
    \notag\\
    &\qquad\times\Bigl[
    E_4(\tau_1)^3\eta_{\mathrm D}(\tau_2)^{24}
    +E_4(\tau_2)^3\eta_{\mathrm D}(\tau_1)^{24}
    \notag-2E_4(\tau_1)^2E_4(\tau_2)
    \eta_{\mathrm D}(\tau_1)^8\eta_{\mathrm D}(\tau_2)^{16}
    \notag\\
    &\qquad\quad-2E_4(\tau_1)E_4(\tau_2)^2
    \eta_{\mathrm D}(\tau_1)^{16}\eta_{\mathrm D}(\tau_2)^8
    +3456\eta_{\mathrm D}(\tau_1)^{24}
    \eta_{\mathrm D}(\tau_2)^{24}
    \Bigr]\Biggr\}\,\d\tau_1\d\tau_2.
    \label{eq:P2-full-modular-bidifferential}
\end{align}

Let $\Sigma$ be the domain
\eqref{eq:Pn-explicit-spectral-domain} for $m=3$,
that is,
\[
    \Sigma=
    \left\{
        \tau\in\mathbb H:
        -\frac52<\operatorname{Re}\tau<\frac12,\quad
        \operatorname{Im}\tau>b
    \right\},
    \qquad \frac{\sqrt3}{2}<b<1.
\]

Then, one has the proposition:
\begin{proposition}[Modular spectral curve for $\mathbb P^2$]
\label{prop:P2-modular-spectral-data}
Let $\Sigma$ be the domain defined above. 
The functions
\[ 
    x(\tau)=\frac{E_4(\tau)}{4\eta_{\mathrm D}(\tau)^8},\qquad
     y(\tau)=-2\sqrt2\,\pi\eta_{\mathrm D}(\tau)^4,
\]
together with the bidifferential $B^J$ defined in \eqref{eq:P2-full-modular-bidifferential}, give the spectral curve
$(\Sigma,x(\tau),y(\tau),B^J)$ of $\mathbb P^2$.

The topological recursion with these data reproduces the
ancestor coefficients of $\mathbb P^2$ at the chosen quantum point
for $r\geq1$ and $2g-2+r>0$, with external forms
\eqref{eq:Pn-ancestor-forms}.
\end{proposition}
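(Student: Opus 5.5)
The proposition is the specialization $m=3$ of Theorem~\ref{thm:Pn-explicit-spectral-curve}. So I would show only that the modular expressions coincide, on $\Sigma$, with the data $(x,y,B^J)$ of that theorem, and then quote it.

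\emph{The function $x$.} For $m=3$ the Hauptmodul normalized by \eqref{eq:Pn-Hauptmodul-normalization} is $\mathcal J_3=j/1728$, since $j(\mathrm i)=1728$, $j(\rho_3)=0$ and $j$ has its pole at the cusp. Hence $x^3=27\,j/1728=E_4^3/(64\eta_{\mathrm D}^{24})$. Both $E_4/(4\eta_{\mathrm D}^8)$ and the $x$ of \eqref{eq:Pn-Hecke-x} are holomorphic cube roots of this function on $\mathbb H$, so they differ by a constant cube root of unity. To fix it, I evaluate at $\tau=\mathrm i$. There $E_4(\mathrm i)>0$ and $\eta_{\mathrm D}(\mathrm i)>0$, since both are real and positive on the imaginary axis by their $q$-expansions with $q\in(0,1)$. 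Combined with $x(\mathrm i)^3=27$, this forces $x(\mathrm i)=3$.

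\emph{The function $y$.} Near the cusp, $|x|>3$. There I would use Clausen's identity and the Gauss representation \eqref{eq:P2-Clausen-modular-identities} to write ${}_3F_2(\tfrac16,\tfrac12,\tfrac56;1,1;27/x^3)=E_4^{1/2}$. Substituting $x^{-1/2}=2\eta_{\mathrm D}^4E_4^{-1/2}$ into \eqref{eq:P2-hypergeometric-modular-period} then gives $y=-2\sqrt2\,\pi\eta_{\mathrm D}^4$. The branches need care. The root $x^{-1/2}$ is the one that is positive for $x>3$. The branch of $E_4^{1/4}$ must be the one tending to $1$ at the cusp, and continuing it along the imaginary axis keeps it positive. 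The asymptotic check $y\sim-\sqrt2\pi x^{-1/2}$ recorded before the statement confirms that the constants match. Both sides are holomorphic on the simply connected domain $\Sigma$ and agree near $\mathrm i\infty$, so the identity theorem shows that the modular $y$ is the analytic continuation required in Theorem~\ref{thm:Pn-explicit-spectral-curve}.

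\emph{The bidifferential $B^J$.} The components $j_\alpha$ are fixed by \eqref{eq:P2-vector-from-y} once $y$ is known, the integration constants being fixed by decay. So it suffices to check \eqref{eq:P2-modular-period-components}. I would compute $dx/d\tau$ from the Ramanujan identities \eqref{eq:P2-Ramanujan-identities}, which give \eqref{eq:P2-modular-dx}. Then $y'=(dy/d\tau)/(dx/d\tau)$, and I differentiate once more for $y''$, using $D\log\eta_{\mathrm D}=E_2/24$. Next I substitute these into \eqref{eq:P2-vector-from-y} and simplify with $E_4^3-E_6^2=1728\eta_{\mathrm D}^{24}$, which is where $x^3-27$ becomes modular. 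Finally I insert the resulting $j_\alpha$ and $dx_\nu$ into \eqref{eq:Pn-phase-bidifferential-components} with $n=2$ and $a=\tfrac12$, clear denominators by $(x_1-x_2)^2\propto(E_4(\tau_1)\eta_{\mathrm D}(\tau_2)^8-E_4(\tau_2)\eta_{\mathrm D}(\tau_1)^8)^2$, and collect terms into \eqref{eq:P2-full-modular-bidifferential}.

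This last algebraic reduction is the main obstacle: it is long bookkeeping, and the factors of $E_6$ coming from $dx$ must cancel against the $E_6$ denominators in $j_2,j_1,j_0$. I would organize the computation by bilinear blocks $j_\alpha(\tau_1)j_\beta(\tau_2)$, grouped as in the displayed formula. As a check I would expand near the cusp in $q_1,q_2$ and compare with the series of $x$ and $y$. Once these three identifications are in place, the spectral curve properties (critical points $a_i=-i+\mathrm i$, the pole structure of $B^J$) and the recursion statement with forms \eqref{eq:Pn-ancestor-forms} follow directly from Theorem~\ref{thm:Pn-explicit-spectral-curve} with $m=3$.
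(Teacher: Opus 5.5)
Your proposal is correct and follows essentially the same route as the paper: you identify $\mathcal J_3=j/1728$ to get $x=E_4/(4\eta_{\mathrm D}^8)$, use Clausen's identity and the Gauss representation of $E_4^{1/4}$ to get $y=-2\sqrt2\,\pi\eta_{\mathrm D}^4$, compute the components $j_\alpha$ and $B^J$ from \eqref{eq:P2-vector-from-y} via the Ramanujan identities, and then specialize Theorem~\ref{thm:Pn-explicit-spectral-curve} to $m=3$. Your explicit fixing of the cube root at $\tau=\mathrm i$ and the identity-theorem continuation of $y$ make precise the branch choices that the paper only sketches.
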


\subsection{Projective three-space}
\label{subsec:P3-modular-spectral-data}

For $\mathbb P^3$, one has $m=4$ and $\ell_4=\sqrt2$.
We use $t=\tau/\sqrt2$, so the group acting on $t$ is
$\Gamma_0(2)^+$ and $q=e^{2\pi\mathrm i t}$. The quotient
$\bigl(\eta_{\mathrm D}(t)/\eta_{\mathrm D}(2t)\bigr)^{24}$ is
a Hauptmodul for $\Gamma_0(2)$; see
\cite[Section~4.3]{doranAutomorphicFormsTriangle2013}.
The transformation
$\eta_{\mathrm D}(-1/t)=(-\mathrm i t)^{1/2}\eta_{\mathrm D}(t)$
gives
\begin{equation}
    \left(\frac{\eta_{\mathrm D}(-1/(2t))}
    {\eta_{\mathrm D}(-1/t)}\right)^{24}
    =4096\left(\frac{\eta_{\mathrm D}(2t)}
    {\eta_{\mathrm D}(t)}\right)^{24}.
    \label{eq:P3-eta-Fricke-transformation}
\end{equation}
The corresponding Fricke invariant Hauptmodul is

\begin{equation}
    \mathcal J_4(\sqrt2t)
    =\frac1{256}
    \left[
    \left(\frac{\eta_{\mathrm D}(t)}
    {\eta_{\mathrm D}(2t)}\right)^{12}
    +64\left(\frac{\eta_{\mathrm D}(2t)}
    {\eta_{\mathrm D}(t)}\right)^{12}
    \right]^2.
    \label{eq:P3-eta-Hauptmodul}
\end{equation}

The quotient $\bigl(\eta_{\mathrm D}(t)/\eta_{\mathrm D}(2t)\bigr)^{24}$ takes the values $64$ and $-64$
at $t=\mathrm i/\sqrt2$ and $t=(1+\mathrm i)/2$, respectively.
Thus the expression on the right has values $1$ and $0$ there,
and has one simple pole at the cusp of the quotient.
These properties give the normalization
\eqref{eq:Pn-Hauptmodul-normalization}.

The classical level-two identity is

\begin{equation}
    \bigl(2E_2(2t)-E_2(t)\bigr)^2
    =\frac{\eta_{\mathrm D}(t)^{16}}
    {\eta_{\mathrm D}(2t)^8}
    +64\frac{\eta_{\mathrm D}(2t)^{16}}
    {\eta_{\mathrm D}(t)^8}
    \label{eq:P3-level-two-identity}
\end{equation}

It follows from the eta expressions and Eisenstein series identity
in \cite[Eqs.~(3.19)--(3.20)]{alimSpecialPolynomialRings2014}.
Its numerator $2E_2(2t)-E_2(t)$ is modular of weight two for
$\Gamma_0(2)$. Combining the identity with
$x^4=256\mathcal J_4$ and choosing the branch positive on the
imaginary axis gives

\begin{equation}
    x(\sqrt2t)
    =\frac{2E_2(2t)-E_2(t)}
    {\eta_{\mathrm D}(t)^2\eta_{\mathrm D}(2t)^2}.
    \label{eq:P3-classical-modular-x}
\end{equation}

One may also differentiate \eqref{eq:P3-eta-Hauptmodul}, using
\begin{equation}
    \frac1{2\pi\mathrm i}\frac{\d}{\d t}
    \log\left(\frac{\eta_{\mathrm D}(t)}
    {\eta_{\mathrm D}(2t)}\right)^{24}
    =E_2(t)-2E_2(2t),
    \label{eq:P3-eta-quotient-derivative}
\end{equation}
to identify the forms in \eqref{eq:Pn-Hecke-modular-forms}:
\begin{equation}
    \begin{aligned}
    \mathcal E_{4,4}(\sqrt2t)
    &=\bigl(2E_2(2t)-E_2(t)\bigr)^2,\\
    \mathcal D_4(\sqrt2t)
    &=2^{16}\eta_{\mathrm D}(t)^{16}\eta_{\mathrm D}(2t)^{16}.
    \end{aligned}
    \label{eq:P3-Hecke-classical-forms}
\end{equation}
This also recovers \eqref{eq:P3-classical-modular-x} directly from
\eqref{eq:Pn-modular-x-general}.

The normalized scalar period is initially expressed as a ${}_4F_3$ hypergeometric function. A common
upper and lower parameter $1$ cancels, giving

\begin{equation}
    \begin{aligned}
    y(x)
    &=-\frac{2\pi}{x}
    {}_4F_3\!\left(\frac14,\frac12,\frac34,1;
    1,1,1;\frac{256}{x^4}\right)\\
    &=-\frac{2\pi}{x}
    {}_3F_2\!\left(\frac14,\frac12,\frac34;
    1,1;\frac{256}{x^4}\right).
    \end{aligned}
    \label{eq:P3-hypergeometric-modular-period}
\end{equation}

Clausen's identity in this case is
\begin{equation}
    {}_3F_2\!\left(\frac14,\frac12,\frac34;1,1;z\right)
    ={}_2F_1\!\left(\frac18,\frac38;1;z\right)^2;
    \label{eq:P3-Clausen-modular-identity}
\end{equation}
see \cite[Eqs.~(7.28a)--(7.28b)]
{maierNonlinearDifferentialEquations2011}.
Equations~\eqref{eq:Pn-Hecke-hypergeometric-form} and
\eqref{eq:P3-Hecke-classical-forms}, with constant term $1$ at the
cusp, imply
\begin{equation}
    {}_2F_1\!\left(\frac18,\frac38;1;
    \frac1{\mathcal J_4(\sqrt2t)}\right)^2
    =2E_2(2t)-E_2(t).
    \label{eq:P3-Gauss-modular-identity}
\end{equation}
Substitution into \eqref{eq:P3-hypergeometric-modular-period} gives
\begin{equation}
    y(\sqrt2t)=-2\pi\eta_{\mathrm D}(t)^2
    \eta_{\mathrm D}(2t)^2.
    \label{eq:P3-classical-modular-y}
\end{equation}
Indeed, $x\sim q^{-1/4}$ and $y\sim-2\pi q^{1/4}$ as
$q=e^{2\pi\mathrm i t}\to0$, so $y\sim-2\pi/x$, as required.

Differentiation of \eqref{eq:P3-classical-modular-x} gives
\begin{equation}
    \frac{\d x(\sqrt2t)}{\d t}
    =-\frac{\pi\mathrm i\bigl(\eta_{\mathrm D}(t)^{24}
    -64\eta_{\mathrm D}(2t)^{24}\bigr)}
    {2\eta_{\mathrm D}(t)^{10}\eta_{\mathrm D}(2t)^{10}}.
    \label{eq:P3-modular-dx}
\end{equation}
Using $j_3=\d y/\d x$ and the component relations \eqref{eq:P3-vector-from-y}, we obtain
\begin{equation}
    \begin{aligned}
    j_3(\sqrt2t)
    &=\frac{2\pi\eta_{\mathrm D}(t)^{12}\eta_{\mathrm D}(2t)^{12}
    \bigl(E_2(t)+2E_2(2t)\bigr)}
    {3\bigl(\eta_{\mathrm D}(t)^{24}
    -64\eta_{\mathrm D}(2t)^{24}\bigr)},\\
    j_2(\sqrt2t)
    &=\frac{\pi\eta_{\mathrm D}(t)^{10}\eta_{\mathrm D}(2t)^{10}
    \bigl(4E_2(2t)^2-E_2(t)^2\bigr)}
    {6\bigl(\eta_{\mathrm D}(t)^{24}
    -64\eta_{\mathrm D}(2t)^{24}\bigr)}
    -\frac\pi2\eta_{\mathrm D}(t)^2\eta_{\mathrm D}(2t)^2,\\
    j_1(\sqrt2t)
    &=\frac{\pi\eta_{\mathrm D}(t)^8\eta_{\mathrm D}(2t)^8
    \bigl(2E_2(2t)-E_2(t)\bigr)^2
    \bigl(E_2(t)+2E_2(2t)\bigr)}
    {24\bigl(\eta_{\mathrm D}(t)^{24}
    -64\eta_{\mathrm D}(2t)^{24}\bigr)}\\
    &\hspace{8mm}-\frac\pi8\bigl(2E_2(2t)-E_2(t)\bigr),\\
    j_0(\sqrt2t)
    &=\frac{\pi\eta_{\mathrm D}(t)^6\eta_{\mathrm D}(2t)^6
    \bigl(2E_2(2t)-E_2(t)\bigr)^3
    \bigl(E_2(t)+2E_2(2t)\bigr)}
    {96\bigl(\eta_{\mathrm D}(t)^{24}
    -64\eta_{\mathrm D}(2t)^{24}\bigr)}\\
    &\hspace{8mm}-\frac{\pi\bigl(5E_2(t)^2-16E_2(t)E_2(2t)
    +20E_2(2t)^2\bigr)}
    {288\eta_{\mathrm D}(t)^2\eta_{\mathrm D}(2t)^2}.
    \end{aligned}
    \label{eq:P3-modular-period-components}
\end{equation}

Substitution into \eqref{eq:Pn-phase-bidifferential} yields the
following bidifferential. We write it in the coordinates $t_1,t_2$:

\begin{align}
    &B^J(\sqrt2t_1,\sqrt2t_2)\notag\\
    &=-\pi^2\Big/
    \Bigl[4\Bigl(
    \bigl(2E_2(2t_1)-E_2(t_1)\bigr)
    \eta_{\mathrm D}(t_2)^2\eta_{\mathrm D}(2t_2)^2
    \notag\\
    &\hspace{19mm}-\bigl(2E_2(2t_2)-E_2(t_2)\bigr)
    \eta_{\mathrm D}(t_1)^2\eta_{\mathrm D}(2t_1)^2
    \Bigr)^2\Bigr]
    \notag\\
    &\times\Biggl\{
    \frac{\bigl(\eta_{\mathrm D}(t_1)^{24}
    -64\eta_{\mathrm D}(2t_1)^{24}\bigr)
    \bigl(\eta_{\mathrm D}(t_2)^{24}
    -64\eta_{\mathrm D}(2t_2)^{24}\bigr)}
    {\eta_{\mathrm D}(t_1)^6\eta_{\mathrm D}(2t_1)^6
    \eta_{\mathrm D}(t_2)^6\eta_{\mathrm D}(2t_2)^6}
    \notag\\
    &\quad+\frac{\pi^2}{432}
    \Bigl[\bigl(2E_2(2t_1)-E_2(t_1)\bigr)
    \eta_{\mathrm D}(t_2)^2\eta_{\mathrm D}(2t_2)^2
    \notag\\
    &\qquad\quad-\bigl(2E_2(2t_2)-E_2(t_2)\bigr)
    \eta_{\mathrm D}(t_1)^2\eta_{\mathrm D}(2t_1)^2\Bigr]
    \notag\\
    &\qquad\times\Biggl[
    \frac{\bigl(\eta_{\mathrm D}(t_1)^{24}
    -64\eta_{\mathrm D}(2t_1)^{24}\bigr)
    \bigl(E_2(t_2)+2E_2(2t_2)\bigr)}
    {\eta_{\mathrm D}(t_1)^{10}\eta_{\mathrm D}(2t_1)^{10}}
    \notag\\
    &\qquad\quad\times\Bigl[
    \eta_{\mathrm D}(t_2)^4\eta_{\mathrm D}(2t_2)^4
    \bigl(5E_2(t_1)^2-16E_2(t_1)E_2(2t_1)+20E_2(2t_1)^2\bigr)
    \notag\\
    &\qquad\qquad-9\bigl(2E_2(2t_2)-E_2(t_2)\bigr)^2
    \eta_{\mathrm D}(t_1)^4\eta_{\mathrm D}(2t_1)^4\Bigr]
    \notag\\
    &\qquad\quad+
    \frac{\bigl(\eta_{\mathrm D}(t_2)^{24}
    -64\eta_{\mathrm D}(2t_2)^{24}\bigr)
    \bigl(E_2(t_1)+2E_2(2t_1)\bigr)}
    {\eta_{\mathrm D}(t_2)^{10}\eta_{\mathrm D}(2t_2)^{10}}
    \notag\\
    &\qquad\quad\times\Bigl[
    9\bigl(2E_2(2t_1)-E_2(t_1)\bigr)^2
    \eta_{\mathrm D}(t_2)^4\eta_{\mathrm D}(2t_2)^4
    \notag\\
    &\qquad\qquad-\eta_{\mathrm D}(t_1)^4\eta_{\mathrm D}(2t_1)^4
    \bigl(5E_2(t_2)^2-16E_2(t_2)E_2(2t_2)+20E_2(2t_2)^2\bigr)
    \Bigr]\Biggr]
    \notag\\
    &\quad-\frac{\pi^2
    \bigl(\eta_{\mathrm D}(t_1)^{24}
    -64\eta_{\mathrm D}(2t_1)^{24}\bigr)
    \bigl(\eta_{\mathrm D}(t_2)^{24}
    -64\eta_{\mathrm D}(2t_2)^{24}\bigr)}
    {144\eta_{\mathrm D}(t_1)^8\eta_{\mathrm D}(2t_1)^8
    \eta_{\mathrm D}(t_2)^8\eta_{\mathrm D}(2t_2)^8}
    \notag\\
    &\qquad\times\Bigl[
    \eta_{\mathrm D}(t_2)^4\eta_{\mathrm D}(2t_2)^4
    \bigl(5E_2(t_1)^2-16E_2(t_1)E_2(2t_1)+20E_2(2t_1)^2\bigr)
    \notag\\
    &\qquad\quad+\eta_{\mathrm D}(t_1)^4\eta_{\mathrm D}(2t_1)^4
    \bigl(5E_2(t_2)^2-16E_2(t_2)E_2(2t_2)+20E_2(2t_2)^2\bigr)
    \notag\\
    &\qquad\quad-9\bigl(2E_2(2t_1)-E_2(t_1)\bigr)
    \bigl(2E_2(2t_2)-E_2(t_2)\bigr)
    \notag\\
    &\qquad\qquad\times
    \eta_{\mathrm D}(t_1)^2\eta_{\mathrm D}(2t_1)^2
    \eta_{\mathrm D}(t_2)^2\eta_{\mathrm D}(2t_2)^2\Bigr]
    \notag\\
    &\quad-\frac{\pi^2
    \bigl(E_2(t_1)+2E_2(2t_1)\bigr)
    \bigl(E_2(t_2)+2E_2(2t_2)\bigr)}
    {144\eta_{\mathrm D}(t_1)^2\eta_{\mathrm D}(2t_1)^2
    \eta_{\mathrm D}(t_2)^2\eta_{\mathrm D}(2t_2)^2}
    \notag\\
    &\qquad\times\Bigl[
    \bigl(2E_2(2t_1)-E_2(t_1)\bigr)^4
    \eta_{\mathrm D}(t_2)^8\eta_{\mathrm D}(2t_2)^8
    \notag\\
    &\qquad\quad+\bigl(2E_2(2t_2)-E_2(t_2)\bigr)^4
    \eta_{\mathrm D}(t_1)^8\eta_{\mathrm D}(2t_1)^8
    \notag\\
    &\qquad\quad-\bigl(2E_2(2t_1)-E_2(t_1)\bigr)^3
    \bigl(2E_2(2t_2)-E_2(t_2)\bigr)
    \notag\\
    &\qquad\qquad\times
    \eta_{\mathrm D}(t_1)^2\eta_{\mathrm D}(2t_1)^2
    \eta_{\mathrm D}(t_2)^6\eta_{\mathrm D}(2t_2)^6
    \notag\\
    &\qquad\quad-\bigl(2E_2(2t_1)-E_2(t_1)\bigr)^2
    \bigl(2E_2(2t_2)-E_2(t_2)\bigr)^2
    \notag\\
    &\qquad\qquad\times
    \eta_{\mathrm D}(t_1)^4\eta_{\mathrm D}(2t_1)^4
    \eta_{\mathrm D}(t_2)^4\eta_{\mathrm D}(2t_2)^4
    \notag\\
    &\qquad\quad-\bigl(2E_2(2t_1)-E_2(t_1)\bigr)
    \bigl(2E_2(2t_2)-E_2(t_2)\bigr)^3
    \notag\\
    &\qquad\qquad\times
    \eta_{\mathrm D}(t_1)^6\eta_{\mathrm D}(2t_1)^6
    \eta_{\mathrm D}(t_2)^2\eta_{\mathrm D}(2t_2)^2
    \notag\\
    &\qquad\quad+256
    \eta_{\mathrm D}(t_1)^8\eta_{\mathrm D}(2t_1)^8
    \eta_{\mathrm D}(t_2)^8\eta_{\mathrm D}(2t_2)^8
    \Bigr]\Biggr\}\,\d t_1\d t_2.
    \label{eq:P3-full-modular-bidifferential}
\end{align}

For $\mathbb P^3$, using $t=\tau/\sqrt2$, the same
construction gives the domain
\[
    \Sigma=
    \left\{
        t\in\mathbb H:
        -\frac72<\operatorname{Re}t<\frac12,\quad
        \operatorname{Im}t>\frac b{\sqrt2}
    \right\},
    \qquad \frac1{\sqrt2}<b<1.
\]

\begin{proposition}[Modular spectral curve for $\mathbb P^3$]
\label{prop:P3-modular-spectral-data}
Let $\Sigma$ be the domain defined above.
In the modular coordinate
$t=\tau/\sqrt2$, the functions
\[ 
    x(\sqrt2t)
    =\frac{2E_2(2t)-E_2(t)}
    {\eta_{\mathrm D}(t)^2\eta_{\mathrm D}(2t)^2},\qquad
      y(\sqrt2t)=-2\pi\eta_{\mathrm D}(t)^2
    \eta_{\mathrm D}(2t)^2,
\]
together with the bidifferential $B^J$ defined in \eqref{eq:P3-full-modular-bidifferential}, give the spectral curve
$(\Sigma,x(\tau),y(\tau),B^J)$ of $\mathbb P^3$.

The topological recursion with these data reproduces the
ancestor coefficients of $\mathbb P^3$ at the chosen quantum point
for $r\geq1$ and $2g-2+r>0$, with external forms
\eqref{eq:Pn-ancestor-forms}.
\end{proposition}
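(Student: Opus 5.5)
The plan is to reduce Proposition~\ref{prop:P3-modular-spectral-data} to Theorem~\ref{thm:Pn-explicit-spectral-curve} for $m=4$. That theorem already shows that $(\Sigma,x,y,B^J)$, with $x$ from \eqref{eq:Pn-Hecke-x}, $y$ the continuation of \eqref{eq:Pn-normalized-period-series}, and $B^J$ from \eqref{eq:Pn-phase-bidifferential-components}, is a spectral curve whose recursion reproduces the ancestor coefficients with forms \eqref{eq:Pn-ancestor-forms}. Under $t=\tau/\sqrt2$, the domain \eqref{eq:Pn-explicit-spectral-domain} with $\ell_4=\sqrt2$ becomes exactly the stated $t$-domain: the real-part bounds $-\tfrac72\sqrt2$ and $\tfrac{\sqrt2}{2}$ divide to $-\tfrac72$ and $\tfrac12$, and the height $b$ divides to $b/\sqrt2$. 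It therefore suffices to check that the three modular expressions coincide with the data of that theorem on $\Sigma$.

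First I would identify $x$. By \eqref{eq:P3-eta-Hauptmodul} and the normalization check after it, the eta expression is the normalized Hauptmodul $\mathcal J_4$. The level-two identity \eqref{eq:P3-level-two-identity} then gives $x^4=256\mathcal J_4$ for the stated $x(\sqrt2t)$; the computation is to square \eqref{eq:P3-level-two-identity} and compare with \eqref{eq:P3-eta-Hauptmodul} after dividing by $\eta_{\mathrm D}(t)^8\eta_{\mathrm D}(2t)^8$. The stated branch is positive on the imaginary axis, and at $t=\mathrm i/\sqrt2$ it equals $4$. Since both sides are holomorphic on the simply connected $\mathbb H$ and agree up to a fourth root of unity, they are equal.

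Next I would identify $y$. Near the cusp, $|x|>4$, so the series \eqref{eq:P3-hypergeometric-modular-period} converges. Clausen's identity \eqref{eq:P3-Clausen-modular-identity} and \eqref{eq:P3-Gauss-modular-identity} turn $y$ into $-2\pi x^{-1}\bigl(2E_2(2t)-E_2(t)\bigr)$, which equals $-2\pi\eta_{\mathrm D}(t)^2\eta_{\mathrm D}(2t)^2$ by \eqref{eq:P3-classical-modular-x}. The branch is the correct one because of the leading asymptotics $y\sim-2\pi/x$. By the identity theorem, this equality persists under analytic continuation to all of $\Sigma$, where the right side is holomorphic.

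The main obstacle is the justification of \eqref{eq:P3-Gauss-modular-identity} near the cusp. Here I would use \eqref{eq:Pn-Hecke-hypergeometric-form} together with \eqref{eq:P3-Hecke-classical-forms}. Both sides are weight-two square roots of $\mathcal E_{4,4}$ with constant term $1$, so they agree on a neighbourhood of the cusp where $|1/\mathcal J_4|<1$.

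Finally I would verify $B^J$. The components \eqref{eq:P3-modular-period-components} come from \eqref{eq:P3-vector-from-y}, using $j_3=(dy/dt)/(dx/dt)$ with \eqref{eq:P3-modular-dx} and the Ramanujan-type derivative \eqref{eq:P3-eta-quotient-derivative}. The decay-at-infinity normalization is automatic, because the finite differential reconstruction coincides with the decaying solution, as shown after Proposition~\ref{prop:Pn-differential-reconstruction}. Substituting these components and $dx_\nu=(dx/dt)\,dt_\nu$ into \eqref{eq:Pn-phase-bidifferential-components} is a routine rational simplification yielding \eqref{eq:P3-full-modular-bidifferential}. With all three pieces of data matched, the conclusion of Theorem~\ref{thm:Pn-explicit-spectral-curve} transfers verbatim.
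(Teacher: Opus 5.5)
Your proposal is correct and follows essentially the same route as the paper. It reduces to Theorem~\ref{thm:Pn-explicit-spectral-curve} with $m=4$, identifies $x$ through the eta Hauptmodul \eqref{eq:P3-eta-Hauptmodul} and the level-two identity \eqref{eq:P3-level-two-identity}, identifies $y$ near the cusp through Clausen's identity and \eqref{eq:P3-Gauss-modular-identity} before continuing analytically, and obtains $B^J$ by substituting the components reconstructed from \eqref{eq:P3-vector-from-y}. Your checks of the domain rescaling, the fourth-root branch of $x$, and the continuation of $y$ only make explicit what the paper leaves implicit. One detail should be added: $j_0$ involves $y''$, so reaching \eqref{eq:P3-modular-period-components} requires, besides \eqref{eq:P3-eta-quotient-derivative}, Ramanujan's identity for $DE_2$ and the level-two eta-quotient expressions for $E_4(t)$ and $E_4(2t)$.
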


\subsection{Projective five-space}
\label{subsec:P5-modular-spectral-coordinate}

For $\mathbb P^5$, we have $m=6$ and $\ell_6=\sqrt3$.
The coordinate $t=\tau/\sqrt3$ identifies the Hecke group with
$\Gamma_0(3)^+$. The classical eta quotient
$\bigl(\eta_{\mathrm D}(t)/\eta_{\mathrm D}(3t)\bigr)^{12}$
is a Hauptmodul for $\Gamma_0(3)$; see
\cite[Section~4.3]{doranAutomorphicFormsTriangle2013}.
Its Fricke transformation is
\begin{equation}
    \left(\frac{\eta_{\mathrm D}(-1/(3t))}
    {\eta_{\mathrm D}(-1/t)}\right)^{12}
    =729\left(\frac{\eta_{\mathrm D}(3t)}
    {\eta_{\mathrm D}(t)}\right)^{12}.
    \label{eq:P5-eta-Fricke-transformation}
\end{equation}
The values of this Hauptmodul at $t=\mathrm i/\sqrt3$ and
$t=\rho_6/\sqrt3$ are $27$ and $-27$, respectively. The same
Fricke invariance and normalization argument as above therefore gives
\begin{equation}
    \mathcal J_6(\sqrt3t)
    =\frac1{108}\left[
    \left(\frac{\eta_{\mathrm D}(t)}
    {\eta_{\mathrm D}(3t)}\right)^6
    +27\left(\frac{\eta_{\mathrm D}(3t)}
    {\eta_{\mathrm D}(t)}\right)^6
    \right]^2.
    \label{eq:P5-eta-Hauptmodul}
\end{equation}

The sixth root needed for $x$ is naturally expressed using the
theta series of the $A_2$ lattice:

\begin{equation}
    \vartheta_{A_2}(t)
    :=\sum_{r,s\in\mathbb Z}e^{2\pi\mathrm i t(r^2+rs+s^2)}.
    \label{eq:P5-A2-theta}
\end{equation}

This is a modular form of weight one for $\Gamma_0(3)$ with
quadratic character $\left(\frac{-3}{d}\right)$ and constant term
$1$ at the cusp; see
\cite[Propositions~3.2--3.3 and Eq.~(A.1a)]
{maierNonlinearDifferentialEquations2011}.
The classical Eisenstein and eta product identities are

\begin{equation}
    \vartheta_{A_2}(t)^2
    =\frac{3E_2(3t)-E_2(t)}2,
    \qquad
    \vartheta_{A_2}(t)^3
    =\frac{\eta_{\mathrm D}(t)^9}{\eta_{\mathrm D}(3t)^3}
    +27\frac{\eta_{\mathrm D}(3t)^9}{\eta_{\mathrm D}(t)^3}.
    \label{eq:P5-theta-identities}
\end{equation}

See \cite[Eqs.~(3.19)--(3.20) and (A.26)]
{alimSpecialPolynomialRings2014}.
Dividing the cubic identity by
$\eta_{\mathrm D}(t)^3\eta_{\mathrm D}(3t)^3$ gives
\begin{equation}
    \frac{\vartheta_{A_2}(t)^3}
    {\eta_{\mathrm D}(t)^3\eta_{\mathrm D}(3t)^3}
    =\left(\frac{\eta_{\mathrm D}(t)}
    {\eta_{\mathrm D}(3t)}\right)^6
    +27\left(\frac{\eta_{\mathrm D}(3t)}
    {\eta_{\mathrm D}(t)}\right)^6.
    \label{eq:P5-theta-Hauptmodul-bridge}
\end{equation}
Squaring and using \eqref{eq:P5-eta-Hauptmodul} now yields

\begin{equation}
    \mathcal J_6(\sqrt3t)
    =\frac{\vartheta_{A_2}(t)^6}
    {108\eta_{\mathrm D}(t)^6\eta_{\mathrm D}(3t)^6},
    \qquad
    x(\sqrt3t)
    =\frac6{108^{1/6}}
    \frac{\vartheta_{A_2}(t)}
    {\eta_{\mathrm D}(t)\eta_{\mathrm D}(3t)}.
    \label{eq:P5-classical-modular-x}
\end{equation}

The chosen root has $x(\mathrm i)=6$. For
$q=e^{2\pi\mathrm i t}$, the cusp expansions
$\vartheta_{A_2}(t)=1+6q+O(q^2)$ and
$\eta_{\mathrm D}(t)\eta_{\mathrm D}(3t)
=q^{1/6}(1+O(q))$ give
\begin{equation}
    \mathcal J_6(\sqrt3t)\sim\frac1{108q},\qquad
    x(\sqrt3t)\sim\frac6{108^{1/6}}q^{-1/6}.
    \label{eq:P5-modular-cusp-normalization}
\end{equation}

The corresponding forms in the general construction are
\begin{equation}
    \begin{aligned}
    \mathcal E_{4,6}(\sqrt3t)
    &=\vartheta_{A_2}(t)^4
    =\frac{\bigl(3E_2(3t)-E_2(t)\bigr)^2}{4},\\
    \mathcal D_6(\sqrt3t)
    &=108^4\eta_{\mathrm D}(t)^{24}
    \eta_{\mathrm D}(3t)^{24}.
    \end{aligned}
    \label{eq:P5-Hecke-classical-forms}
\end{equation}
The transformation
$\vartheta_{A_2}(-1/(3t))=-\mathrm i\sqrt3\,t\,
\vartheta_{A_2}(t)$, as in
\cite[Eq.~(3.28)]{alimSpecialPolynomialRings2014}, shows that
$\vartheta_{A_2}^4$ has weight four for $\Gamma_0(3)^+$.
Its constant term is $1$, and the corresponding space has dimension
one by
\cite[Theorem~2]{doranAutomorphicFormsTriangle2013}.
The second identity follows from
$\mathcal D_6=\mathcal E_{4,6}^{\,6}/\mathcal J_6^4$.

For $\mathbb{P}^5$, we do not derive a modular expression for $y(\tau)$ here.

\section{Conclusions and outlook}
\label{sec:conclusions-outlook}

The main contribution of this paper is a constructive
realization of semisimple conformal ancestor theories by
Eynard--Orantin topological recursion on connected open
Riemann surfaces.
The DOSS correspondence
\cite{dunin-barkowskiIdentificationGiventalFormula2014}
identifies the local spectral curve data required by Givental reconstruction.
We address the additional analytic problem of realizing these data on a connected surface.
At a fixed tame semisimple point of a conformal Frobenius manifold with flat unit, our construction produces $(\Sigma,x,y,B)$ with the prescribed critical values and a bidifferential having only the normalized diagonal double pole.
The resulting recursion realizes the stable ancestor
correlators with at least one marked point.

Our construction uses Dubrovin's superpotential to realize the spectral curve data. 
The compact curve correspondence of DNOPS \cite{dunin-barkowskiDubrovinsSuperpotentialGlobal2019} uses a normalized Bergman kernel subject to compatibility requirements.
Here the bidifferential is constructed directly to realize the prescribed $R$-matrix data on an open surface.
The general existence construction uses individually
normalized local periods and does not require a
simultaneously normalized common period.
When such a common period $J$ is available, we obtain
explicit formulas for $y^J$, $B^J$, and the ancestor forms.

The projective-space examples make these constructions explicit.
For every $\mathbb P^n$ with $n\ge2$, at the small quantum point $Q=1$, $t=0$, we select a hypergeometric scalar period, reconstruct its vector period, and verify the required normalizations at all canonical values.
Together with the explicit bidifferential and ancestor forms, this gives a uniform family of connected spectral curve realizations for higher-dimensional projective spaces.

For $\mathbb P^2$ and $\mathbb P^3$, we further express the full spectral curve data in classical modular and
quasimodular forms.
In particular, the formulas include the bidifferential
$B^J$, providing the complete input for topological
recursion on the specified open domains.
For $\mathbb P^2$, these results build on the modular
period-map description of
\cite{milanovPeriodMapQuantum2019}.



A first direction is to extend this construction to generalized Frobenius manifolds \cite{liuGeneralizedFrobeniusManifolds2025}, in particular to structures without an Euler vector field.
In the present approach, the Euler field determines multiplication by $E$ and the grading operator, and supplies the homogeneity condition used to fix the $R$-matrix and the period equations. 
Without this field, these ingredients require a replacement. 

For a prescribed $R$-calibrated semisimple theory, the local correspondence with topological recursion remains a starting point. 
The problem is to find suitable period data and a connected spectral curve realization compatible with that calibration. 
If the generalized structure also has a non-flat unit, the role of the unit in the reconstruction and in the choice of $y$ must be reconsidered as well.

A closely related question concerns the dependence on the chosen point of the Frobenius manifold. 
Our construction is made at a fixed point. 
It would be desirable to organize this choice
into a family over the semisimple locus, initially over sufficiently small parameter domains.
Such a family should transport the periods according to the flat connection and reproduce the variation of the ancestor theory under the Givental formalism.

The two questions are strongly connected. In the conformal setting, homogeneity makes it possible to recover the relevant reconstruction data from information at a single point. Without an Euler field, the parameter dependence and the choice of calibration become part of the input needed to determine these data. 

The projective-space examples also raise a broader question about the
geometry of spectral curves in Gromov--Witten theory. Their modular
descriptions, particularly for $\mathbb P^2$ and $\mathbb P^3$, suggest
that compact Riemann surfaces equipped with meromorphic spectral
functions may be too restrictive to provide B-models for general
targets through ordinary Eynard--Orantin topological recursion. 
A compact algebraic description of the $x$-function need not accommodate the full data.
For example, in the $\mathbb P^2$ construction, neither
$y$ nor $B^J$ descends to the elliptic quotient described in
Subsection~\ref{subsec:Pn-quotient-geometry}.
Although this obstruction concerns the specific spectral curve data constructed here; it does not rule out a possible compact spectral curve realization of the same theory.

From this perspective, Dubrovin's superpotential is a natural candidate for the spectral curve function, with its period equations supplying the remaining data. The resulting relation between $x$ and $y$ is naturally expressed by a higher-order differential equation. For instance, the scalar period
for $\mathbb P^n$ satisfies
\begin{equation*}
    \left[
        \left(x\partial_x+\frac{n-1}{2}\right)^{n+1}
        -(n+1)^{n+1}\partial_x^{n+1}
    \right]y=0.
\end{equation*}
There is no general reason for the normalized solution to satisfy a polynomial relation in $x$ and $y$. 
These examples thus motivate a broader view of spectral curve B-models. 

Finally, the construction admits a more physical interpretation, which we will develop in subsequent work. 

\appendix

\section{The non-tame semisimple case}
\label{sec:non-tame}

Let $v$ be a semisimple point at which some canonical values
$u_i(v)$ coincide. 
We explain the modifications needed in Section~\ref{sec:spectral-construction}; throughout, the ramification points of the function $x$ remain distinct and simple.

\medskip
\noindent\textit{The $R$ matrix.}
At a coincident value, one cannot determine an off-diagonal
entry of $R_{k+1}$ by dividing
\eqref{eq:frobenius-R-recursion} by $u_i-u_j$.
Instead, use the full parameter equation
\eqref{eq:frobenius-R-parameter-equation}. Its coefficients give
\begin{equation}
    [\mathrm dU,R_m]
    =\mathrm dR_{m-1}
    +\Phi^{-1}(\nabla\Phi)R_{m-1},
    \qquad m\ge1.
    \label{eq:non-tame-R-parameter}
\end{equation}
On the dense tame locus, evaluation on $\pi_i=\partial_{u_i}$,
together with the diagonal part of
\eqref{eq:frobenius-R-recursion} at the next order, yields
\begin{equation}
\begin{aligned}
    (R_m)^i{}_j
    &=\left(
        \partial_{u_i}R_{m-1}
        +\bigl(\Phi^{-1}\nabla_{\pi_i}\Phi\bigr)R_{m-1}
      \right)^i{}_j,
      &&i\ne j,\\
    (R_m)^i{}_i
    &=\frac1m\sum_{a\ne i}\Theta^i{}_a(R_m)^a{}_i,
      &&m\ge1.
\end{aligned}
\label{eq:non-tame-R-extension}
\end{equation}
Starting from $R_0=\operatorname{Id}$, the first line extends all
off-diagonal entries holomorphically, and the second then extends
the diagonal entries. Induction therefore gives a unique
holomorphic extension of the $R$ matrix from the
tame locus. All its equations, including homogeneity and
$R(z)R(-z)^T=\operatorname{Id}$, extend coefficientwise by the
identity theorem. This concerns holomorphic dependence on the
Frobenius parameters, not convergence of $R$ in $z$.
In particular, the degree-zero commutator relation gives
\begin{equation}
    \Theta^i{}_j=(u_i-u_j)(R_1)^i{}_j,
    \qquad
    \Theta^i{}_j=0\quad\text{if }u_i=u_j.
    \label{eq:non-tame-grading-block}
\end{equation}

\medskip
\noindent\textit{A preliminary covering.}
We now evaluate the data at $v$.
The planar construction of Section~\ref{subsec:covering-surface}
cannot start from disjoint discs centred at coincident $u_i$.
Choose instead pairwise distinct auxiliary points
$w_1,\ldots,w_N\in\mathbb C$. Hermite interpolation gives a
polynomial $P$ of degree at most $2N-1$ such that
\begin{equation}
    P(w_i)=u_i,
    \qquad P'(w_i)=1,
    \qquad i=1,\ldots,N.
    \label{eq:non-tame-Hermite}
\end{equation}
Choose small discs about the $w_i$ and join them by narrow
corridors along a tree avoiding the finite zero set of $P'$.
Apply the square-root attachment of
Lemma~\ref{lem:covering-surface} in the $w$-plane, and set
\begin{equation}
    w=w_i+\frac{r_i^2}{2},
    \qquad x=P(w),
    \qquad
    s_i(r_i)=\sqrt{2\bigl(P(w_i+r_i^2/2)-u_i\bigr)}
             =r_i+O(r_i^3).
    \label{eq:non-tame-cover}
\end{equation}
The square root is chosen with $s_i'(0)=1$.
Since $P'$ does not vanish on the retained $w$-domain,
$dx=P'(w)\,dw$ has precisely the $N$ prescribed simple zeros.
Their values under $x$ may coincide, but their coordinate discs
are disjoint on the surface. 
The surface is connected, simply connected and biholomorphic to the upper half-plane. 

\medskip
\noindent\textit{Local periods.}
At a common value $u_i$, the residue of the second structure
equation in the normalized frame has no eigenvalues other than
$-1/2$ and $0$:
\begin{equation*}
    \partial_x\Phi^{-1} J
    =(x\operatorname{Id}-U)^{-1}\left(\Theta-\frac12\operatorname{Id}\right) \Phi^{-1} J.
\end{equation*}
Indeed, assume that $u_1=\cdots=u_r=u$ and $u_k\ne u$ for $k>r$. Then
\begin{equation*}
    (x\operatorname{Id}-U)^{-1}\left(\Theta-\frac12\operatorname{Id}\right)=\frac{1}{x-u}
    \begin{pmatrix}
    -\frac12I_r&C\\
    0&0
    \end{pmatrix} + D
\end{equation*}
for some matrices $C$ and $D$ that are holomorphic near $x=u$.
By \eqref{eq:non-tame-grading-block}, its $-1/2$-eigenspace is
exactly the span of the coordinate vectors labelled by that
value. Regular-singular theory therefore still gives a unique
odd solution $J_i=f_i/s_i+O(s_i)$ for each specified leading
vector $f_i$.   
Proposition~\ref{prop:normalized-periods} and its
coefficient comparison apply without change. 

One can also check that Proposition~\ref{prop:canonical-local-kernel}, including the
even--even coefficient identity
\eqref{eq:construction-local-edge-identity}, remains valid.

\medskip
\noindent\textit{Globalization and topological recursion.}
Carry out the Cauchy constructions of
Section~\ref{subsec:global-spectral-data} on disjoint slits in
the $w$-plane. Use the scalar germs $Y_i(s_i(r_i))$ and pull back
the complete local bidifferentials to the $r_i$-coordinates,
including their Jacobian factors, before forming the discrepancy
with $B_0$. 
Take smaller full branch discs and connecting corridors so that the outer integration endpoints lie outside the final surface. 
Because the changes $s_i(r_i)$ are odd,
the anti-invariant projections commute with these pullbacks.
The resulting $y$ has the prescribed odd parts in the actual
$s_i$-coordinates, and $B$ has the prescribed even--even regular
coefficients and no pole away from its normalized diagonal pole.

Finally, the recursion takes residues at the distinct points
$a_i$. 
The conclusion of Theorem~\ref{thm:connected-spectral-realization} therefore holds
also at $v$, for $2g-2+n>0$, with the external forms
defined from the actual global $B$ by $W=R\Xi$ as in the main text.

\end{document}